\documentclass[sigconf,balance=false]{acmart}
\usepackage{popets}

\setcopyright{popets}
\copyrightyear{YYYY}

\acmYear{YYYY}
\acmVolume{YYYY}
\acmNumber{X}
\acmDOI{XXXXXXX.XXXXXXX}
\acmISBN{}
\acmConference{Proceedings on Privacy Enhancing Technologies}
\settopmatter{printacmref=false,printccs=false,printfolios=true}

\usepackage{multirow}
\usepackage{xcolor}
\usepackage{csquotes}

\usepackage{amsmath}

\usepackage{amssymb}
\usepackage{array}
\usepackage{graphicx}
\usepackage{svg}

\usepackage[ruled,vlined,linesnumbered]{algorithm2e}

\usepackage{subcaption}
\usepackage{amsthm}
\usepackage[symbol]{footmisc}

\usepackage{tikz}
\usepackage{enumitem}
\newcommand{\RNum}[1]{\uppercase\expandafter{\romannumeral #1\relax}}

\newcommand{\MSE}{\operatorname{MSE}}

\newtheoremstyle{customtheorem}%
  {3pt}%
  {3pt}%
  {\itshape}%
  {}%
  {\bfseries}%
  {}%
  { }%
  {\thmname{#1}\thmnumber{ #2.}\thmnote{ (#3)}}%

\theoremstyle{customtheorem}
\newtheorem{theorem}{Theorem}
\newtheorem{proposition}{\textbf{Proposition}}

\begin{document}

\title{Frequency Estimation of Correlated Multi-attribute Data under Local Differential Privacy}

\author{Shafizur Rahman Seeam}
\orcid{0000-0003-3350-0047}
\affiliation{%
  \institution{Rochester Institute of Technology }
  \city{Rochester}
  \state{NY}
  \country{USA}}
\email{ss6365@rit.edu}

\author{Ye Zheng}
\orcid{0000-0003-0623-9613}
\affiliation{%
  \institution{Rochester Institute of Technology }
  \city{Rochester}
  \state{NY}
  \country{USA}}
\email{ye.zheng@mail.rit.edu}

\author{Yidan Hu}
\orcid{0000-0002-9443-8411}
\affiliation{%
  \institution{Rochester Institute of Technology }
  \city{Rochester}
  \state{NY}
  \country{USA}}
\email{yidan.hu@rit.edu}

\begin{abstract} 
Large-scale data collection, from national censuses to IoT-enabled smart homes, routinely gathers dozens of attributes per individual. These multi-attribute datasets are crucial for analytics but pose significant privacy risks. Local Differential Privacy (LDP) is a powerful tool for protecting user privacy by allowing users to locally perturb their records before releasing them to an untrusted data aggregator. However, existing LDP mechanisms either split the privacy budget across all attributes or treat each attribute independently, thereby ignoring natural inter-attribute correlations. This leads to excessive noise and, consequently, significant utility loss, particularly in high-dimensional datasets.

We introduce a two-phase LDP framework that overcomes these limitations by privately learning and exploiting inter-attribute dependencies. In Phase~I, a small subset of users applies a standard per-attribute LDP mechanism, enabling the aggregator to derive dependency information from the privatized data. In Phase~II, each remaining user perturbs a single randomly chosen attribute with the full privacy budget, while the unreported attributes are reconstructed using Phase~I statistics, incurring no additional privacy cost. As a concrete instantiation, we develop Correlated Randomized Response (Corr-RR), which employs correlation-aware probabilistic mappings to substantially improve estimation accuracy. We prove that Corr-RR satisfies $\epsilon$-LDP, and demonstrate through extensive experiments on synthetic and real-world datasets that it consistently outperforms state-of-the-art baselines, with the largest gains in high-dimensional and strongly correlated datasets.
\end{abstract}

\keywords{Local Differential Privacy,  Generalized Randomized Response, Correlation, Frequency Estimation.}
\maketitle

\section{Introduction}

Large-scale multi-attribute datasets, where each record contains multiple features such as age, education, and income, are crucial for evidence-based policymaking, healthcare analytics, and data-driven decision-making. A core primitive in analyzing such data is \emph{frequency estimation}, which quantifies how often attribute values occur and serves as a building block for tasks such as heavy-hitter detection~\cite{bun2019heavy}, key–value aggregation~\cite{ye2019privkv}, and time-sensitive analytics~\cite{wang2020towards}. However, these tasks typically rely on raw user data that may contain sensitive personal information. In multi-attribute settings, combinations of attributes (e.g., age + ZIP code + gender) can uniquely identify individuals, heightening privacy risks. Such risks are particularly acute in regulated domains such as healthcare, finance, and location-based services~\cite{seeam2025privar}, where frameworks such as GDPR~\cite{GDPR} and HIPAA~\cite{HIPAA} mandate strict privacy requirements.

Local Differential Privacy (LDP) offers strong user-level protection without relying on a trusted curator~\cite{kasiviswanathan2011can, duchi2013local}, as each user perturbs their data locally before transmission. The parameter $\epsilon$ controls the utility-privacy trade-off, where smaller $\epsilon$ enforces stronger privacy, but adds more noise, reducing utility. LDP has been adopted in practice by Google~\cite{erlingsson2014rappor}, Apple~\cite{team2017learning}, and Microsoft~\cite{ding2017collecting}, underscoring its real-world viability.

Frequency estimation under LDP is straightforward for single attributes, but extending it to multi-attribute data remains challenging. Existing solutions that require complete reports of all $d$ attributes\footnote{We exclude Random Sampling, which reports only one attribute per user with the full privacy budget $\epsilon$.} fall into two main categories. The first, Split Budget (SPL), evenly divides the privacy budget $\epsilon$ across all attributes, assigning only $\epsilon/d$ to each attribute. As $d$ grows, the per-attribute budget decreases rapidly, leading to high noise and poor accuracy~\cite{wang2019collecting, arcolezi2024improving}. The second allocates the full privacy budget to a single attribute and imputes the remaining $d-1$ attributes with synthetic values. One such method, Random Sampling plus Fake Data (RS+FD), perturbs one attribute with $\epsilon$ and fills the others uniformly at random~\cite{arcolezi2021random, arcolezi2024improving}, which introduces systematic bias because the imputed values do not reflect realistic distributions. A refinement, Random Sampling plus Realistic Fake Data (RS+RFD), samples unreported attributes from prior distributions~\cite{arcolezi2022risks}. However, its utility is highly sensitive to the quality of priors, which may be biased, outdated, or unavailable due to privacy constraints. In both cases, attributes are treated as independent, overlooking the natural correlations present in real-world data, an omission that leads to unnecessary noise and loss of utility. Consequently, there remains a pressing need for advanced LDP mechanisms that strike a better balance between accuracy and privacy in multi-attribute frequency estimation.

In this work, we exploit the \emph{inter-attribute correlations} naturally present in real-world datasets for enhanced utility-privacy trade-off, a direction that remains underexplored in the context of LDP. Many attributes are strongly dependent; for example, educational attainment, employment status, and income level are closely linked. Explicitly leveraging such correlations can reduce noise and improve accuracy under LDP without weakening privacy guarantees. Intuitively, consider two attributes $X_1$ and $X_2$ that are perfectly correlated ($X_1 = X_2$). 
Instead of perturbing both independently, a user could perturb one attribute via an LDP mechanism with the full privacy budget: $Y_1 \leftarrow \mathcal{M}_{\epsilon}(X_1)$, and report the same perturbed value for $X_2$, i.e., $Y_2=Y_1$. By doing so, users can generate reports for unselected attributes with improved utility without incurring any additional privacy cost. In practice, however, two challenges arise: (i) real-world attribute correlations are rarely perfect, varying in both strength and form, and (ii) inter-attribute correlations are often unknown, and strict privacy regulations (e.g., GDPR, HIPAA) prohibit direct access to raw data for correlation estimation. These challenges motivate our central question:

\begin{quote}
\emph{How can we leverage inherent yet unknown \underline{inter-attribute} \underline{correlations} to improve the utility of multi-attribute frequency estimation under LDP without compromising privacy guarantees?}
\end{quote}

We address this question by proposing a \textbf{two-phase LDP framework} consisting of: (i) \textit{Phase I: Dependency Learning}, which estimates inter-attribute correlations directly from privatized data under LDP, without requiring access to raw user data; and (ii) \textit{Phase II: Dependency-Aware Reporting}, which leverages the learned correlations to improve the accuracy of multi-attribute frequency estimation while maintaining $\epsilon$-LDP. Specifically, in Phase~I, a small subset of users applies SPL, perturbing all $d$ attributes independently with budget $\epsilon/d$. The data collector then aggregates these noisy reports to privately infer inter-attribute correlations, leveraging the statistical relationship between correlations in the original data and those in the perturbed values—without requiring access to raw data.
In Phase~II, each remaining user randomly selects one pivot attribute and perturbs it using an LDP mechanism with the full budget $\epsilon$. The remaining $d-1$ attributes are then reconstructed \emph{indirectly}, using the learned inter-attribute correlations from Phase~I for an enhanced utility-privacy trade-off. Because reconstruction is purely post-processing on already-perturbed values, the overall mechanism continues to satisfy $\epsilon$-LDP.

As a concrete instantiation of our framework, we propose \textbf{Correlated Randomized Response (Corr-RR)}. Corr-RR randomly selects one attribute and perturbs it with the full privacy budget. For each unselected attribute, instead of deterministically reusing the pivot's report (i.e., $Y_2 = Y_1$ under perfect correlation), it introduces a parameter $p_y \in [0,1]$ that specifies the probability of reusing the pivot's perturbed value. Intuitively, a larger $p_y$ is favored when attributes are strongly correlated, while a smaller $p_y$ corresponds to weaker dependence. To determine these parameters, Corr-RR leverages marginal estimates obtained in Phase~I and chooses $p_y$ values that minimize a closed-form approximation of the mean squared error (MSE) of the frequency estimators. Below, we summarize our contributions:

\begin{itemize}
\item \textbf{Two-phase LDP framework.} We introduce the first framework that separates \emph{dependency learning} from \emph{dependency-aware reporting}, deriving correlation-aware parameters from privatized data to improve multi-attribute frequency estimation.
\item \textbf{Concrete mechanism: Corr-RR.} We propose \emph{Corr-RR}, a novel instantiation that perturbs one randomly chosen attribute with the full privacy budget and synthesizes the remaining attributes using correlation-guided probabilistic mappings.
\item \textbf{Rigorous privacy guarantee.} We formally prove that Corr-RR satisfies $\epsilon$-LDP.
\item \textbf{Extensive evaluation.} Experiments on both synthetic and real-world datasets demonstrate that Corr-RR consistently outperforms state-of-the-art baselines, with the largest accuracy gains in high-dimensional and strongly correlated settings.
\end{itemize}

\textbf{Roadmap.} The remainder of this paper is organized as follows: Section~\ref{background} introduces the relevant background and the problem statement. Section~\ref{sec:solutiona} details our proposed solution. Section~\ref{sec:evaluation} presents experimental evaluations. Section~\ref{sec:related} discusses the existing literature, and we conclude the work in Section~\ref{sec:conclusion}.

\section{Preliminaries}\label{background}

\subsection{Local Differential Privacy (LDP)} 
Local Differential Privacy (LDP) requires that each user perturb their data locally before transmission, thereby ensuring robust privacy protection in distributed data collection settings~\cite{kasiviswanathan2011can}. 

\begin{definition}[$\epsilon$-LDP]
A randomized mechanism $\mathcal{M}:\mathcal{X}\rightarrow\mathcal{Y}$ satisfies $\epsilon$-LDP, where $\epsilon\geq 0$, if for any two inputs $x,x'\in\mathcal{X}$ and any output $y\in\mathcal{Y}$, the following holds:
\[
\Pr[\mathcal{M}(x)=y] \;\leq\; e^{\epsilon}\cdot \Pr[\mathcal{M}(x')=y].
\]
\end{definition}

The parameter $\epsilon$, called the \emph{privacy budget}, controls the trade-off between privacy and utility: smaller $\epsilon$ provides stronger privacy but introduces more noise, while larger $\epsilon$ yields better accuracy but weaker privacy. Prior work on multi-attribute data typically explores $\epsilon\in[0.1,10]$~\cite{arcolezi2021random, wang2021local}.

LDP inherits several key properties from Differential Privacy (DP), including composability~\cite{kairouz2015composition, mcsherry2009privacy} and post-processing immunity~\cite{TCS-042}. We restate them here:

\begin{theorem}[Sequential Composition]~\cite{zhang2023trajectory}\label{compositiontheorem}
If mechanisms $\{\mathcal{M}_i\}_{i=1}^n$ each satisfy $\epsilon_i$-LDP, then their sequential application satisfies $\epsilon$-LDP with $\epsilon=\sum_{i=1}^n \epsilon_i$.
\end{theorem}

\begin{theorem}[Parallel Composition]~\cite{mcsherry2009privacy}\label{Paralleltheorem}
If mechanisms $\{\mathcal{M}_i:\mathcal{X}_i\rightarrow\mathcal{Y}_i\}$ each satisfy $\epsilon_i$-LDP on disjoint input subsets $\{\mathcal{X}_i\}$, then the combined mechanism $\mathcal{M}=(\mathcal{M}_1,\ldots,\mathcal{M}_n)$ satisfies $\max_i \epsilon_i$-LDP.
\end{theorem}

\begin{theorem}[Post-Processing]~\cite{zhang2023trajectory, li2024privacy}\label{Post-processing}
If $\mathcal{M}:\mathcal{X}\rightarrow\mathcal{Y}$ satisfies $\epsilon$-LDP and $\mathcal{F}:\mathcal{Y}\rightarrow\mathcal{Y}'$ is any randomized mapping, then the composed mechanism $\mathcal{F}\circ\mathcal{M}:\mathcal{X}\rightarrow\mathcal{Y}'$ also satisfies $\epsilon$-LDP.
\end{theorem}

\subsection{Generalized Randomized Response (GRR)}\label{grr}
Generalized Randomized Response (GRR) extends the classic Randomized Response (RR) technique~\cite{warner1965randomized} to categorical domains of size $k=|\mathcal{D}|\geq 2$, while satisfying $\epsilon$-LDP~\cite{erlingsson2014rappor}.  
In the single-attribute setting, each user $u_i$ holds a private value $x_i \in \mathcal{D}$.  
Given $x_i$, the reported value is drawn as follows. For any $y\in\mathcal{D}$,
\[
\Pr[\mathcal{M}_{\mathrm{GRR}(\epsilon,k)}(x_i)=y]=
\begin{cases}
p=\dfrac{e^{\epsilon}}{e^{\epsilon}+k-1}, & \text{if } y=x_i,\\[4pt]
q=\dfrac{1}{e^{\epsilon}+k-1}, & \text{if } y\neq x_i.
\end{cases}
\]
The privacy guarantee follows from $p/q = e^{\epsilon}$.  

Let $n$ be the number of users and $c_v$ the count of reports equal to $v$, where $v \in \mathcal{D}$. The unbiased estimator of the true frequency
\(
f(v) = \frac{1}{n}\sum_{i=1}^n \mathbb{I}(x_i = v)
\)
is:
\[
\hat{f}(v) = \frac{c_v/n - q}{p-q}.
\]

The estimation variance grows linearly with the domain size $k$, leading to degraded accuracy in high-cardinality domains~\cite{wang2017locally}:
\[
\mathrm{Var}[\hat{f}(v)] = \frac{e^{\epsilon}+k-2}{n(e^{\epsilon}-1)^2}.
\]

\subsection{Problem Statement}
We now extend to the multi-attribute setting.  
Each user $u_i$ holds a private record
\(
\mathbf{x}_i = (x_{i,1},x_{i,2},\ldots,x_{i,d}),
\)
where each $x_{i,j}$ denotes the value of attribute $X_j$ for user $i$, drawn from a \emph{common domain} $\mathcal{D}$ of size $k=|\mathcal{D}|$.  
To protect privacy, each user applies a local randomization mechanism $\mathcal{M}$ to their entire record, producing
\(
\mathbf{y}_i= \mathcal{M}(\mathbf{x}_i)=  (y_{i,1},y_{i,2},\ldots,y_{i,d}), 
\; y_{i,j}\in\mathcal{D}.
\)
Users must report their full perturbed record rather than selectively reporting a subset of attributes, since selective disclosure significantly increases re-identification risk~\cite{arcolezi2021random}.  
Given the collection of reports $\{\mathbf{y}_1,\ldots,\mathbf{y}_n\}$, the data collector aims to estimate the marginal distribution of each attribute.  
For attribute $X_j$ and value $v\in\mathcal{D}$, the true marginal frequency is:
\[
f_j(v) = \frac{1}{n}\sum_{i=1}^n \mathbb{I}(x_{i,j}=v),
\]
where $\mathbb{I}(\cdot)$ is the indicator function.  
The goal is to design $\mathcal{M}$ such that the estimated marginals $\hat{f}_j(v)$ are as accurate as possible while ensuring $\epsilon$-LDP.

\subsection{Existing Solutions and Their Limitations}\label{sec:existingmethod}

We now review existing LDP mechanisms for multi-attribute frequency estimation and outline their limitations.  

\subsubsection{Split Budget (SPL)} 
In SPL, each user perturbs all $d$ attributes independently using GRR with per-attribute budget $\epsilon/d$ and reports the full vector $\mathbf{y}_i$. The aggregator then debiases each attribute separately to obtain marginal estimates. Because the privacy budget is divided across all attributes, each is perturbed with very low $\epsilon/d$, introducing substantial noise. The estimation error grows as $O(d\sqrt{\log d}/(\epsilon\sqrt{n}))$~\cite{wang2019collecting, arcolezi2024improving}, which quickly becomes prohibitive in high-dimensional settings.  

\subsubsection{Random Sampling with Fake Data (RS+FD)~\cite{arcolezi2021random}}  
In RS+FD, each user selects one attribute uniformly at random, perturbs it with the full budget $\epsilon$, and fills the remaining $d-1$ attributes with uniformly generated fake values. This reduces noise for the sampled attribute, but the majority of reported attributes carry no real information. As $d$ increases, these fake values dominate the dataset, leading to significant accuracy degradation.  

\subsubsection{Random Sampling with Realistic Fake Data (RS+RFD)~\cite{arcolezi2022risks}}  
This improves upon RS+FD by sampling non-selected attributes from external prior distributions (e.g., Census statistics) rather than uniformly. While this yields more realistic reports and partially alleviates RS+FD's bias, its accuracy is highly sensitive to the quality of priors. In practice, priors may be outdated, biased, or unavailable for sensitive attributes due to privacy regulations such as GDPR~\cite{GDPR} or HIPAA~\cite{HIPAA}, limiting the robustness of this approach.

\noindent\textbf{Discussion.}  
Although SPL, RS+FD, and RS+RFD adopt different strategies, they share a common limitation: all three treat attributes as independent and fail to exploit the correlations naturally present in real-world data. As a result, they either suffer from excessive noise due to budget splitting (SPL) or introduce bias and reliance on external priors (RS+FD/RS+RFD). This gap motivates the need for correlation-aware mechanisms that can improve accuracy without weakening privacy guarantees.

\section{Two-Phase Privacy Framework}\label{sec:solutiona}

We now present a general two-phase privacy framework for multi-attribute data collection, followed by its concrete instantiation, Correlated Randomized Response (Corr-RR).

\begin{figure*}[ht!]
  \centering
\includegraphics[width=0.98\textwidth]{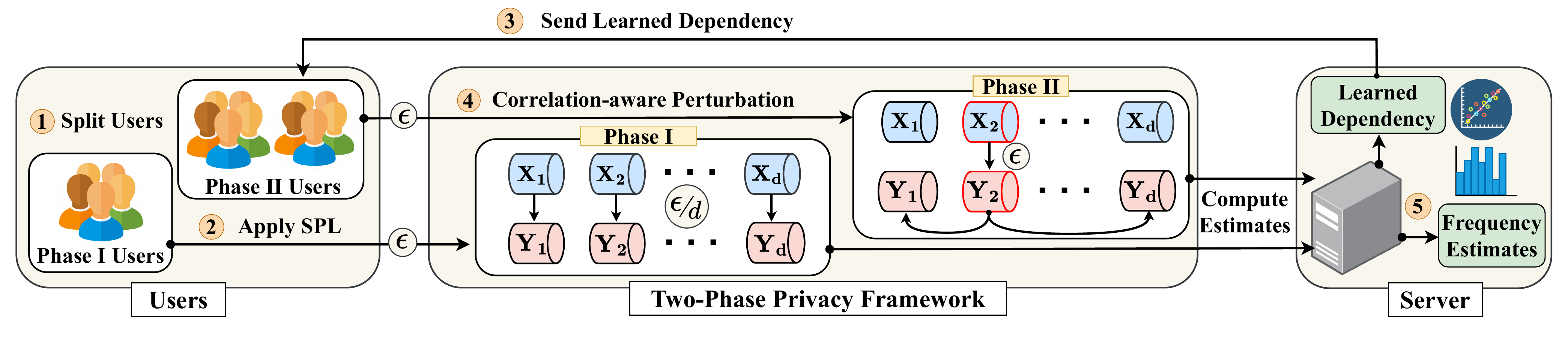}
   \caption{Two-phase privacy framework. Phase~I users apply SPL to enable private estimation of inter-attribute dependencies. Phase~II users perturb one attribute with the full privacy budget and reconstruct the rest using these dependencies.}
   \Description{Block diagram showing the two-phase LDP framework. Phase I users apply split budget perturbation to learn inter-attribute dependencies, which are sent to the server. Phase II users perturb one pivot attribute with full privacy budget and reconstruct remaining attributes using the learned dependencies.}

  \label{fig:framework}
\end{figure*}

\subsection{Overview}
The proposed framework is built on two key observations. First, inspired by RS+FD and RS+RFD, we observed that perturbing a single \emph{pivot} attribute with the full privacy budget $\epsilon$ and generating the remaining attributes through some randomization by indirect perturbation of this privatized value can significantly improve the utility-privacy trade-off compared to conventional budget-splitting schemes like SPL. The key idea can be formulated as
1) assigning the entire budget to a randomly selected attribute,
\(
Y_1 \sim \mathcal{M}_{\epsilon}(X_1),
\)
and 2) generating the remaining attribute reports as
\(
Y_2 \sim \mathcal{G}(Y_1;\theta),
\)
where $\mathcal{G}$ is a randomized mapping defined only on the already privatized pivot. By the post-processing property, the overall mechanism satisfies $\epsilon$-LDP. However, designing an effective mapping $\mathcal{G}$ is non-trivial. Naive approaches, as in RS+FD and RS+RFD, often introduce excessive noise, degrading utility.

We also observe that multi-attribute data often exhibit strong inter-attribute dependencies. For example, educational attainment, employment status, and income level are tightly correlated. These dependencies can be exploited to guide the design of $\mathcal{G}$, enabling more accurate reporting through post-processing of the privatized pivot. Consider two attributes with perfect correlation (e.g., $X_1=X_2$). Once $X_1$ is privatized to $Y_1$, we can set $Y_2=Y_1$ without incurring additional privacy cost while enhancing data utility. For multi-attribute data with partial correlation, there are several ways to incorporate correlation into the perturbation process to generate higher-utility reports. For instance, if $X_1$ and $X_2$ are strongly but not perfectly correlated, and $X_1$ is selected as the pivot that generates $Y_1$ through direct perturbation, then the indirect report $Y_2$ could be equal to $Y_1$ with high probability and differ otherwise. A key challenge is that true correlations cannot be computed from raw user data, since strict privacy regulations (e.g., GDPR, HIPAA) prohibit direct access to sensitive attributes. This motivates our \textbf{two-phase privacy framework}, which first learns coarse dependency parameters privately and then leverages them for dependency-aware reporting. The two-phase framework (Figure~\ref{fig:framework}) operates as follows:

\begin{itemize}
\item \textbf{Phase I: Dependency Learning.} A small subset of $n_1$ users perturbs all $d$ attributes using SPL with per-attribute budget $\epsilon/d$. From these privatized reports the server can recover unbiased \emph{marginal frequencies} and, if needed, approximate \emph{joint statistics} across attributes. These privatized statistics retain enough signal to capture coarse inter-attribute dependencies. The server then encodes these dependencies into \emph{learned parameters} (e.g., probabilities for reuse, conditional distributions, or other mapping rules) that guide how unreported attributes should be reconstructed in Phase~II.

\item \textbf{Phase II: Dependency-Aware Reporting.} Each of the remaining users randomly selects one pivot attribute and perturbs it with the full privacy budget $\epsilon$. The remaining $d-1$ attributes are generated through a randomized transformation of the pivot's privatized value, guided by the dependency parameters learned in Phase~I. Since these reconstructions depend only on already privatized outputs, they constitute pure post-processing and do not increase the privacy cost, ensuring that the overall mechanism remains $\epsilon$-LDP.

\end{itemize}

This design provides a general template for correlation-aware multi-attribute frequency estimation under LDP. In the following subsection, we instantiate it with \emph{Correlated Randomized Response}.

\begin{table}[t]
\centering
\caption{Notation for Corr-RR Mechanism}
\label{tab:notation}
\begin{tabular}{lp{0.65\linewidth}}
\toprule
\textbf{Symbol} & \textbf{Meaning} \\
\midrule
$n$         & Total number of users \\
$n_1$       & Number of Phase~I users \\
$n_2$       & Number of Phase~II users ($n_2=n-n_1$) \\
$d$         & Number of attributes per user \\
$\mathbf{x}_i$ & Raw attribute vector of user $i$ \\
$\mathbf{y}_i$ & Perturbed attribute vector of user $i$ \\
$X_j$       & R.V. for the $j$-th raw attribute \\
$Y_j$       & R.V. for the $j$-th perturbed attribute\\
$x_{i,j}$   & Raw value of attribute $j$ for user $i$ \\
$y_{i,j}$   & Perturbed value of attribute $j$ for user $i$ \\
$k=|\mathcal{D}|$ & Size of the common attribute domain \\
$\epsilon$  & Privacy budget \\
$p_1,q_1$   & GRR probabilities in Phase~I \\
$p_{2},q_{2}$ & GRR probabilities in Phase~II \\
$\hat f_j^{I}(v)$ & Frequency est.\ for $X_j=v$ from Phase~I \\
$\hat f_j^{II}(v)$ & Frequency est.\ for $X_j=v$ from Phase~II \\
$\hat f_j(v)$ & Final combined estimate for $X_j=v$ \\
$p_y$       & Probability parameter for $(X_j,X_s)$ \\
\bottomrule
\end{tabular}
\end{table}

\subsection{Correlated Randomized Response (Corr-RR)}

Under the two-phase framework, we now present a concrete instantiation, \textbf{Corr-RR}, for privacy-preserving multi-attribute data collection. Corr-RR introduces probability parameters $p_{y} \in [0,1]$ derived from privatized marginals via an optimization procedure, to guide the probabilistic mapping used in Phase~II for reconstructing non-pivot attributes based on the perturbed pivot. For readability, we denote these parameters generically as $p_y$.  Consider two attributes $X_1$ and $X_2$ that are perfectly positively correlated. In this case, producing the same output for both attributes (i.e., $Y_1=Y_2$) corresponds to copying the perturbed value $Y_1$ as the report for $Y_2$ with probability $p_y=1$, and reporting a different value with probability $1-p_y$. More generally,
\[
\Pr(Y_2 = v \mid Y_1 = v) = p_y,
\qquad
\Pr(Y_2 \neq v \mid Y_1 = v) = 1 - p_y.
\]
Thus $p_y$ controls the likelihood that the non-pivot attribute aligns with the pivot's privatized value, with its value chosen to minimize the expected mean squared error (MSE). 
In Corr-RR, the MSE is expressed as a function of marginal distributions. Those distributions could serve as proxies for inter-attribute correlations: if two attributes are strongly positively correlated, their marginals tend to be similar, leading the optimizer to favor larger $p_y$; conversely, weaker similarity in marginals yields smaller $p_y$. Although marginals do not capture exact correlations, they preserve enough statistical signal for effective parameter estimation under LDP. Accordingly, Corr-RR first estimates marginal distributions for each attribute in Phase~I and uses them to compute $p_y$. In Phase~II, Corr-RR randomly selects one attribute for perturbation and generates reports for the remaining attributes by reusing the perturbed value with probability $p_y$. We detail the design as follows.

\subsubsection{Detailed Design.} 
Let $d\ge 2$ denote the number of attributes, and $\mathbf{x}_i=(x_{i,1},\ldots,x_{i,d})$ the raw attribute vector of user $i$, where each $x_{i,j}$ is drawn from a common domain $\mathcal{D}$ of size $k=|\mathcal{D}|$. We denote by $n_1$ and $n_2=n-n_1$ the number of users participating in Phase~I and Phase~II, respectively. Corr-RR operates in the following two phases.

\paragraph{\textbf{Phase~I: Dependency Learning.}}
A subset of $n_1 \ll n$ users applies SPL: each user perturbs all $d$ attributes using GRR with per-attribute budget $\epsilon/d$. For attribute $j\in[d]$, the sanitized report $y_{i,j}$ is generated as:
\[
\Pr(y_{i,j}=v' \mid x_{i,j}=v)=
\begin{cases}
p_1=\dfrac{e^{\epsilon/d}}{e^{\epsilon/d}+k-1}, & v'=v,\\[6pt]
q_1=\dfrac{1}{e^{\epsilon/d}+k-1}, & v'\neq v,
\end{cases}
\]
where $v,v'\in\mathcal{D}$ and $p_1+(k-1)q_1=1$. Aggregating the $n_1$ reports yields unbiased marginal estimates:
\[
\hat f_j^{I}(v)=\frac{I_j^{I}(v)-n_1 q_1}{n_1(p_1-q_1)}, 
\quad 
I_j^{I}(v)=\sum_{i=1}^{n_1}\mathbb{I}(y_{i,j}=v).
\]

These marginals preserve statistical signals of inter-attribute dependencies, though marginals alone do not fully capture correlation. Next, we use these marginals to infer the pairwise parameter $p_{j\leftrightarrow s}\in[0,1]$ for each attribute pair $(j,s)$, according to the closed-form expression derived from the MSE minimization procedure in Section~\ref{py_determination}. For brevity, we denote this parameter as $p_y$.

\paragraph{\textbf{Phase~II: Correlation-Aware Perturbation.}} 
Each of the $n_2$ remaining users, $u_i$, selects one pivot attribute uniformly at random and perturbs it with full budget $\epsilon$:
\[
\Pr(y_{i,s}=v' \mid x_{i,s}=v)=
\begin{cases}
p_2=\dfrac{e^{\epsilon}}{e^{\epsilon}+k-1}, & v'=v,\\[6pt]
q_2=\dfrac{1}{e^{\epsilon}+k-1}, & v'\neq v,
\end{cases}
\]
where $p_2+(k-1)q_2=1$.

For every non-pivot attribute $X_j$ ($j\neq s$), the user reports:
\[
\Pr(y_{i,j}=v' \mid y_{i,s}=v)=
\begin{cases}
p_y, & v'=v,\\
\frac{1-p_y}{k-1}, & v'\neq v.
\end{cases}
\]

That is, with probability 
$p_y$, the mechanism copies the pivot's privatized value; otherwise, sample uniformly from the remaining domain values. This operates only on privatized values and thus constitutes post-processing, incurring no additional privacy loss.

For each attribute $j\in[d]$, the server forms estimates:
\[
\hat f_j^{II}(v)=\frac{I_j^{II}(v)-n_2 q_2}{n_2(p_2-q_2)},
\quad 
I_j^{II}(v)=\sum_{i=n_1+1}^{n}\mathbb{I}(y_{i,j}=v).
\]

The final estimator combines both phases:
\[
\hat f_j(v)=\frac{n_1 \hat f_j^{I}(v)+n_2 \hat f_j^{II}(v)}{n}.
\]
Since $n_1 \ll n_2$, the combined estimate is dominated by Phase~II.

\textbf{Bias Note.} The Phase~I GRR estimator of Corr-RR is unbiased because each attribute is perturbed independently, as established in~\cite{erlingsson2014rappor}. In contrast, during Phase~II, Corr-RR generates non-pivot attribute values conditionally on the privatized pivot value rather than perturbing them independently. This conditional generation violates the independence assumption underlying the unbiasedness of the estimator in standard randomized response, thereby introducing a potential bias in the estimated frequencies.
Nevertheless, empirical results show that, despite this bias, Corr-RR consistently achieves lower MSE than unbiased baselines, while preserving the same $\epsilon$-LDP guarantee (see~\ref{syntheticresults}). This indicates that the accuracy gain from exploiting inter-attribute correlations outweighs the effect of the induced bias. Notably, Corr-RR is explicitly designed to prioritize minimizing overall estimation error rather than ensuring strict unbiasedness of the estimator.

\subsubsection{Determination of $p_y$}\label{py_determination} Since the Phase II estimator is biased, we use its MSE to quantify the achieved data utility and then determine the parameter, $p_y$ for each attribute pair $(X_j,X_s)$ by minimizing the MSE.  For categorical value $v\in\{0,\ldots,k-1\}$,  we first compute the MSE of $\hat f_j^{II}(v)$.

\begin{theorem}\label{thm:MSE_categorical_general}
For categorical value $v\in\mathcal{D}$, the MSE of the Phase~II estimator $\hat f_j^{II}(v)$ is

\[
\mathrm{MSE}\big[\widehat f^{II}_j(v)\big]
= \frac{\bigl(q_2+\Delta\,\mu_v\bigr)\bigl(1-q_2-\Delta\,\mu_v\bigr)}{n_2\,\Delta^2}
\;+\; \Big(\mu_v - f_j(v)\Big)^2,
\]
where $\mu_v
= \tfrac12\![f_j(v)+\frac{1-p_y}{k-1}+a\,f_s(v)]$, $a=\frac{k p_y-1}{k-1}$, and $\Delta=\dfrac{e^\epsilon-1}{e^\epsilon+k-1}$.  
$f_j(v)$ is the true frequency of targeted attribute $j$ with value $v$, and
$f_s(v)$ denotes the true frequency of $v$ for the selected attribute $s$ used to conditionally generate attribute $j$.
\end{theorem}  
\begin{proof}
   (See Appendix~\ref{appendix:thm:mse} for the full proof of Theorem~\ref{thm:MSE_categorical_general}.) 
\end{proof}

As a result, the average MSE across all possible $k$ values can be rewritten as a function of $p_y$, which is given by: 
\begin{align}
\MSE_{\mathrm{avg}}(p_y)
&=\frac{1}{k}\sum_{v=0}^{k-1}\MSE[\hat f_j^{II}(v)]
\end{align}

Finally, we determine $p_y$ by minimizing $\MSE_{\mathrm{avg}}(p_y)$. Differentiating the average MSE and setting the derivative to zero yields the following closed-form solution.

\begin{proposition}\label{thm:py_categorical_general}
Using the notation above, the unconstrained minimizer of $\MSE_{\mathrm{avg}}(p_y)$ is
\[
  p^\ast_y
  \;=\;
  -\frac{
\displaystyle \sum_{v} C_1(v)\Bigl[\,2\alpha\,C_0(v) + \beta_v\,\Bigr]
}{
2\alpha\,\displaystyle \sum_{v} C_1(v)^2
}
\]
where $\alpha=1-\frac{1}{n_2}$, $\beta_v=-2f_j(v)+\frac{1-2q_2}{n_2\Delta}$, $C_0(v)=\tfrac12\![f_j(v)+\frac{1-f_s(v)}{k-1}]$, and $C_1(v)=\tfrac12\![\frac{k f_s(v)-1}{k-1}]$.
\end{proposition}
(See Appendix~\ref{appendix:thm:py} for the full derivation of Proposition~\ref{thm:py_categorical_general}.)

\textbf{Procedure for Determining $p_y$ in Practice.}
Since the true frequencies $f_j(v)$ and $f_s(v)$ are not available in practice, we use the Phase~I estimates $\hat{f}^I_j(v)$ and $\hat{f}^I_s(v)$ to approximate them. We then compute $p^\ast_y$ using Proposition~\ref{thm:py_categorical_general} with these marginal estimates.
Note that $p_y \in [0,1]$ rather than the unconstrained domain $\mathbb{R}$. 
To account for boundary solutions, we also evaluate the MSE at $p_y = 0$ and $p_y = 1$ and compare these values with the MSE achieved at $p^\ast_y$.
The final value of $p_y$ is chosen as the one that minimizes $\MSE_{\mathrm{avg}}(p_y)$.
This procedure enables Corr\text{-}RR to adaptively select dependency-aware parameters without accessing raw attribute values. Figure~\ref{fig:py-all} illustrates how $p_y$ varies as a function of $(\hat{f}^I_j, \hat{f}^I_s)$. When the marginals are similar, the optimizer selects a larger $p_y$, increasing the likelihood that non-pivot reports follow the pivot's value. As the marginals diverge, the optimizer reduces $p_y$.

Phase~I's marginal estimates $\hat{f}^I_j(v)$ and $\hat{f}^I_s(v)$ contain noise that propagates into the computation of $p_y^*$. When Phase~I includes too few users or a small privacy budget, this noise becomes substantial, producing unstable $p_y$ values. Conversely, with a larger Phase~I subset or moderate $\epsilon$, the marginals better approximate their true distributions, yielding more stable $p_y$ values and improved utility. Empirically, allocating about 10--20\% of users to Phase~I (for moderate $\epsilon$), depending on attribute number and domain size, provides a good balance between reliable $p_y$ determination and preserving Phase~II accuracy.

\textbf{Extension to Multiple Attribute Pairs.} When users have multiple attribute pairs ($d\geq 3$), Corr-RR naturally extends its pairwise design as follows. In Phase~I, the server computes pair-specific dependency parameter $p_{j\leftrightarrow s}\in[0,1]$ for every ordered attribute pair $(j,s)$ with $j\neq s$. Each parameter is derived from privatized marginals via the MSE-minimization procedure in Section~\ref{py_determination}, so that every potential pivot and non-pivot relationship is equipped with its own optimized mapping probability. In Phase~II, each user selects a pivot attribute $X_j$ uniformly at random and perturbs it with the full budget $\epsilon$ to obtain $y_{i,j}$. The remaining attributes $\{X_s: s\neq j\}$ are then reconstructed independently from $y_{i,j}$ using their pairwise parameters relative to the chosen pivot: for each $s$, the privatized output $y_{i,s}$ is sampled by copying $y_{i,j}$ with probability $p_{j\leftrightarrow s}$ and, otherwise, sampling a value uniformly from $\mathcal{D}\setminus\{y_{i,j}\}$. For instance, with four attributes $(X_1,X_2,X_3,X_4)$, if the pivot is $X_1$, the user generates $(y_{i,2},y_{i,3},y_{i,4})$ using $(p_{1\leftrightarrow 2}, p_{1\leftrightarrow 3}, p_{1\leftrightarrow 4})$, respectively.

\begin{figure}[t]
    \centering
    \includegraphics[width=0.95\linewidth]{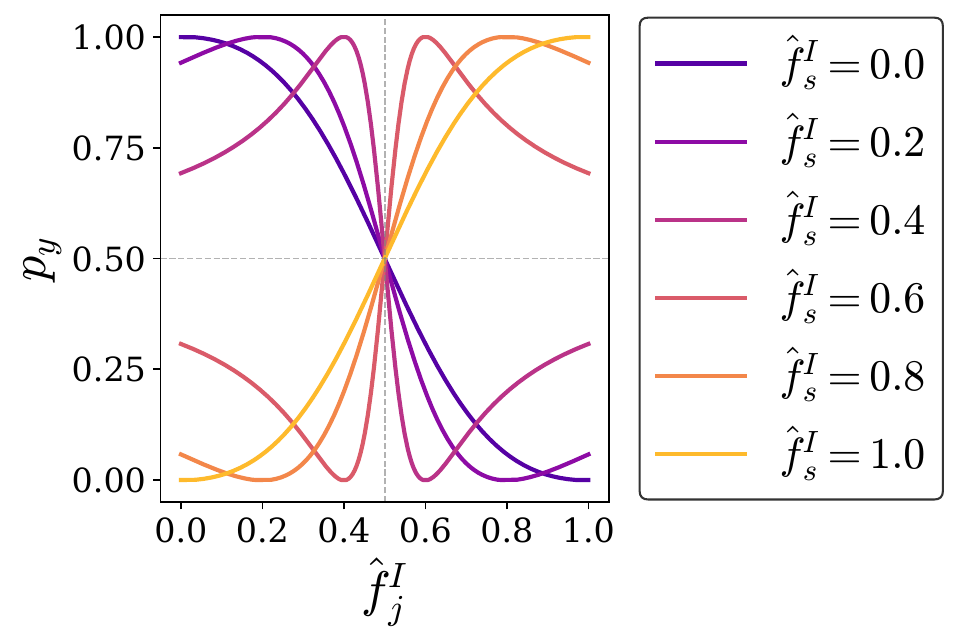}
    \caption{Optimal correlation‑aware probability \(p_y\) as a function of
             the marginal \(\hat{f}^I_j\) for representative values of
             \(\hat{f}^I_s\).}
    \Description{Line plot showing the optimal correlation-aware probability p_y as a function of the estimated marginal frequency of one attribute, with separate curves for different marginal values of a second attribute.}

    \label{fig:py-all}
\end{figure}

\subsubsection{Privacy Analysis}
\begin{theorem}
\label{thm:Corr-RR_proof_d}
\textnormal{Corr-RR} satisfies $\epsilon$-LDP.
\end{theorem}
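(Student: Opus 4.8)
The plan is to prove the claim phase by phase and then glue the two phases together with parallel composition. Since the $n_1$ Phase~I users and the $n-n_1$ Phase~II users form disjoint subsets of the population and each user perturbs only its own record, Theorem~\ref{Paralleltheorem} reduces the problem to showing that the per-user mechanism in each phase is $\epsilon$-LDP; the global guarantee is then $\max\{\epsilon,\epsilon\}=\epsilon$. So it suffices to analyze a single user in each phase in isolation.

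Phase~I is immediate: a Phase~I user releases $d$ independent GRR reports, each satisfying $(\epsilon/d)$-LDP because the GRR likelihood ratio is $p_1/q_1=e^{\epsilon/d}$. By the sequential composition bound of Theorem~\ref{compositiontheorem}, the concatenated report satisfies $\sum_{j=1}^{d}(\epsilon/d)=\epsilon$-LDP. Phase~II is the substantive case, and I would first isolate its three ingredients: (i) a uniform choice of an index $j\in[d]$; (ii) a single GRR perturbation of $x_{i,j}$ at the full budget, for which $p_2/q_2=e^{\epsilon}$; and (iii) the indirect fill-in that draws each $y_{i,k}$ $(k\neq j)$ from the $p_y$-mapping applied to the already-perturbed value $y_{i,j}$. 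The crucial observation is that step (iii) is a randomized function of $y_{i,j}$ alone — the parameter $p_y$ is fixed from Phase~I data and carries no dependence on the current user's record — so, conditioned on the selected index and on $y_{i,j}$, the remaining coordinates are produced by a data-independent post-processing map in the sense of Theorem~\ref{Post-processing}.

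The main obstacle is that the server observes the \emph{entire} vector $\mathbf{y}_i$ while the selected index $j$ stays hidden, so I cannot simply quote post-processing on a single coordinate. Instead I would write the output likelihood as a $1/d$-weighted mixture over the hidden choice,
\[
  \Pr[\mathcal{M}(\mathbf{x})=\mathbf{y}]
  \;=\;
  \frac{1}{d}\sum_{j=1}^{d}
  \Pr[\mathrm{GRR}_{\epsilon}(x_j)=y_j]\,
  g_j(\mathbf{y}_{-j}\mid y_j),
\]
where $\mathbf{y}_{-j}$ collects the coordinates $k\neq j$ and $g_j$ is the input-independent fill-in distribution. For any two records $\mathbf{x},\mathbf{x}'$ and any fixed $j$, the GRR factor obeys $\Pr[\mathrm{GRR}_{\epsilon}(x_j)=y_j]\le e^{\epsilon}\,\Pr[\mathrm{GRR}_{\epsilon}(x'_j)=y_j]$, since its value lies in $\{p_2,q_2\}$ with ratio $e^{\epsilon}$, while the factor $g_j$ is identical on both sides. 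Hence each summand is dominated term-by-term by $e^{\epsilon}$ times the corresponding summand for $\mathbf{x}'$, and because $\sum_j a_j\le e^{\epsilon}\sum_j b_j$ whenever $a_j\le e^{\epsilon}b_j$, the mixture inherits the same bound, giving $\Pr[\mathcal{M}(\mathbf{x})=\mathbf{y}]\le e^{\epsilon}\,\Pr[\mathcal{M}(\mathbf{x}')=\mathbf{y}]$. This establishes $\epsilon$-LDP for Phase~II, and parallel composition across the disjoint user sets then yields the theorem. The one point requiring genuine care is verifying that $g_j$ truly carries no dependence on the user's own raw data, so that the term-wise domination is legitimate; once that is secured, the summation argument is routine.
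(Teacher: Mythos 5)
Your proof is correct, and its skeleton is the same as the paper's: sequential composition (Theorem~\ref{compositiontheorem}) gives $\epsilon$-LDP for the Phase~I users, the indirect fill-in is recognized as input-independent randomization of the already-perturbed value (Theorem~\ref{Post-processing}), and parallel composition (Theorem~\ref{Paralleltheorem}) glues the two disjoint user groups together. Where you genuinely go beyond the paper is in the Phase~II step: the paper's proof is a one-line appeal to post-processing, which silently skips over the facts that the perturbed index $j$ is \emph{random and hidden} while the server observes the entire vector $\mathbf{y}_i$. Your explicit factorization
\[
\Pr[\mathcal{M}(\mathbf{x})=\mathbf{y}]=\frac{1}{d}\sum_{j=1}^{d}\Pr[\mathrm{GRR}_{\epsilon}(x_j)=y_j]\,g_j(\mathbf{y}_{-j}\mid y_j),
\]
together with term-by-term domination of each summand (using $p_2/q_2=e^\epsilon$ and the $\mathbf{x}$-independence of $g_j$ and of the uniform weights $1/d$), is exactly the convexity-of-mixtures argument needed to make that appeal rigorous; it is the standard fact that a mixture, with data-independent weights, of $\epsilon$-LDP mechanisms is $\epsilon$-LDP. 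You also correctly flag the one real verification obligation: that $p_y$ (more precisely each $p_{j\leftrightarrow k}$) is fixed from Phase~I reports of \emph{other} users and carries no dependence on the current user's raw record, so $g_j$ is legitimately input-independent. In short, your write-up proves the same theorem by the same route but at a strictly higher level of rigor than the paper's own proof, closing a small gap rather than opening one.
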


\begin{proof}
    The detailed proof is provided in Appendix~\ref{appendix:thm:Corr-RR_proof_d}
\end{proof}

\begin{algorithm}[t]
\caption{Correlated Randomized Response (Corr-RR)}
\label{alg:corr-rr-multi}
\DontPrintSemicolon
\SetAlgoVlined
\KwIn{$n,\epsilon,n_1,d,k$;\quad each user $u_i$ holds $\mathbf{x}_i\in\mathcal{D}^d$}
\KwOut{Marginal estimates $\{\hat f_j(v)\}_{j\in[d],\,v\in\mathcal{D}}$}

\textbf{Phase~\MakeUppercase{\romannumeral1} (users $1,\dots,n_1$): Private dependency learning}\;
\ForEach{user $u_i$}{
  \For{$j\gets 1$ \KwTo $d$}{
     $y_{i,j}\leftarrow \text{GRR}(x_{i,j},\,\epsilon/d)$\;
  }
  Send $\mathbf{y}_i$ to server\;
}
Server: compute debiased marginals $\hat f_j^{I}(v)$ for all $j\in[d],\,v\in\mathcal{D}$\;
Server: for every ordered pair $(j,s)$ with $j\neq s$, compute $p_{j\leftrightarrow s}\in[0,1]$ by minimizing $\MSE_{\mathrm{avg}}$ per Section~\ref{py_determination}; store matrix $P=[p_{j\leftrightarrow s}]$\;

\BlankLine
\textbf{Phase~\MakeUppercase{\romannumeral2} (users $n_1+1,\dots,n$): Correlation-aware reporting}\;
\ForEach{user $u_i$}{
  Sample pivot $j \sim \mathrm{Unif}([d])$\;
  $y_{i,j}\leftarrow \text{GRR}(x_{i,j},\,\epsilon)$ \tcp*{full-budget privatization}
  \For{$s\gets 1$ \KwTo $d$ \textbf{where} $s\neq j$}{
      With prob.\ $p_{j\leftrightarrow s}$ set $y_{i,s}\leftarrow y_{i,j}$;\;
      Otherwise set $y_{i,s}\leftarrow \text{rand}(\mathcal{D}\setminus\{y_{i,j}\})$\;
  }
  Send $\mathbf{y}_i$ to server\;
}
Server: compute $\hat f_j^{II}(v)$ for all $j\in[d],\,v\in\mathcal{D}$ using GRR debiasing with budget $\epsilon$\;
\BlankLine
\textbf{Output (aggregation)}:\quad $\hat f_j(v)=\dfrac{n_1\,\hat f_j^{I}(v)+n_2\,\hat f_j^{II}(v)}{n}$ for all $j\in[d],\,v\in\mathcal{D}$\;
\end{algorithm}

\subsection{Discussion}
We now discuss key properties and potential extensions of our two-phase framework.

\subsubsection{Privacy Amplification}
In Phase~II of Corr-RR, each user perturbs only one randomly chosen attribute out of $d$, corresponding to an attribute-level sampling rate $\beta=1/d$.  
By the standard privacy amplification by sampling argument~\cite{li2012sampling, arcolezi2021random}, an $\epsilon$-LDP mechanism under such sampling is equivalent to an \emph{amplified} guarantee with effective parameter:
\[
\epsilon' = \ln\!\left( \tfrac{1}{\beta}(e^{\epsilon}-1) + 1 \right),
\qquad \beta = \tfrac{1}{d}.
\]
Although Phase~I does not involve sampling (as each of the $n_1$ users perturbs all attributes with $\epsilon/d$), Phase~II dominates when $n_2 \gg n_1$, so in practice Corr-RR can be interpreted as enjoying an amplified privacy bound.  

\subsubsection{Iterative Refinement}
Corr-RR relies on dependency parameters ($p_y$ values) estimated from a small Phase~I subset. Over time, correlations may shift or estimation errors may accumulate. As a result, the $p_y$ values determined from the initial marginal estimates can gradually diverge from the true underlying correlations, reducing the accuracy of frequency estimation.
A natural enhancement is to periodically refresh Phase~I with additional samples, re-estimating marginals and updating $p_y$ values. This gradual recalibration reduces estimation noise, ensuring that the learned dependencies, $p_y$, remain aligned with the evolving data distribution, thereby improving Phase~II reconstructions and overall frequency estimation accuracy.

\subsubsection{Grouping Attributes}
In high-dimensional datasets, not all attributes are equally correlated. A more scalable extension is to partition attributes into $t$ groups based on estimated correlation strength. Each group receives a sub-budget $\epsilon/t$. Within a group, Phase~II operates as in Corr-RR: one pivot attribute is perturbed with GRR and the remaining attributes are reconstructed via $p_y$-guided mappings. Grouping reduces noise for strongly correlated subsets without wasting privacy budget on weakly related attributes.

\subsubsection{Extension to Attributes with Varying Domain Sizes}
While the above discussion of Corr-RR focuses on attribute pairs with identical categorical domains (i.e., $\mathcal{D}_j = \mathcal{D}_k$), the core idea naturally extends to attributes whose domains differ in type or size. A straightforward approach is to convert these attributes into categorical variables with aligned domain sizes through temporary grouping and mapping.
For example, suppose $X_1 \in \{yes, no\}$ and $X_2 \in \{0,1,2,3,4\}$. We can align their domains by grouping and mapping $\{0,1\} \mapsto yes$ and $\{2,3\} \mapsto no$. After this alignment, Corr-RR is applied to generate perturbed users' values based on the mapped categories. If the \emph{perturbed (mapped)} value of $X_2$ is ``yes'', the user randomly selects a value from $\{0,1\}$ as the final reported value; similarly, if it is ``no'', the user randomly selects a value from $\{2,3\}$ as the report. 

\subsubsection{Extension to Higher-order Dependencies}
Corr\text{-}RR currently models pairwise dependencies between attributes, which is computationally efficient and already captures substantial dependency structure in typical datasets. However, real data may exhibit higher-order dependencies in which an attribute is jointly influenced by multiple others. Corr\text{-}RR can be extended to incorporate such higher-order relationships to further improve estimation accuracy.
In Phase~I, rather than estimating only marginal distributions, we can estimate joint distributions over multiple pivot attributes (e.g., the joint distribution of $X_{s_1}$ and $X_{s_2}$). In Phase~II, these pivot attributes jointly determine the reporting behavior of each non-pivot attribute through $\Pr(y_{i,j}=v' \mid y_{i,s_1}=v_1, y_{i,s_2}=v_2)$, with dependency parameters specified for each combination of pivot values. We then derive the corresponding MSE expression and determine the dependency parameters by minimizing the MSE, obtained by taking the partial derivative of the MSE with respect to each parameter.

\subsubsection{Alternative Mechanism under the Two-phase Framework}

Our two-phase framework naturally supports other instantiations. 
One such example is \textbf{Conditional Randomized Response (Cond-RR)}, which extends the idea of correlation-aware synthesis by using \emph{conditional distributions} rather than pairwise probabilities. 

\textit{Phase I: Conditional Modeling.}  
As in Corr-RR, a subset of $n_1$ users perturbs all $d$ attributes with per-attribute budget $\epsilon/d$. Using these reports, the server learns an approximate joint distribution over the attributes. From this joint distribution, the server then derives conditional distributions that summarize how attributes relate to one another in a privacy-preserving way. 

\textit{Phase II: Conditional Synthesis.}  
Each remaining user randomly selects one attribute and perturbs it using their full privacy budget $\epsilon$. The remaining unselected attributes are then indirectly inferred by sampling from the conditional distributions learned in Phase I, conditioned on the user's perturbed attribute. Because this reconstruction depends only on the privatized pivot and publicly released statistics, it is pure post-processing and incurs no additional privacy cost and satisfies $\epsilon$-LDP.

Cond-RR demonstrates that our two-phase design is not limited to correlation-based mappings but can also support dependency-aware synthesis guided by conditional distributions.

\section{Performance Evaluation}\label{sec:evaluation} 
This section evaluates our proposed method against three baseline solutions on synthetic and real-world datasets.

\subsection{Evaluation Metrics}

\subsubsection{Utility Metric} 
We evaluated accuracy using the Mean Squared Error (MSE), a standard metric in the LDP literature~\cite{wang2019collecting, arcolezi2021random}. For each attribute $X_j$ with a common domain $\mathcal{D}$ of size $k$, and each value $v \in \mathcal{D}$, we compute the squared error between the true marginal frequency $f_j(v)$ and its estimate $\hat{f}_j(v)$. Averaging first across values in $\mathcal{D}$ and then across all $d$ attributes yields:
\begin{equation}
\mathrm{MSE} = \frac{1}{d} \sum_{j=1}^{d} 
   \frac{1}{|\mathcal{D}|} \sum_{v \in \mathcal{D}} 
   \bigl(f_j(v) - \hat{f}_j(v)\bigr)^2.
\end{equation}

\subsubsection{Privacy Metric} 
We adopt the privacy parameter $\epsilon$ as our metric, consistent with its standard use in LDP. Smaller $\epsilon$ gives stronger privacy protection but adds more noise. We vary $\epsilon$ in the range $[0.1,0.5]$, which captures practical privacy regimes commonly considered in multi-attribute LDP studies~\cite{arcolezi2021random, wang2021local}.

\subsection{Experimental Setup \& Datasets}

\subsubsection{Environments}
We implement all algorithms in Python~3.10.13 using NumPy~1.23.5 and Pandas~1.5.3. 
To account for randomness, the results are averaged over 100 independent runs. 
All experiments are conducted on a MacBook Pro with an Apple M2 processor and 16\,GB RAM.

\subsubsection{Evaluated Mechanisms}
We evaluate our proposed \textbf{Corr-RR} against standard baselines: \textbf{SPL}~\cite{wang2021local}, \textbf{RS+FD}~\cite{arcolezi2021random}, and \textbf{RS+RFD}~\cite{arcolezi2022risks}. In our evaluation, the methods are instantiated as follows.

SPL and RS+FD operate in single-phase, where all \(n\) users independently perturb their data. The original \emph{RS+RFD} is also single-phase and assumes access to public priors (e.g., perturbed via Laplace) to impute unreported attributes. In our setting, we assume such statistics are \emph{unavailable due to privacy constraints}. For a fair comparison, we therefore use a \emph{two-phase variant} of RS+RFD: in Phase~I, a subset of \(n_1\) users runs SPL to privately estimate marginals, which are normalized and treated as priors; in Phase~II, the remaining \(n_2=n-n_1\) users apply RS+RFD with these privatized priors. Final estimates are obtained via count-weighted aggregation. \emph{Corr-RR (ours)} is inherently two-phase: Phase~I runs SPL on \(n_1\) users to learn privatized dependencies, and Phase~II perturbs a random pivot with the full budget \(\epsilon\) while synthesizing non-pivot attributes using the learned dependency.

\textit{Remark.} 
RS+FD, RS+RFD, and Corr-RR involve subsampling and thus could, in principle, enjoy privacy amplification, yielding an effective privacy level $\epsilon' < \epsilon$. For consistency and fairness, however, we report all results using the same nominal budget $\epsilon$, so that comparisons with SPL remain directly comparable.

\begin{table}[ht!]

\centering

\caption{Pairwise Pearson correlations of SynA and SynB  for three values of $\rho$.} 
\label{tab:synthetic-corr-multi-rho}

\begin{tabular}{lcccccc}
\toprule
\multirow{2}{*}{\textbf{Pair}} 
& \multicolumn{2}{c}{$\rho=0.1$} 
& \multicolumn{2}{c}{$\rho=0.5$}
& \multicolumn{2}{c}{$\rho=0.9$} \\
\cmidrule(lr){2-3} \cmidrule(lr){4-5} \cmidrule(lr){6-7}
& SynA & SynB & SynA & SynB & SynA & SynB \\
\midrule
$(X_1, X_2)$ & 0.100 & 0.100  & 0.491 & 0.491 & 0.899 & 0.899 \\
$(X_1, X_3)$ & 0.094 & 0.012 & 0.495 & 0.241 & 0.900 & 0.897 \\
$(X_1, X_4)$ & 0.100 & 0.011  & 0.508 & 0.128 & 0.903 & 0.812 \\
$(X_2, X_3)$ & 0.007 & 0.098  & 0.231 & 0.490 & 0.808 & 0.804 \\
$(X_2, X_4)$ & 0.009 & 0.006  & 0.247 & 0.243 & 0.810 & 0.730 \\
$(X_3, X_4)$ & 0.007 & 0.099  & 0.251 & 0.495 & 0.811 & 0.904 \\
\midrule
\textbf{Average}
& 0.053 & 0.054
& 0.370 & 0.348

& 0.855 & 0.841

 \\
\bottomrule

\end{tabular}
\end{table}

\subsubsection{Synthetic Datasets} 
We construct two types of synthetic datasets to systematically control inter-attribute dependencies. 
\begin{itemize}
\item \textbf{SynA (Single-Reference).}  
We randomly select one attribute as the pivot (e.g., $X_1$).  
The pivot attribute is sampled independently from the marginal distribution  
$\pi_1 = (0.4, 0.3, 0.2, 0.1)$ over domain $\mathcal{D}$.  
Each remaining attribute copies the value of $X_1$ with probability $\rho$, and otherwise samples a different value from the domain uniformly.  
This produces a clean star-shaped dependency pattern centered at the pivot attribute.
\item \textbf{SynB (Random-Reference).}  
As in SynA, the pivot attribute is randomly selected and generated from $\pi_1$. In contrast, for each subsequent attribute, we randomly select one previously generated attribute as its parent.  
With probability $\rho$, it copies the parent’s value; otherwise, it samples a different value independently.  
This yields a more heterogeneous dependency structure with mixed parent–child relationships.
\end{itemize}
Unless stated otherwise, our default experimental setting uses $n = 20{,}000$ users, $d = 4$ categorical attributes, and domain size $|\mathcal{D}| = 4$.  
Table~\ref{tab:synthetic-corr-multi-rho} summarizes the dataset characteristics (i.e., pairwise Pearson correlations) for SynA and SynB under different values of $\rho$.  
To more thoroughly evaluate Corr\text{-}RR, we additionally generate a variety of SynA and SynB datasets under different choices of $n$ and $d$.

\subsubsection{Real-world Datasets.}
We evaluate our mechanisms on three publicly available datasets: \textit{Clave}~\cite{clave}, \textit{Mushroom}~\cite{mushroom_73}, and \textit{Adult}~\cite{adult_2}. We randomly selected two attributes from the \textit{Clave} dataset, which only contains binary activation vectors. For the \textit{Mushroom} dataset, which consists of categorical attributes describing species characteristics, we first removed attributes with excessively large domains. Next, for the remaining two attributes, whose domain sizes differed, we selected one attribute and merged adjacent categorical values (after encoding) so that its domain size matched that of the other attribute.
For the \textit{Adult} dataset, we selected three categorical attributes, \textit{sex}, \textit{marital-status}, and \textit{relationship}, whose domains are naturally small. Where necessary, we applied the same encoding-based alignment to ensure uniform domain sizes across attributes. All preprocessing steps rely solely on public metadata (e.g., attribute names and declared domain types) and do not use any frequency or correlation information from the raw data, thereby preserving privacy compliance. Dataset characteristics are summarized in Table~\ref{tab:character_table}, while the detailed frequency distributions are provided in Figure~\ref{fig:relative_freq_realdata} in Appendix~\ref{Appendix2}.

\begin{table}[ht!]
\centering
\caption{Characteristics of Real-world Datasets}
\begin{tabular}{l c c c c}
\toprule
\textbf{Dataset} & $d$ & $\boldsymbol{|\mathcal{D}|}$ & \textbf{\# Users} & \textbf{Correlation} \\
\midrule
Clave    & 2 & 2 & 10k & 0.298 \\
Mushroom & 2 & 5 & 8k  & 0.209 \\
Adult    & 3 & 2 & 10k & 0.43--0.95 \\
\bottomrule
\end{tabular}
\label{tab:character_table}
\end{table}

\subsection{Results on Synthetic Data}\label{syntheticresults}

\begin{figure*}[ht!]
  \centering

  \begin{minipage}{.32\textwidth}
    \includegraphics[width=\linewidth]{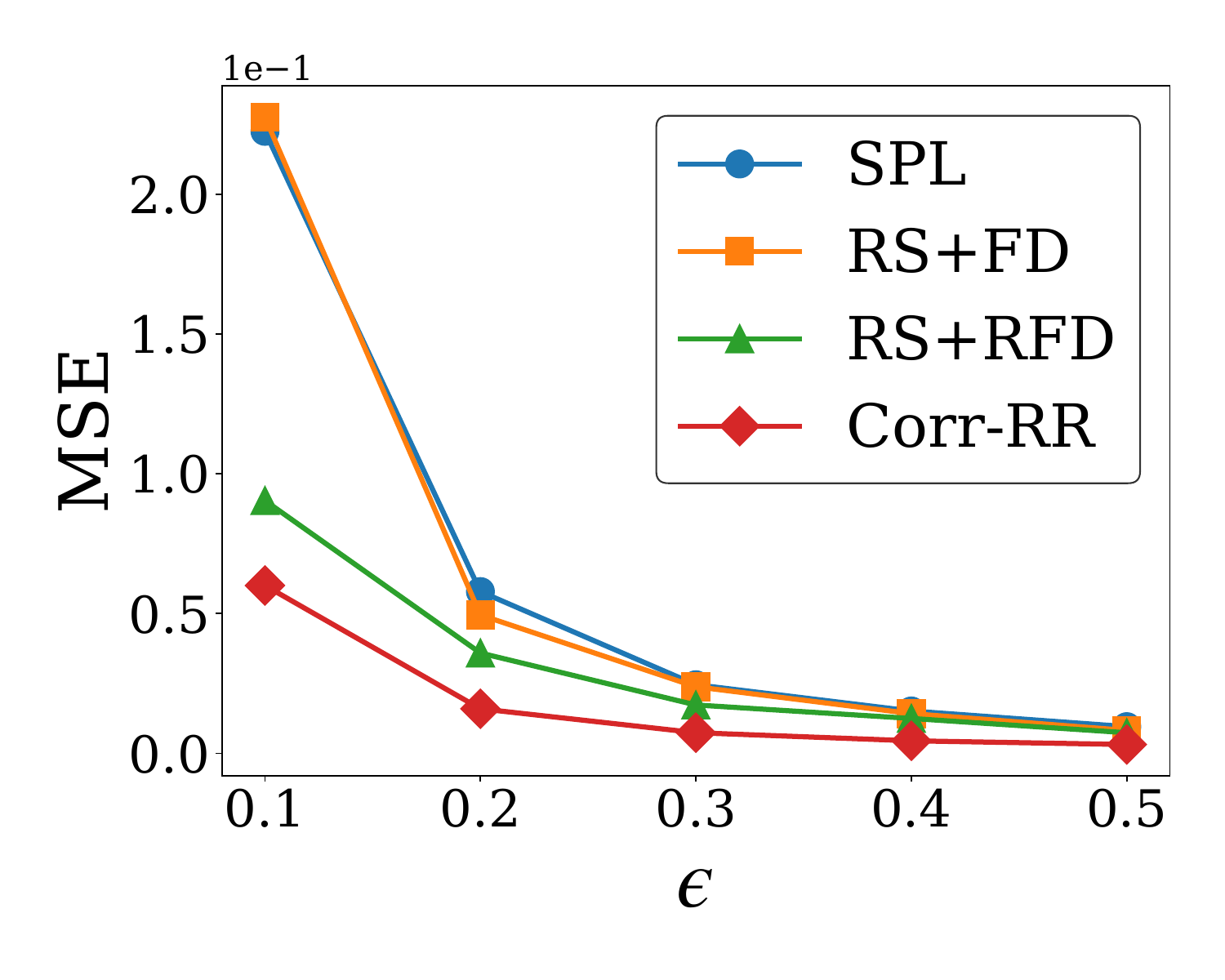}   
    \subcaption{$\rho = 0.1$}
    \label{fig:mse_eps_k4_d4_corr_01}
  \end{minipage}\hfill
  \begin{minipage}{.32\textwidth}
    \includegraphics[width=\linewidth]{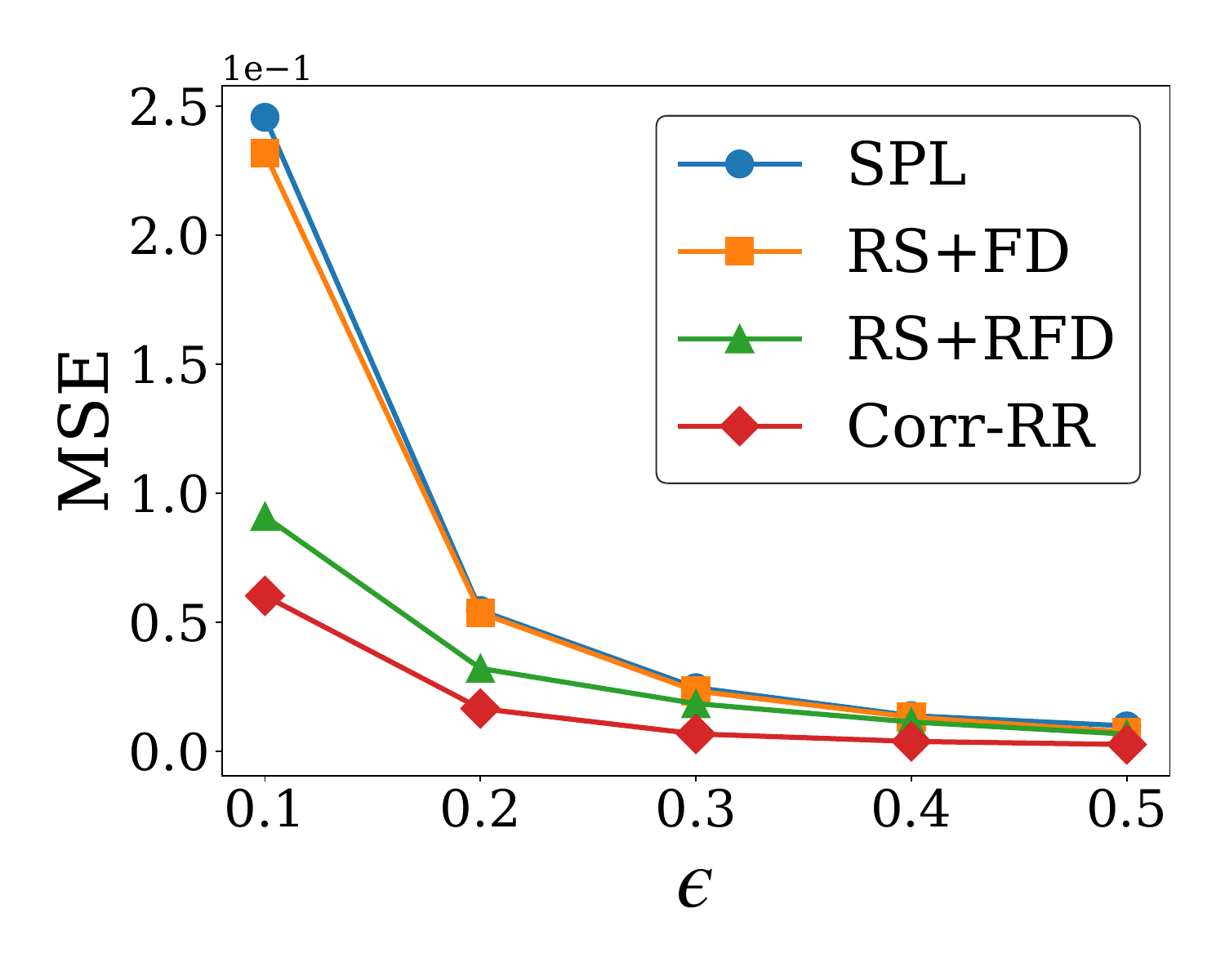}   
    \subcaption{$\rho = 0.5$}
    \label{fig:mse_eps_k4_d4_corr_03}
  \end{minipage}\hfill
  \begin{minipage}{.32\textwidth}
    \includegraphics[width=\linewidth]{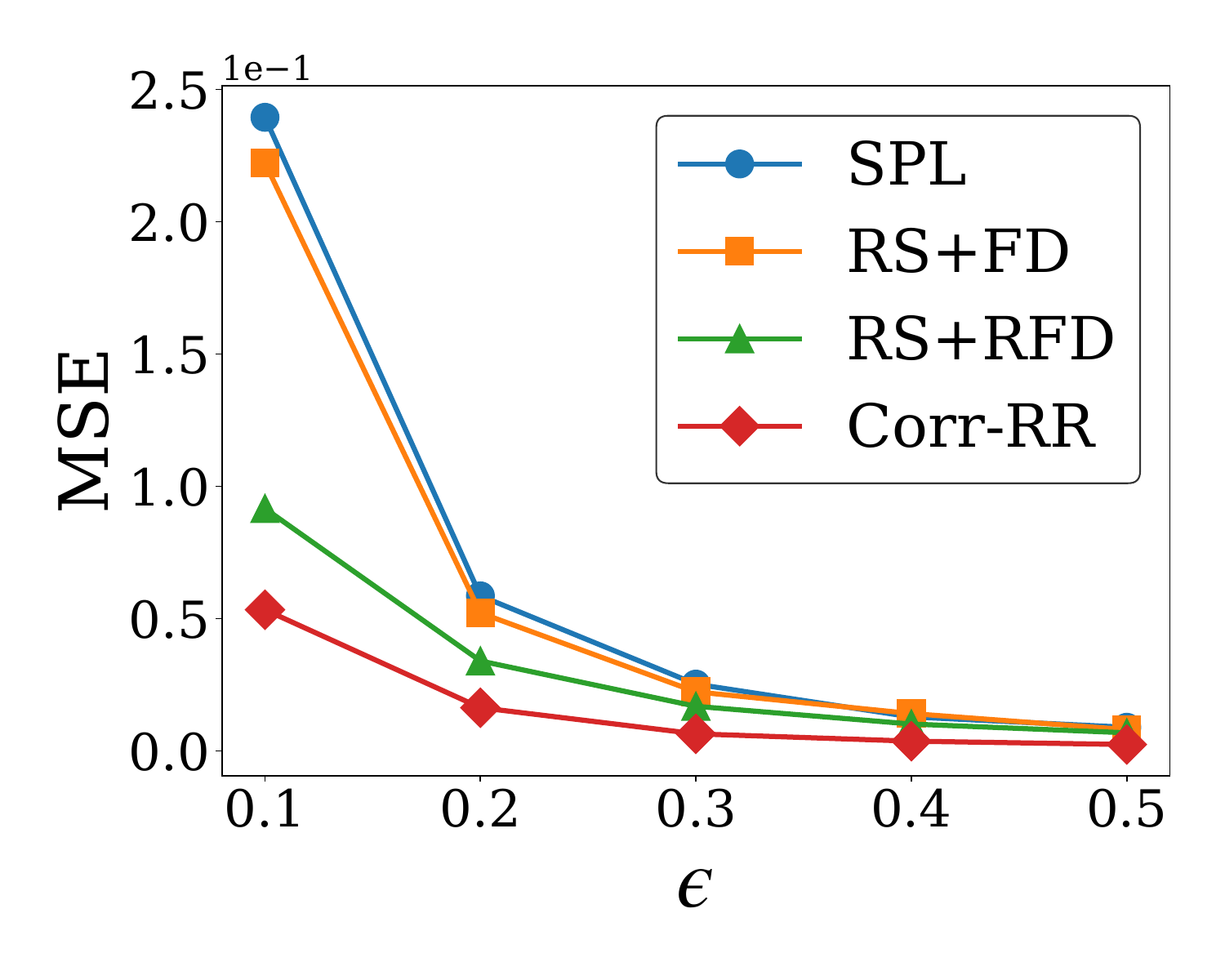}   
    \subcaption{$\rho = 0.9$}
    \label{fig:mse_eps_k4_d4_corr_09}
  \end{minipage}\hfill

\caption{
MSE vs.\ privacy budget $\epsilon$ on SynA with $d=4$ and $|\mathcal{D}|=4$. 
Subplots correspond to different correlations. }
\Description{Three line plots comparing mean squared error versus privacy budget epsilon for four mechanisms: SPL, RS+FD, RS+RFD, and Corr-RR. Each subplot corresponds to a different correlation strength.}

  \label{fig:mse_eps_k4_d4}
\end{figure*}

\begin{figure*}[ht!]

  \centering

  \begin{minipage}{.32\textwidth}
    \includegraphics[width=\linewidth]{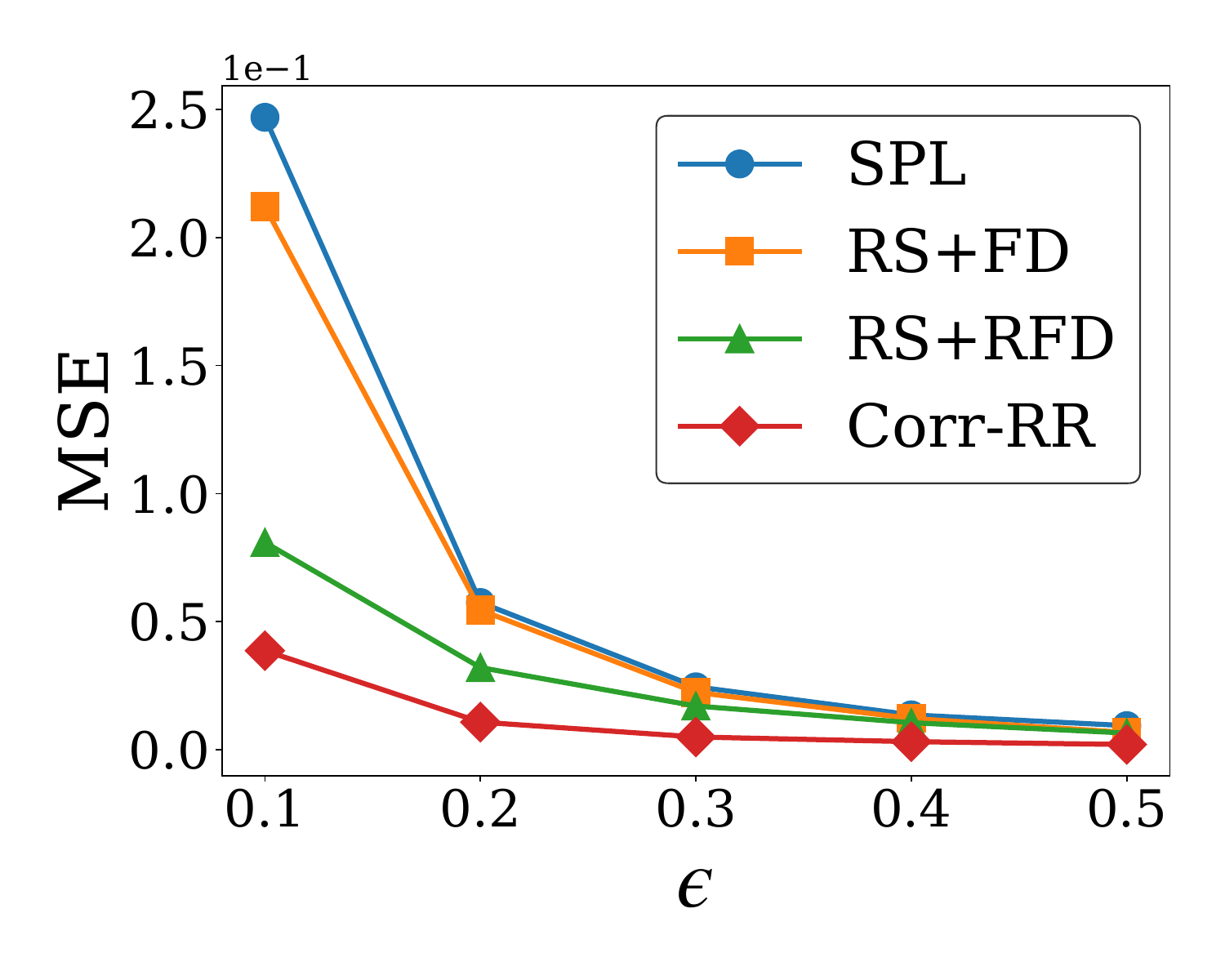}   
    \subcaption{$\rho = 0.1$}
    \label{fig:mse_eps_k4_d4_corr_01_prog}
  \end{minipage}\hfill
  \begin{minipage}{.32\textwidth}
    \includegraphics[width=\linewidth]{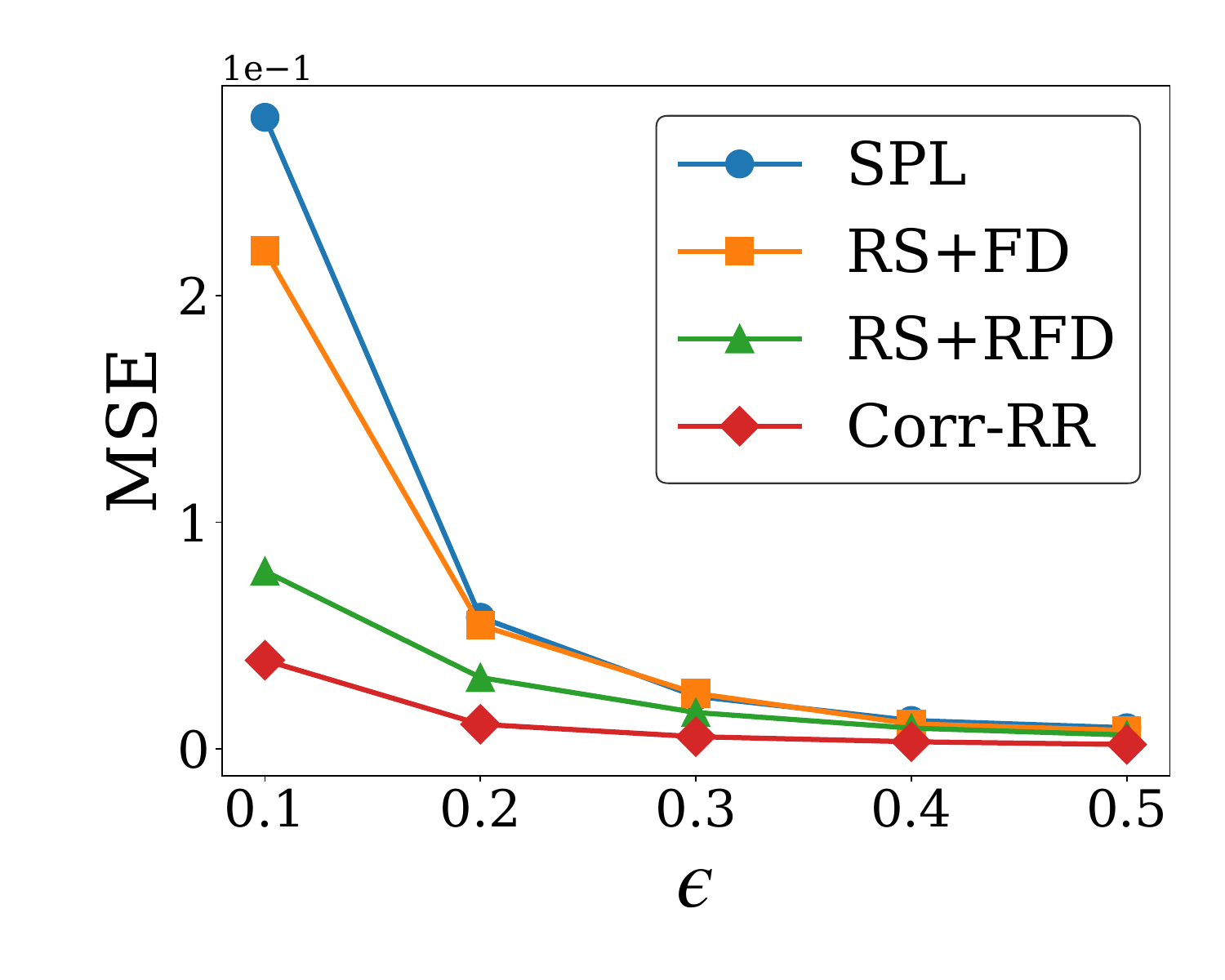}     
    \subcaption{$\rho = 0.5$}
    \label{fig:mse_eps_k4_d4_corr_03_prog}
  \end{minipage}\hfill
  \begin{minipage}{.32\textwidth}
    \includegraphics[width=\linewidth]{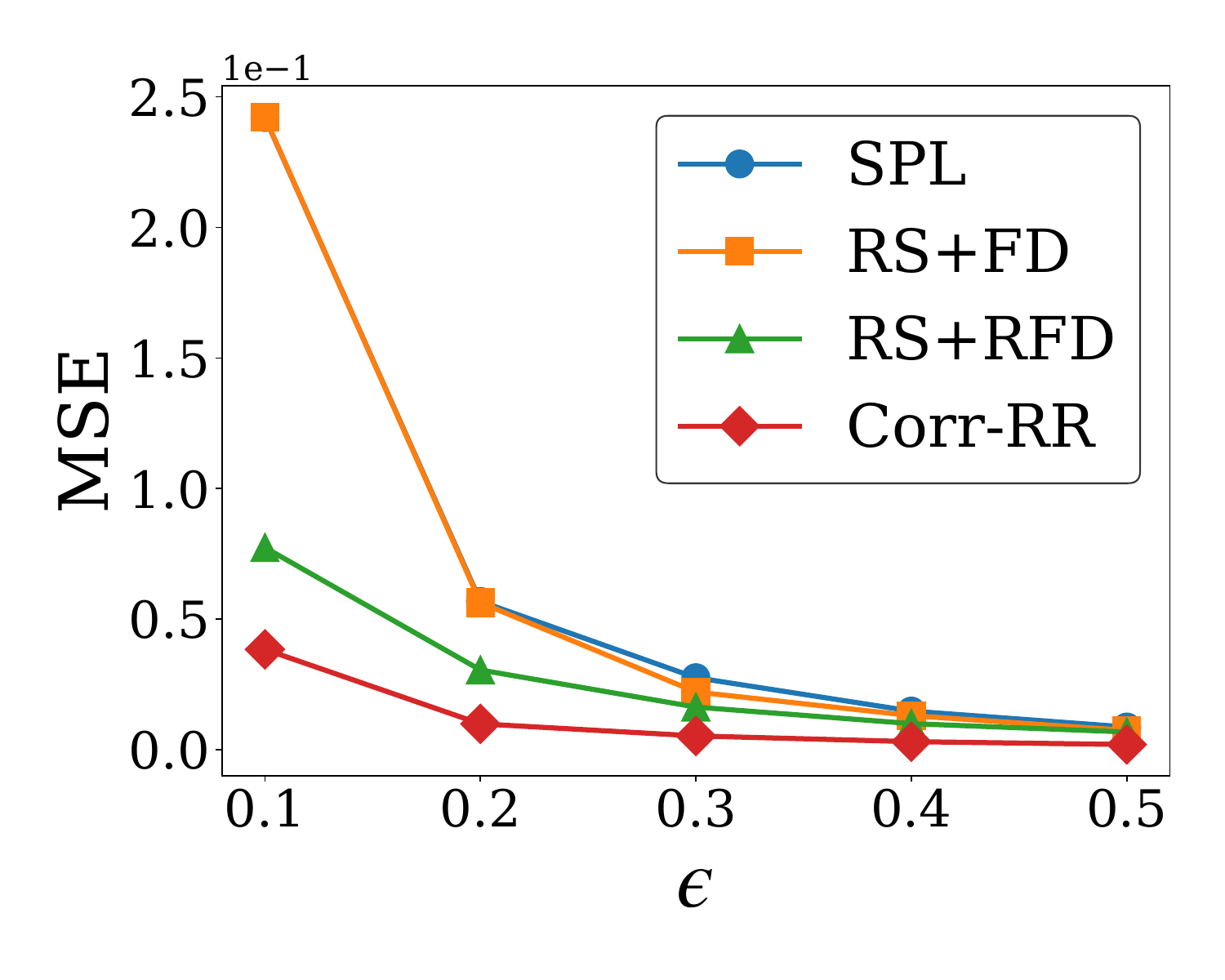}        
    \subcaption{$\rho = 0.9$}
    \label{fig:mse_eps_k4_d4_corr_05_prog}
  \end{minipage}\hfill

\caption{
MSE vs.\ privacy budget $\epsilon$ on SynB with $d=4$ and $|\mathcal{D}|=4$. 
Subplots correspond to different correlations.}
\Description{Three line plots comparing mean squared error versus privacy budget epsilon for four mechanisms: SPL, RS+FD, RS+RFD, and Corr-RR. Each subplot corresponds to a different correlation strength.}

  \label{fig:mse_eps_k4_d4_prog}
\end{figure*}

\subsubsection{Impact of Privacy Budget}
We compare MSE of four LDP mechanisms, SPL, RS+FD, RS+RFD, and Corr-RR, evaluated on two synthetic datasets SynA (Figure~\ref{fig:mse_eps_k4_d4}) and SynB (Figure~\ref{fig:mse_eps_k4_d4_prog}) with domain size $|\mathcal{D}| = 4$, users $n=20{,}000$, and attribute dimensionality $d = 4$.

As expected, MSE decreases as $\epsilon$ increases from 0.1 to 0.5, reflecting the reduced noise at higher privacy budgets. SPL performs worst, since its budget is divided across attributes ($\epsilon/d$ per attribute), which amplifies the error. RS+FD improves upon SPL by allocating the full budget to one attribute, though its fake data limits accuracy. RS+RFD further reduces error by imputing unreported attributes from prior distributions. Corr-RR consistently outperforms all baselines: it concentrates the full budget on a pivot attribute and reconstructs the others using dependency-guided perturbation of the privatized pivot. This design leads to particularly large gains under strong correlations. For example, at $\rho = 0.9$ and $\epsilon = 0.1$, Corr-RR reduces error by more than 70\% compared to SPL, with RS+FD providing almost no improvement and RS+RFD achieving only moderate gains. As $\epsilon$ increases, all mechanisms converge, but Corr-RR remains slightly superior, especially when correlations are high.

\subsubsection{Impact of Number of Attributes}

\begin{figure*}[ht!]
  \centering

  \begin{minipage}{.32\textwidth}
    
    \includegraphics[width=\linewidth]{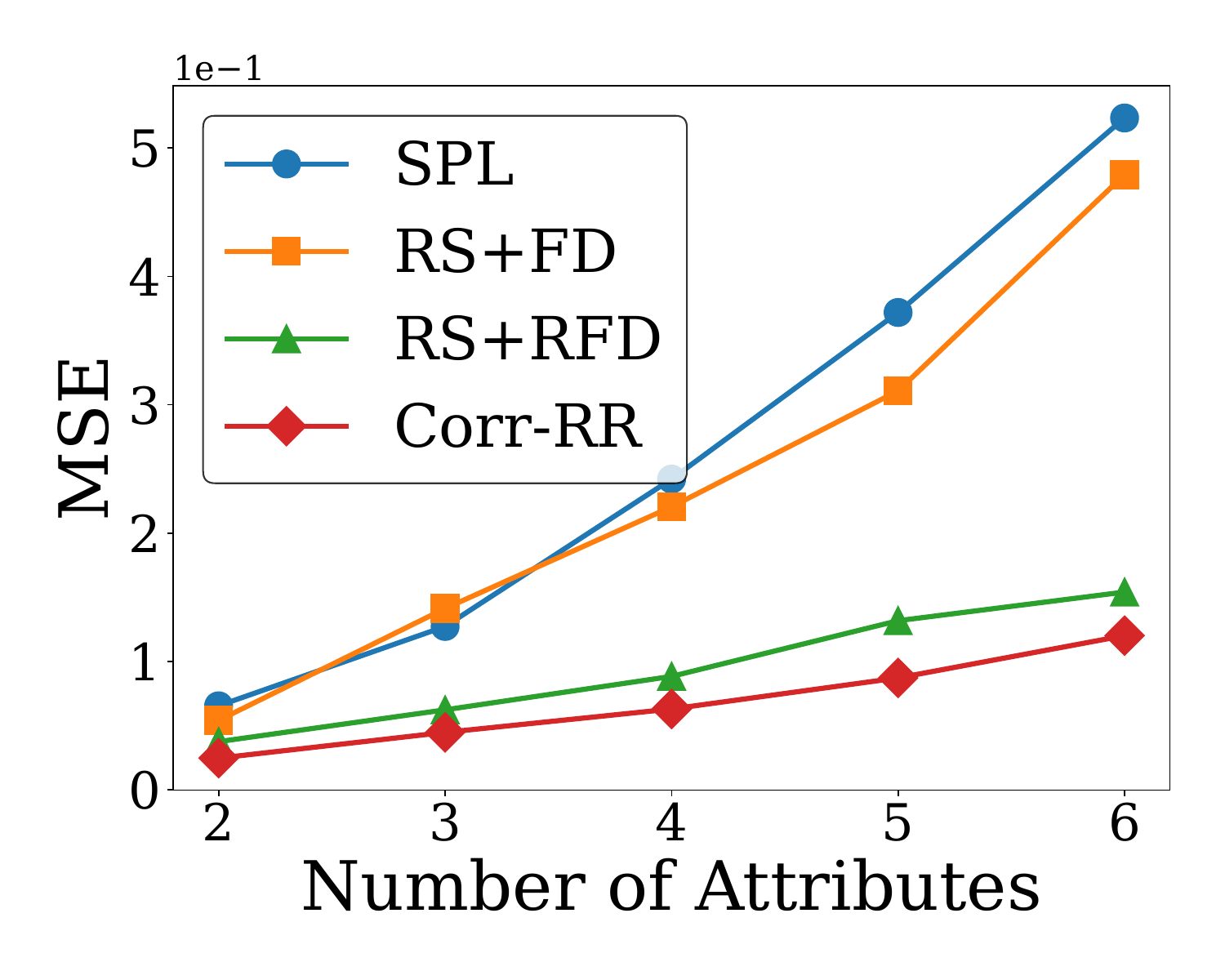}    

    \subcaption{$\epsilon = 0.1$}
    \label{fig:mse_attr_bin_cor09_eps1}
  \end{minipage}\hfill
  \begin{minipage}{.32\textwidth}
   \includegraphics[width=\linewidth]{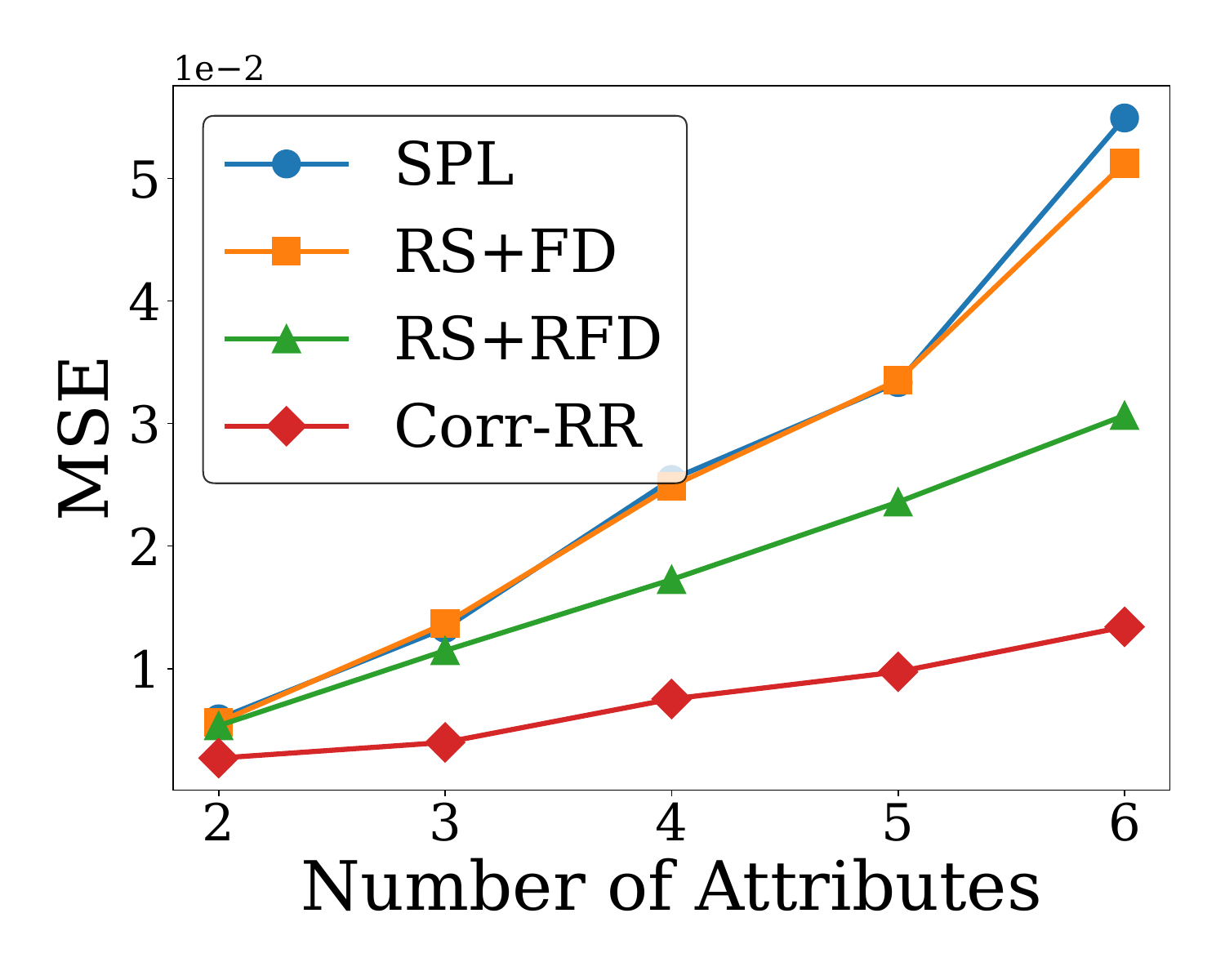}        \subcaption{$\epsilon = 0.3$}
    \label{fig:mse_attr_bin_cor09_eps3}
  \end{minipage}\hfill
  \begin{minipage}{.32\textwidth}
   \includegraphics[width=\linewidth]{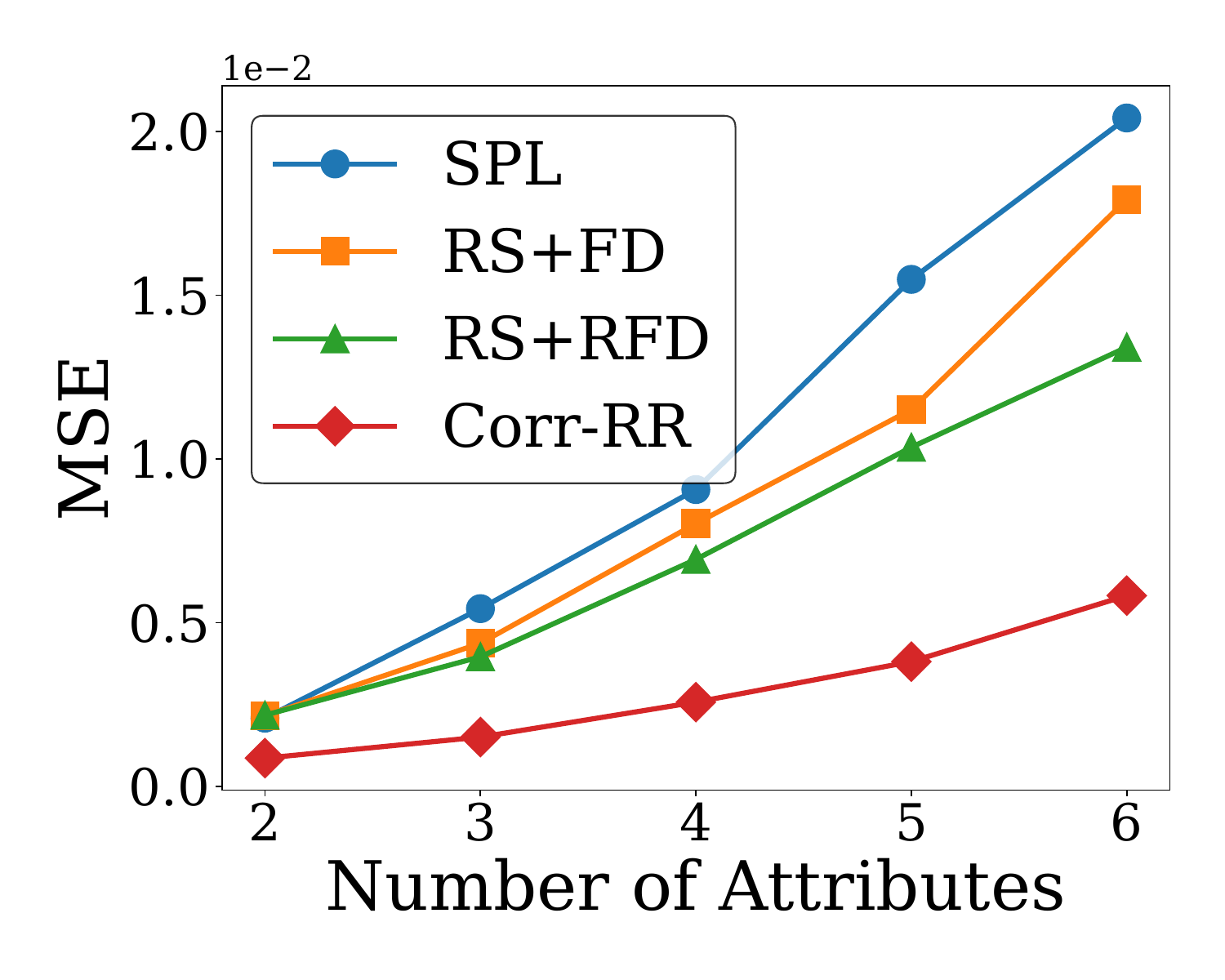}     
    \subcaption{$\epsilon = 0.5$}
    \label{fig:mse_attr_bin_cor09_eps5}
  \end{minipage}\hfill

  \caption{
MSE vs.\ number of attributes on SynA with $\rho=0.9$ and $|\mathcal{D}|=4$. 
Subplots correspond to different privacy budgets.}
\Description{Three line plots showing mean squared error as a function of the number of attributes for different privacy budgets, comparing SPL, RS+FD, RS+RFD, and Corr-RR.}

  \label{fig:mse_vs_attr_bin_cor09}
\end{figure*}


\begin{figure*}[ht!]

  \centering

  \begin{minipage}{.32\textwidth}
    
    \includegraphics[width=\linewidth]{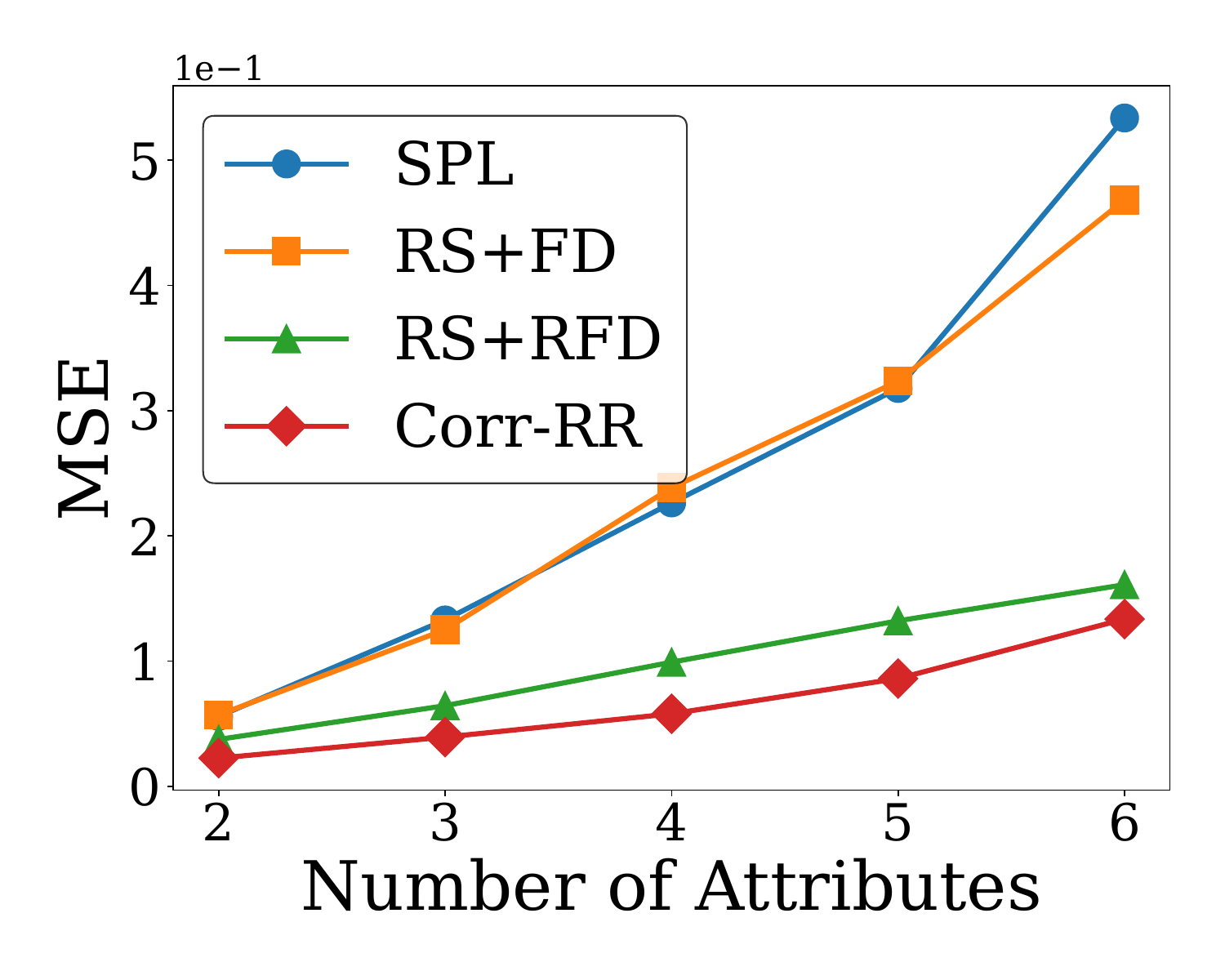}    

    \subcaption{$\epsilon = 0.1$}
    \label{fig:mse_attr_bin_cor09_eps1_prog}
  \end{minipage}\hfill
  \begin{minipage}{.32\textwidth}
   \includegraphics[width=\linewidth]{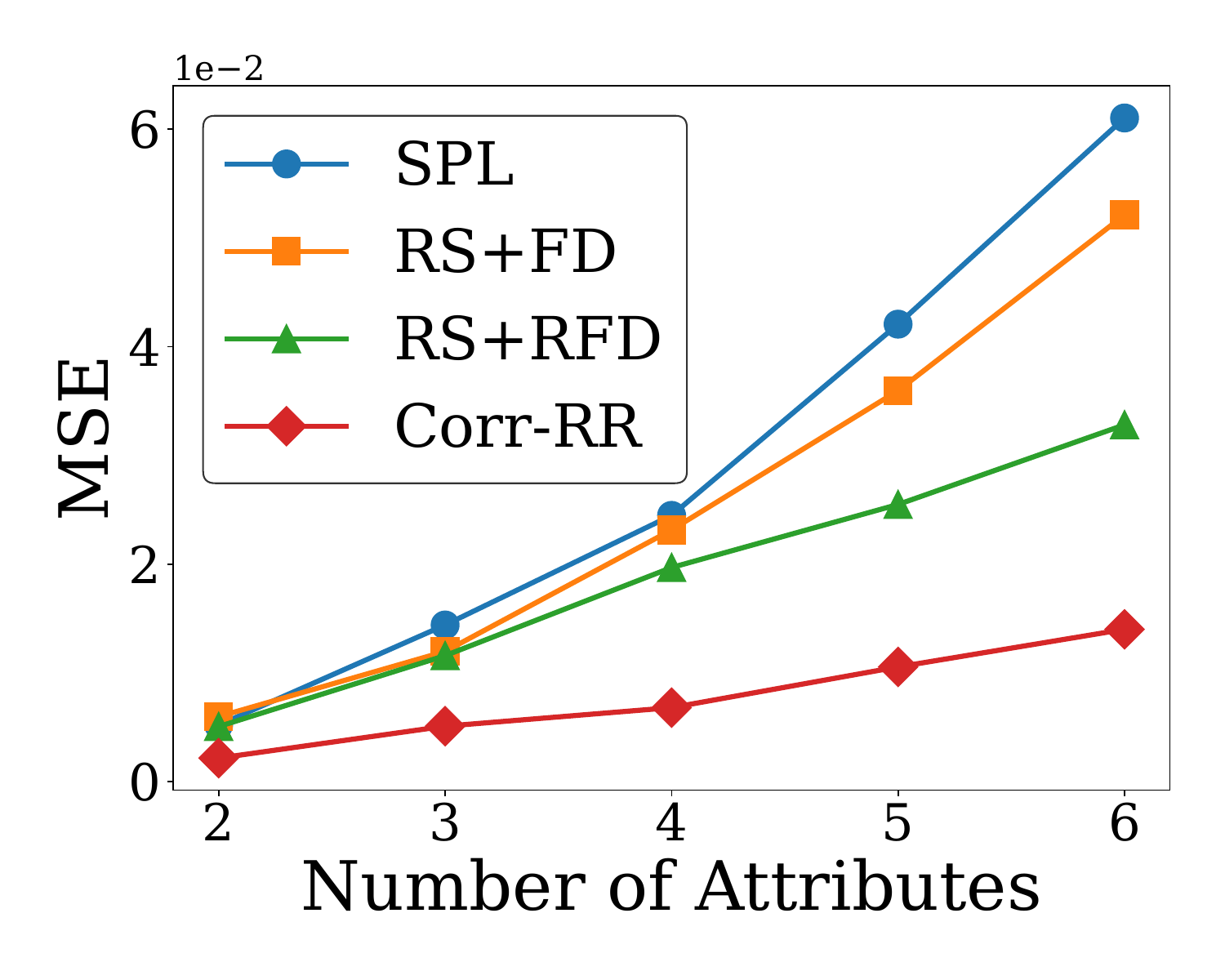}   
   \subcaption{$\epsilon = 0.3$}
    \label{fig:mse_attr_bin_cor09_eps3_prog}
  \end{minipage}\hfill
  \begin{minipage}{.32\textwidth}
 \includegraphics[width=\linewidth]{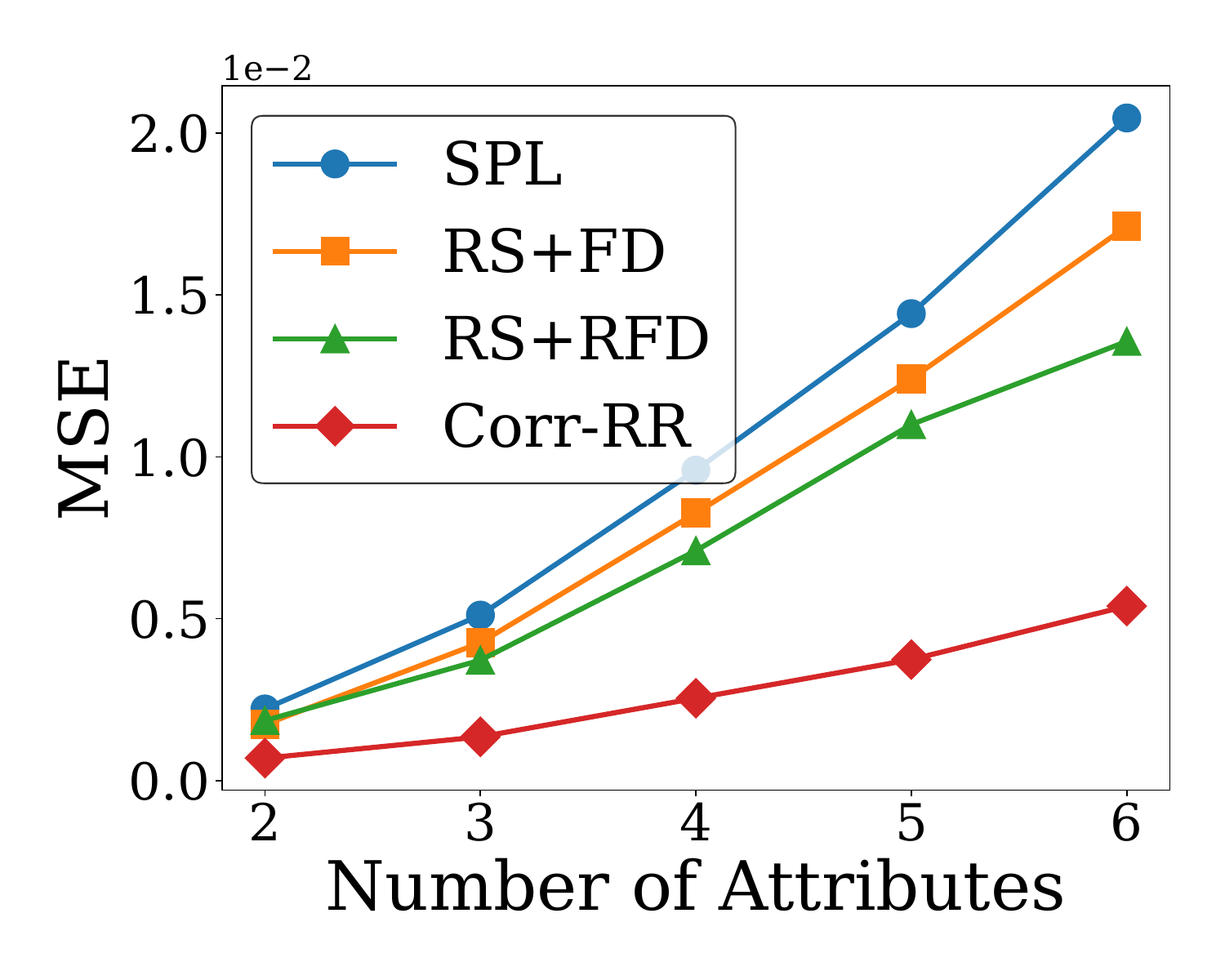}      
    \subcaption{$\epsilon = 0.5$}
    \label{fig:mse_attr_bin_cor09_eps5_prog}
  \end{minipage}\hfill

  \caption{
MSE vs.\ number of attributes on SynB with $\rho=0.9$ and $|\mathcal{D}|=4$. 
Subplots correspond to different privacy budgets.}
\Description{Three line plots showing mean squared error as a function of the number of attributes for different privacy budgets, comparing SPL, RS+FD, RS+RFD, and Corr-RR.}

  \label{fig:mse_vs_attr_bin_cor09_prog}
\end{figure*}

We evaluate how the number of attributes $d \in \{2,\ldots,6\}$ affects estimation accuracy under different privacy budgets. Results for SynA (Figure~\ref{fig:mse_vs_attr_bin_cor09}) and SynB (Figure~\ref{fig:mse_vs_attr_bin_cor09_prog}) are shown for domain size $|\mathcal{D}|=4$, correlation $\rho = 0.9$, and $n = 20{,}000$ users.

We can see that the MSE increases steadily with the number of attributes, reflecting the noise accumulation inherent in high-dimensional local perturbation. Among baselines, SPL consistently suffers the steepest error accumulation, followed by RS+FD, while RS+RFD provides moderate improvement by incorporating prior distributions. In contrast, Corr-RR exhibits markedly lower error accumulation, maintaining the lowest MSE across all $d$ and $\epsilon$ values. The benefits of Corr-RR are particularly pronounced under strong correlation in Figure~\ref{fig:mse_vs_attr_bin_cor09} and Figure~\ref{fig:mse_vs_attr_bin_cor09_prog}. For example, at $\epsilon=0.1$ and $d=6$, Corr-RR reduces error by more than 50\% compared to RS+RFD and by over 4$\times$ relative to SPL. Even at larger budgets, Corr-RR sustains a consistent advantage, highlighting its robustness against dimensionality. 

We also analyze the impact of the number of attributes under a low-correlation setting ($\rho = 0.1$) in Appendix~\ref{Appendix2} for the SynA and SynB datasets, where each subplot presents performance under a specific privacy budget $\epsilon \in \{0.1, 0.3, 0.5\}$.
As shown in Figure~\ref{fig:mse_vs_attr_bin_cor01_syna} (SynA) and Figure~\ref{fig:mse_vs_attr_bin_cor01_synb} (SynB), Corr-RR continues to achieve the best accuracy even under weaker correlation. These results confirm that Corr-RR not only leverages inter-attribute correlation but also scales more gracefully as dimensionality grows, avoiding the severe utility degradation that affects baselines in high-dimensional settings.

\subsubsection{Impact of Correlations}

\begin{figure*}[ht!]
  \centering

  \begin{minipage}{.33\textwidth}
    
      \includegraphics[width=\linewidth]{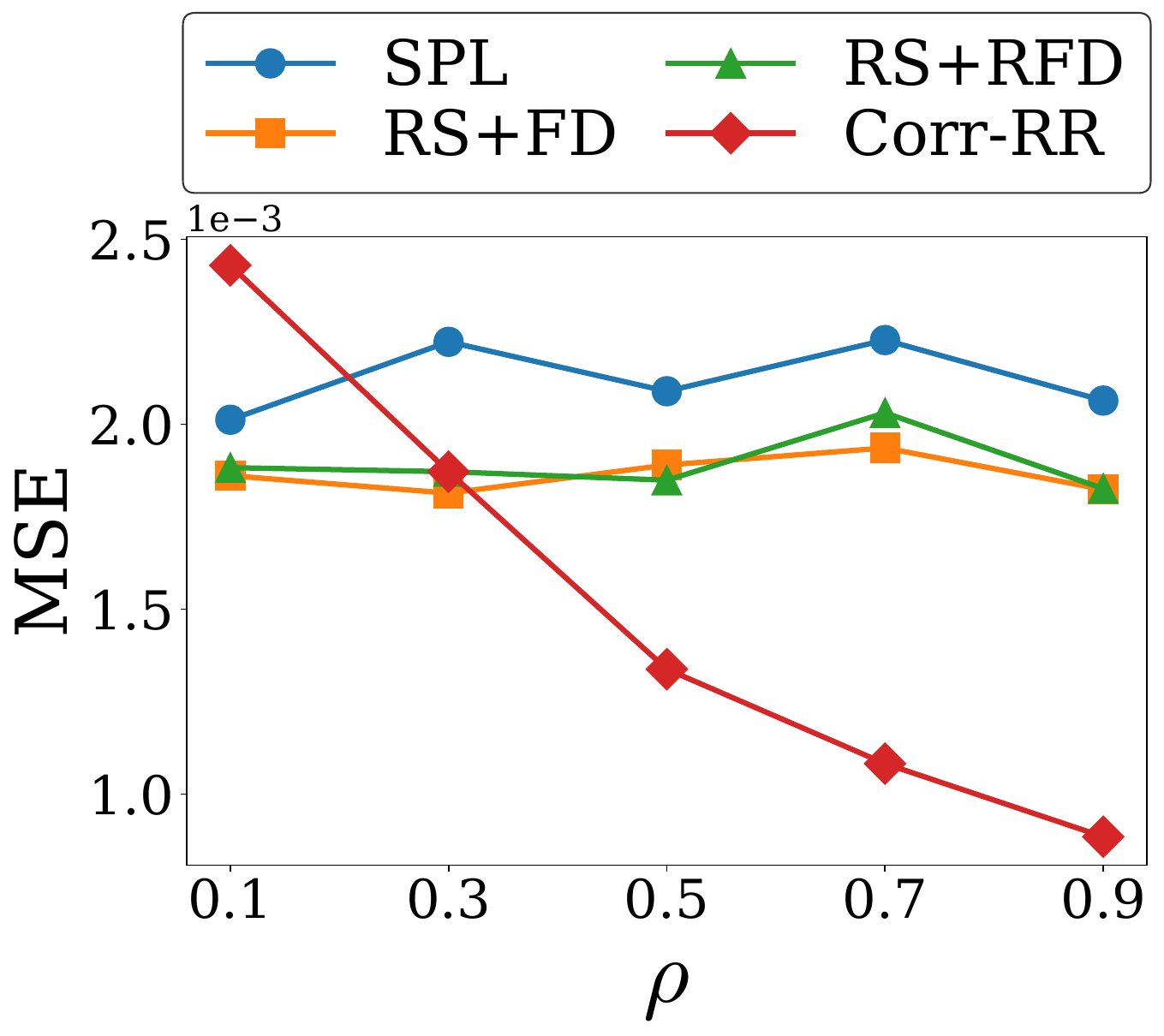}
    \subcaption{$d = 2$}
    \label{fig:correlation_Effect_d2}
  \end{minipage}\hfill
  \begin{minipage}{.33\textwidth}
     \includegraphics[width=\linewidth]{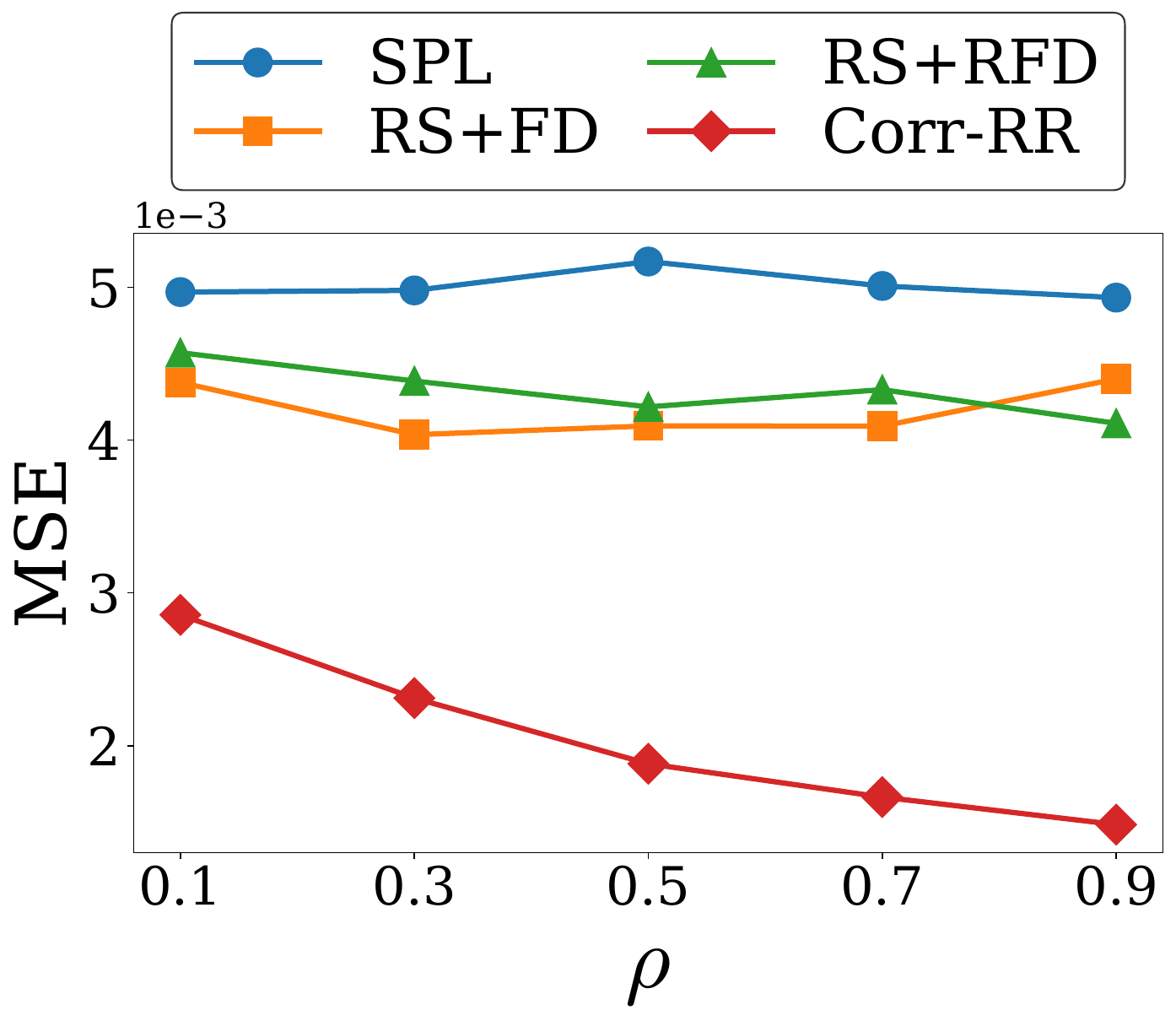}
   
    \subcaption{$d= 3$}
    \label{fig:correlation_Effect_d3}
  \end{minipage}\hfill
  \begin{minipage}{.33\textwidth}
    \includegraphics[width=\linewidth]{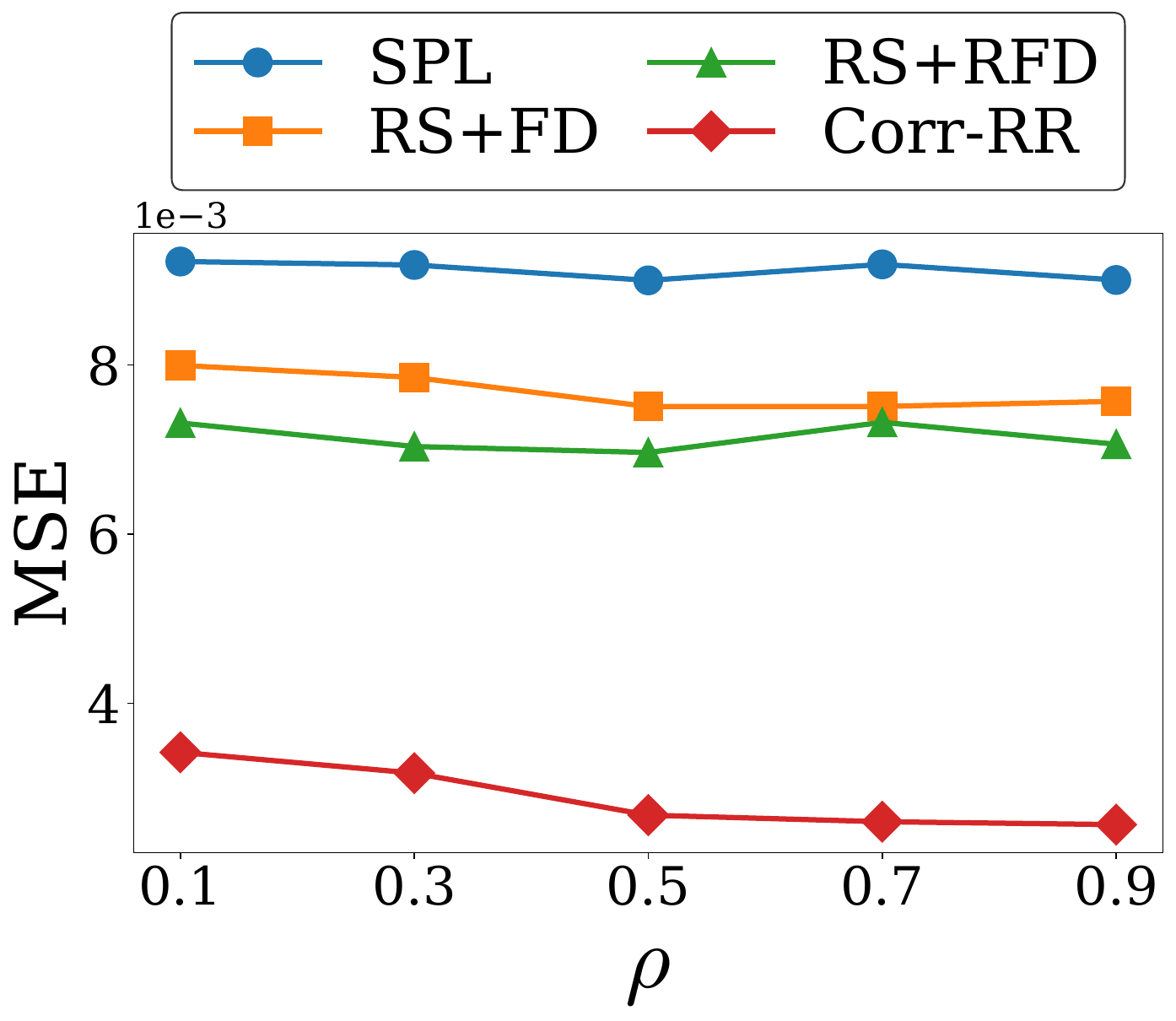}
    
    \subcaption{$d= 4$}
    \label{fig:correlation_Effect_d4}
  \end{minipage}\hfill

  \caption{
MSE vs.\ correlation $\rho$ on SynA with $|\mathcal{D}|=4$ and $\epsilon=0.5$. 
Subplots correspond to different numbers of attributes $d$.
}
  \Description{Line plots showing mean squared error as a function of correlation strength rho for multiple numbers of attributes, comparing SPL, RS+FD, RS+RFD, and Corr-RR.}
  \label{fig:correlation_Effect}
\end{figure*}

\begin{figure*}[ht!]

  \centering

  \begin{minipage}{.32\textwidth}
    
      \includegraphics[width=\linewidth]{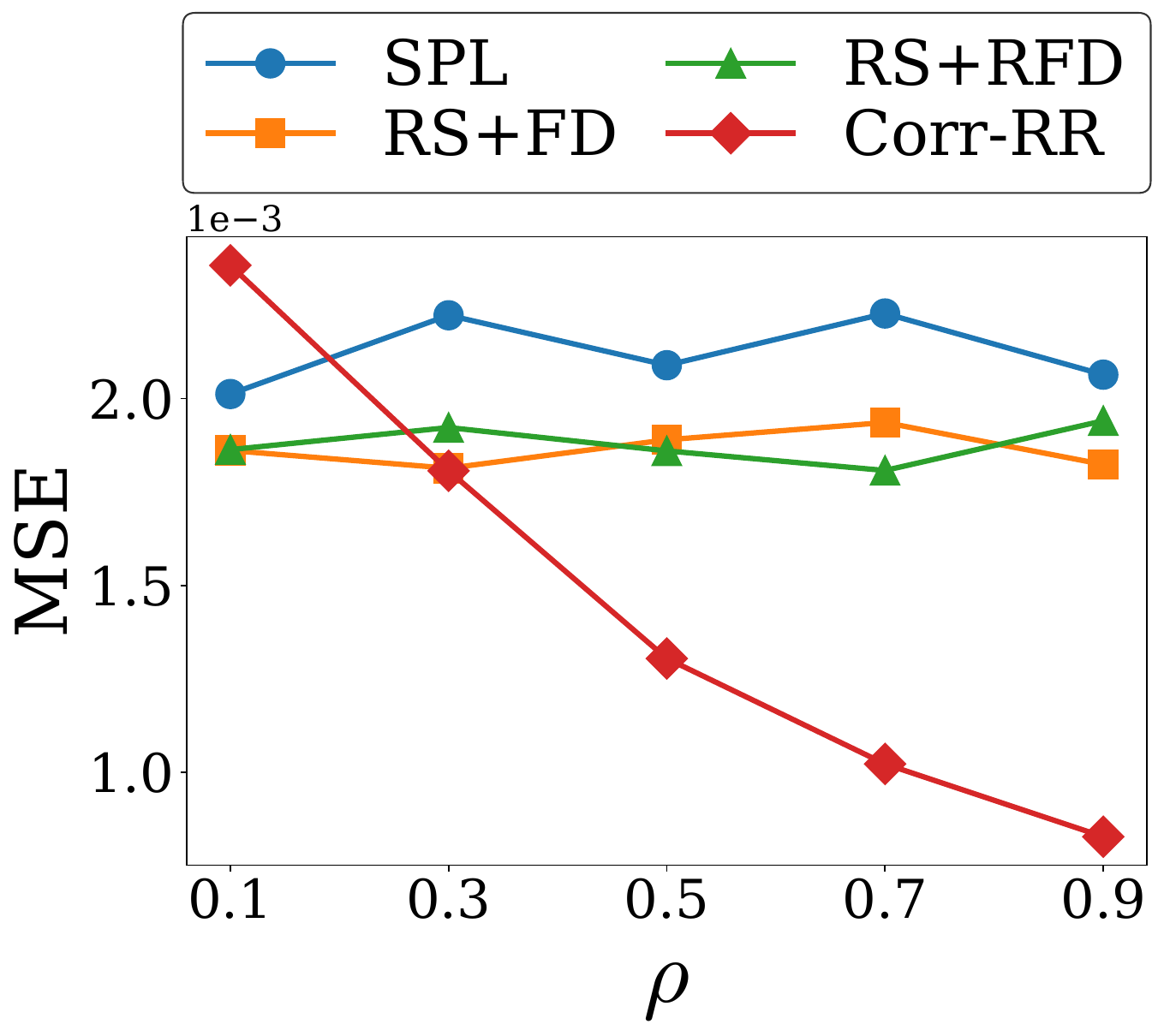}
    \subcaption{$d = 2$}
    \label{fig:correlation_Effect_d2_prog}
  \end{minipage}\hfill
  \begin{minipage}{.32\textwidth}
       \includegraphics[width=\linewidth]{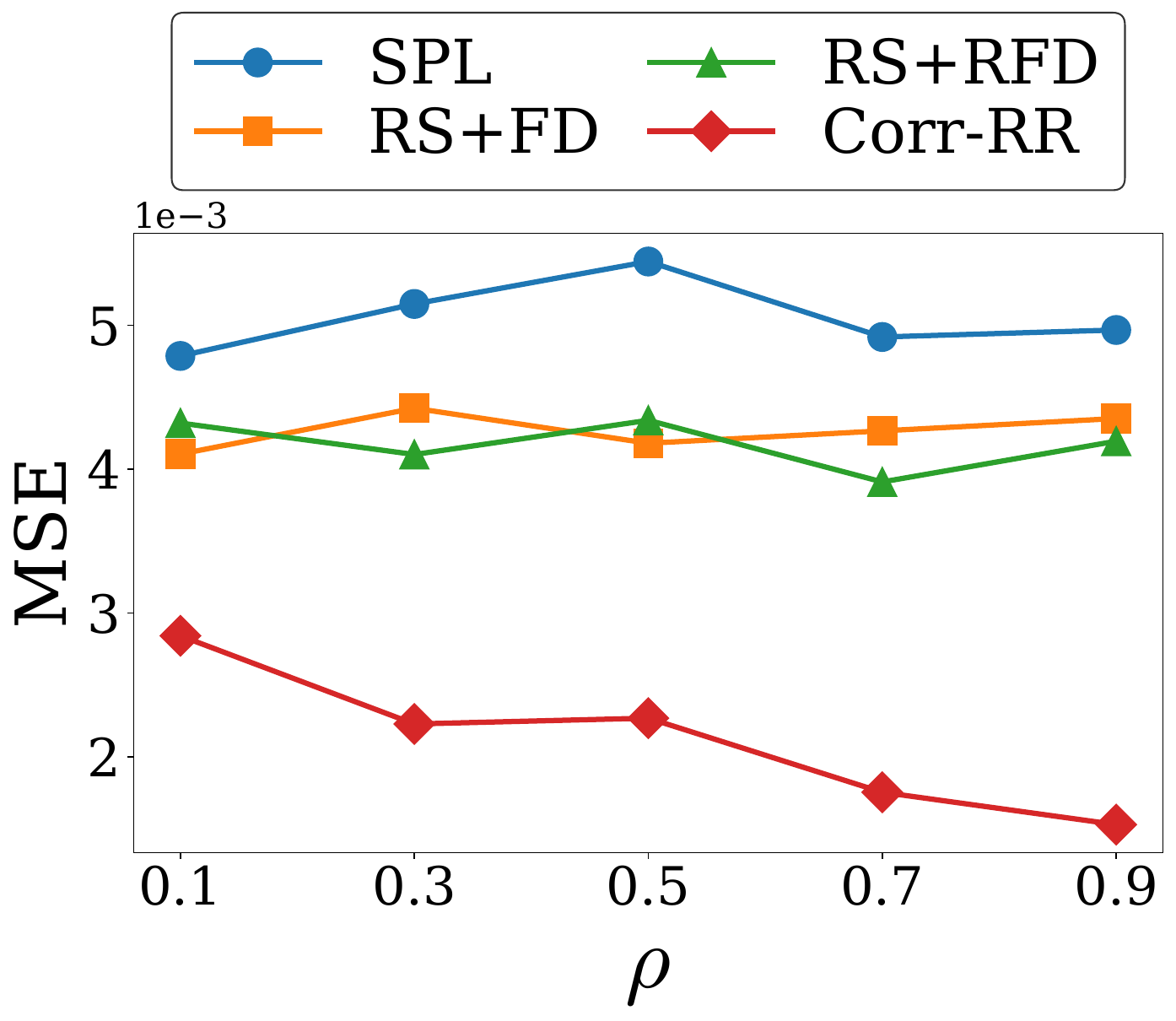} 
    \subcaption{$d= 3$}
    \label{fig:correlation_Effect_d3_prog}
  \end{minipage}\hfill
  \begin{minipage}{.32\textwidth}
      \includegraphics[width=\linewidth]{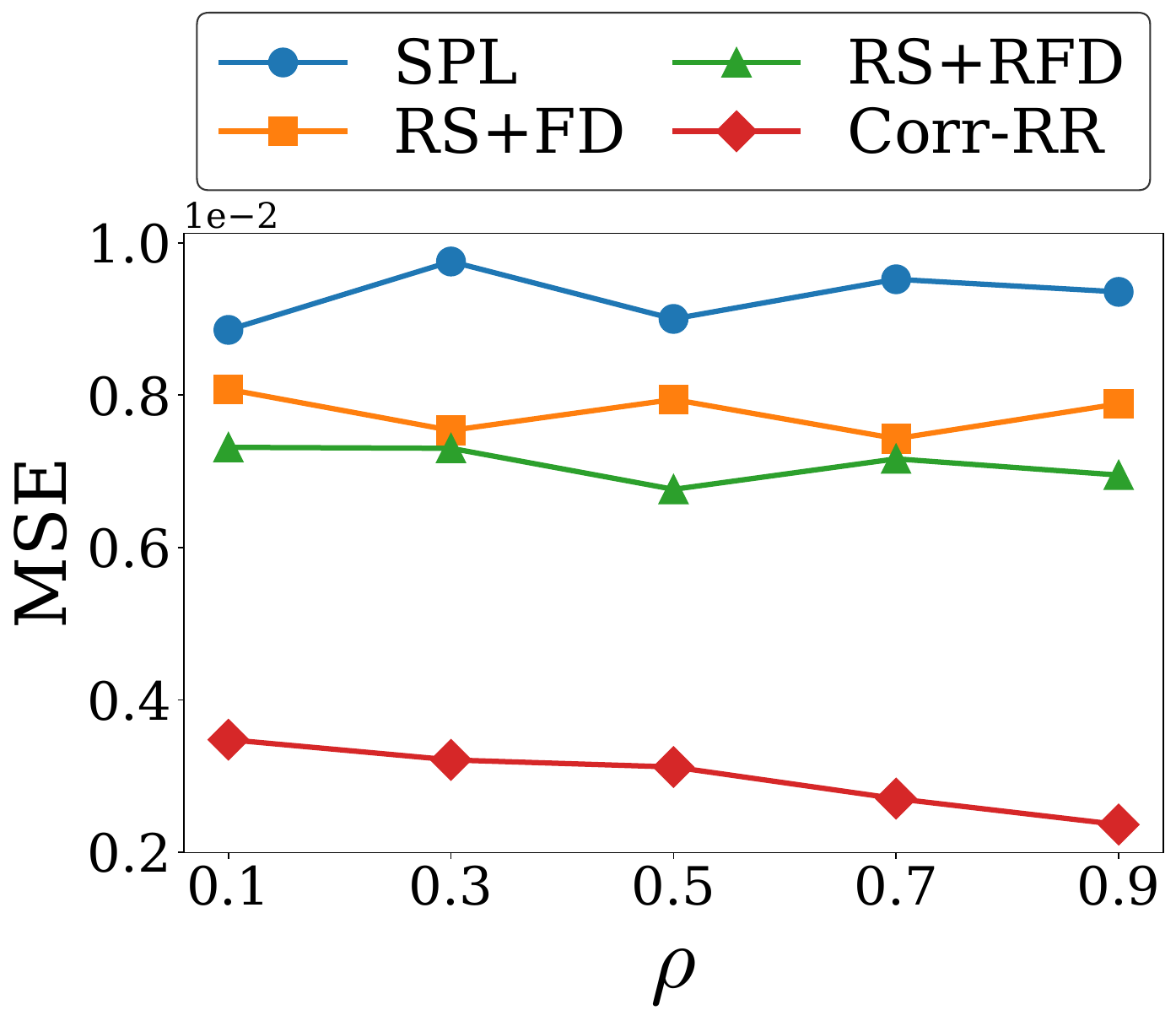}   
    \subcaption{$d= 4$}
    \label{fig:correlation_Effect_d4_prog}
  \end{minipage}\hfill

  \caption{
MSE vs.\ correlation $\rho$ on SynB with $|\mathcal{D}|=4$ and $\epsilon=0.5$. 
Subplots correspond to different numbers of attributes $d$. 
}
\Description{Line plots showing mean squared error as a function of correlation strength rho for multiple numbers of attributes, comparing SPL, RS+FD, RS+RFD, and Corr-RR.}

  \label{fig:correlation_Effect_prog}
\end{figure*}

\begin{figure*}[ht!]
  \centering

  \begin{minipage}{.32\textwidth}
   
    \includegraphics[width=\linewidth]{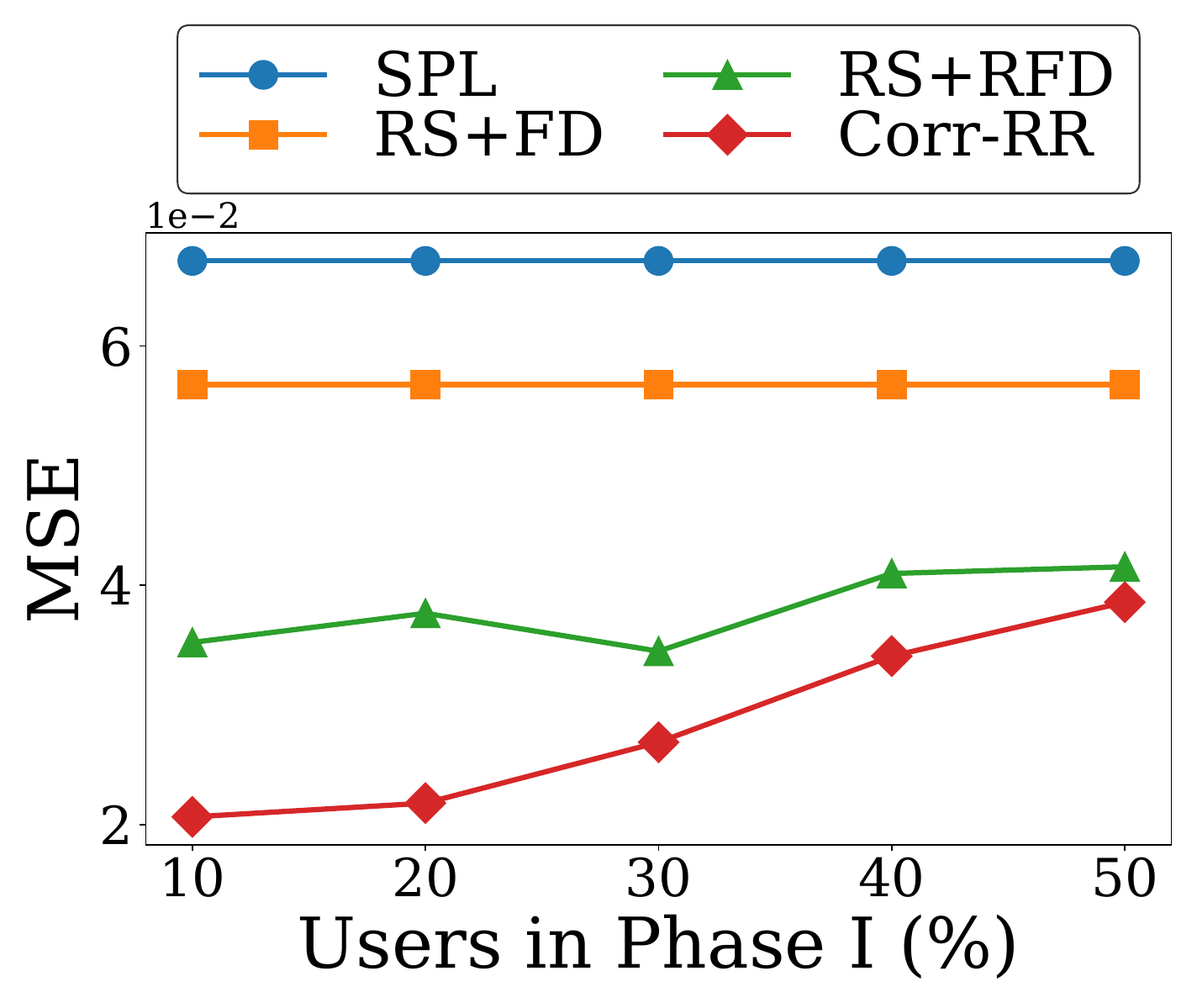} 
    \subcaption{$\epsilon = 0.1$}
    \label{fig:mse_vs_phase_fraction_n_20k_eps01}
  \end{minipage}\hfill
  \begin{minipage}{.32\textwidth}
  \includegraphics[width=\linewidth]{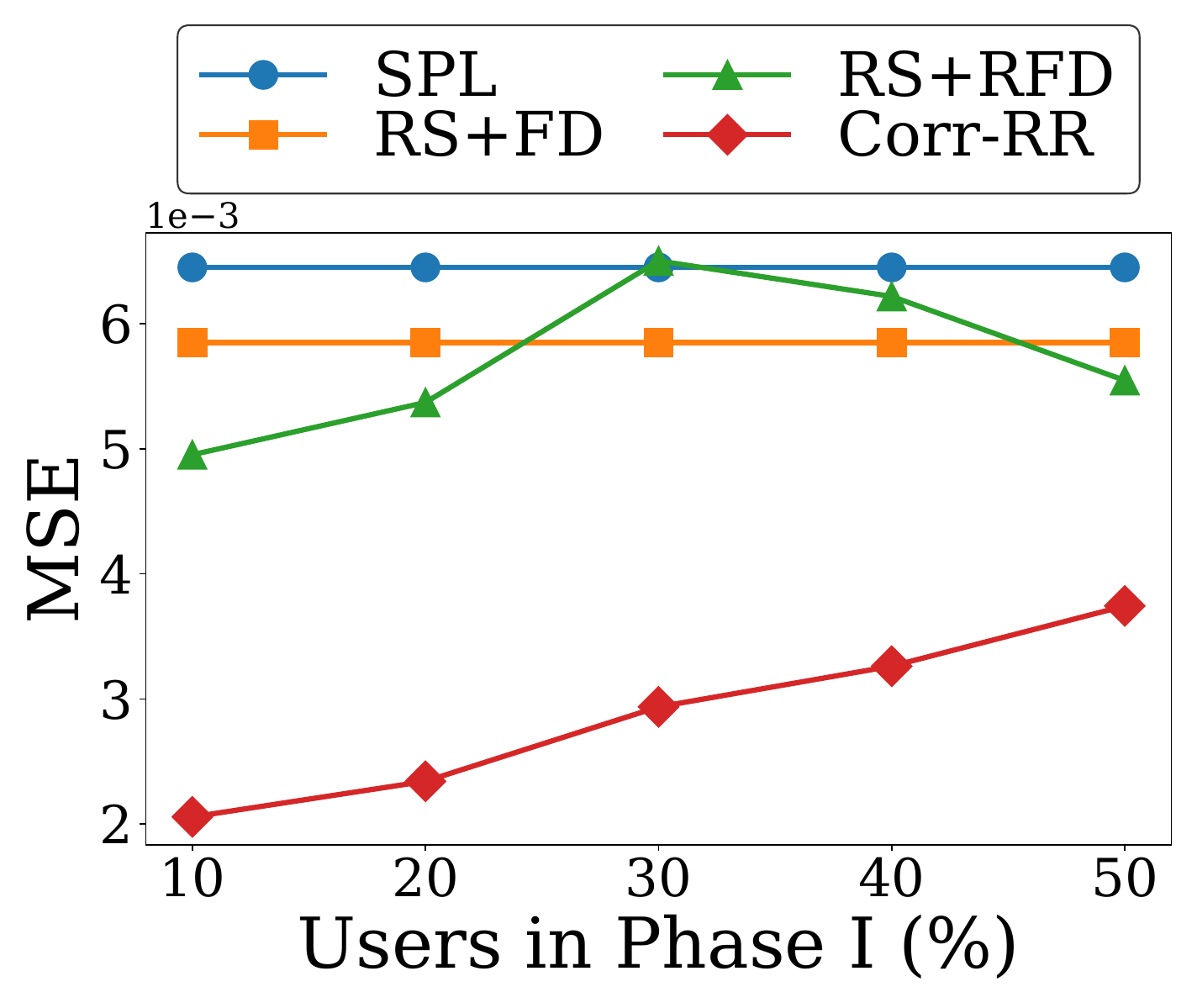}   
    \subcaption{$\epsilon = 0.3$}
    \label{fig:mse_vs_phase_fraction_n_20k_eps03}
  \end{minipage}\hfill
  \begin{minipage}{.32\textwidth}
\includegraphics[width=\linewidth]{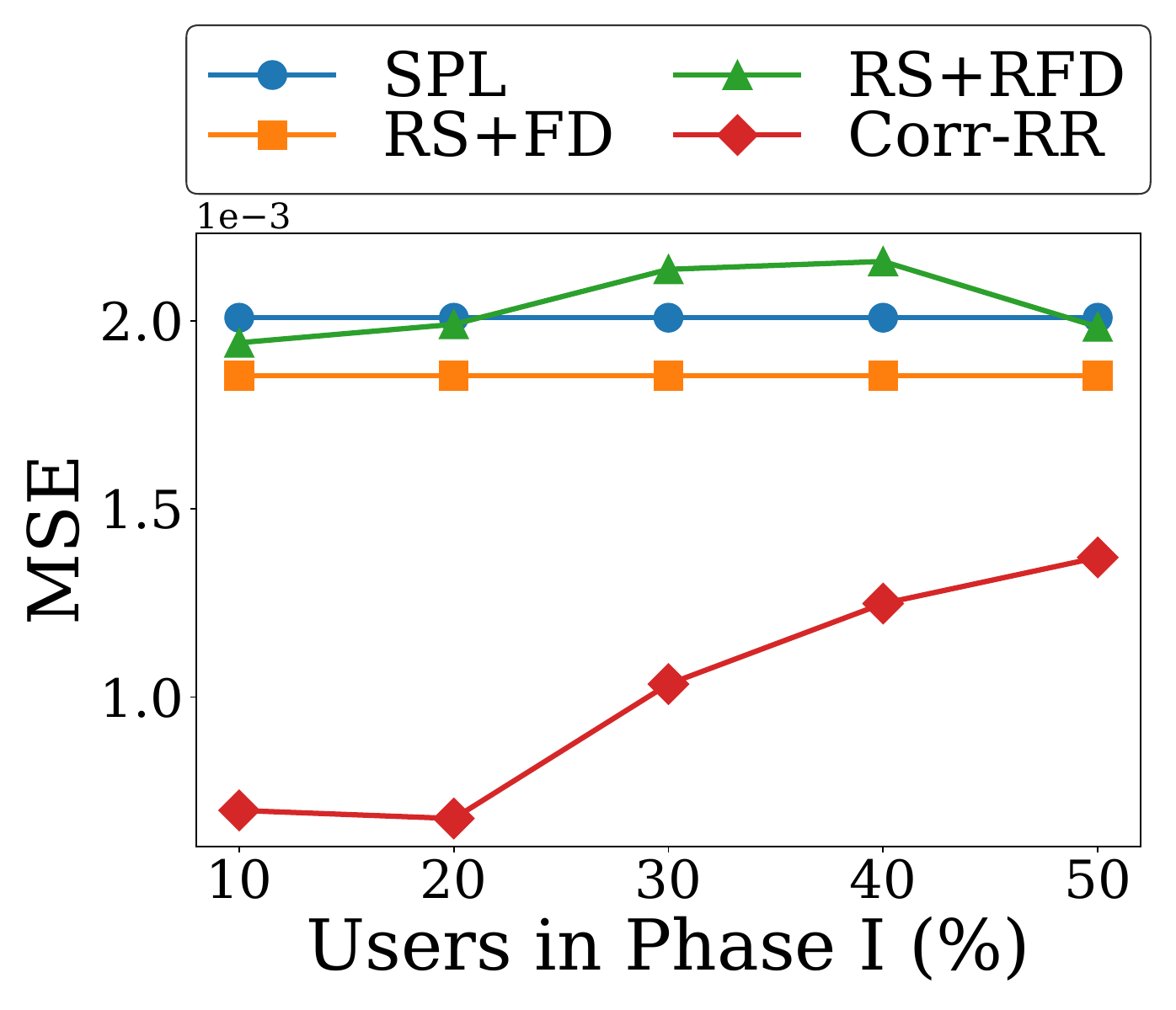}     
    \subcaption{$\epsilon = 0.5$}
    \label{fig:mse_vs_phase_fraction_n_20k_eps05}
  \end{minipage}\hfill

\caption{
MSE vs.\ percentage of Phase~I users ($n_1/n$) on SynA with $d=2$, $|\mathcal{D}|=4$, and $n=20{,}000$. 
Subplots correspond to $\epsilon=0.1$, $0.3$, and $0.5$. 
SPL and RS+FD (single-phase) remain constant across $n_1$, while RS+RFD and Corr-RR (two-phase) vary.
}
\Description{Three line plots showing mean squared error as a function of the fraction of Phase I users n1 over n on the SynA dataset. Each subplot corresponds to a different privacy budget epsilon. Single-phase methods remain constant across n1, while two-phase methods vary with the Phase I allocation.}

  \label{fig:mse_vs_phase_fraction_n_20k}
\end{figure*}

\begin{figure*}[ht!]

  \centering

  \begin{minipage}{.32\textwidth}
   
    \includegraphics[width=\linewidth]{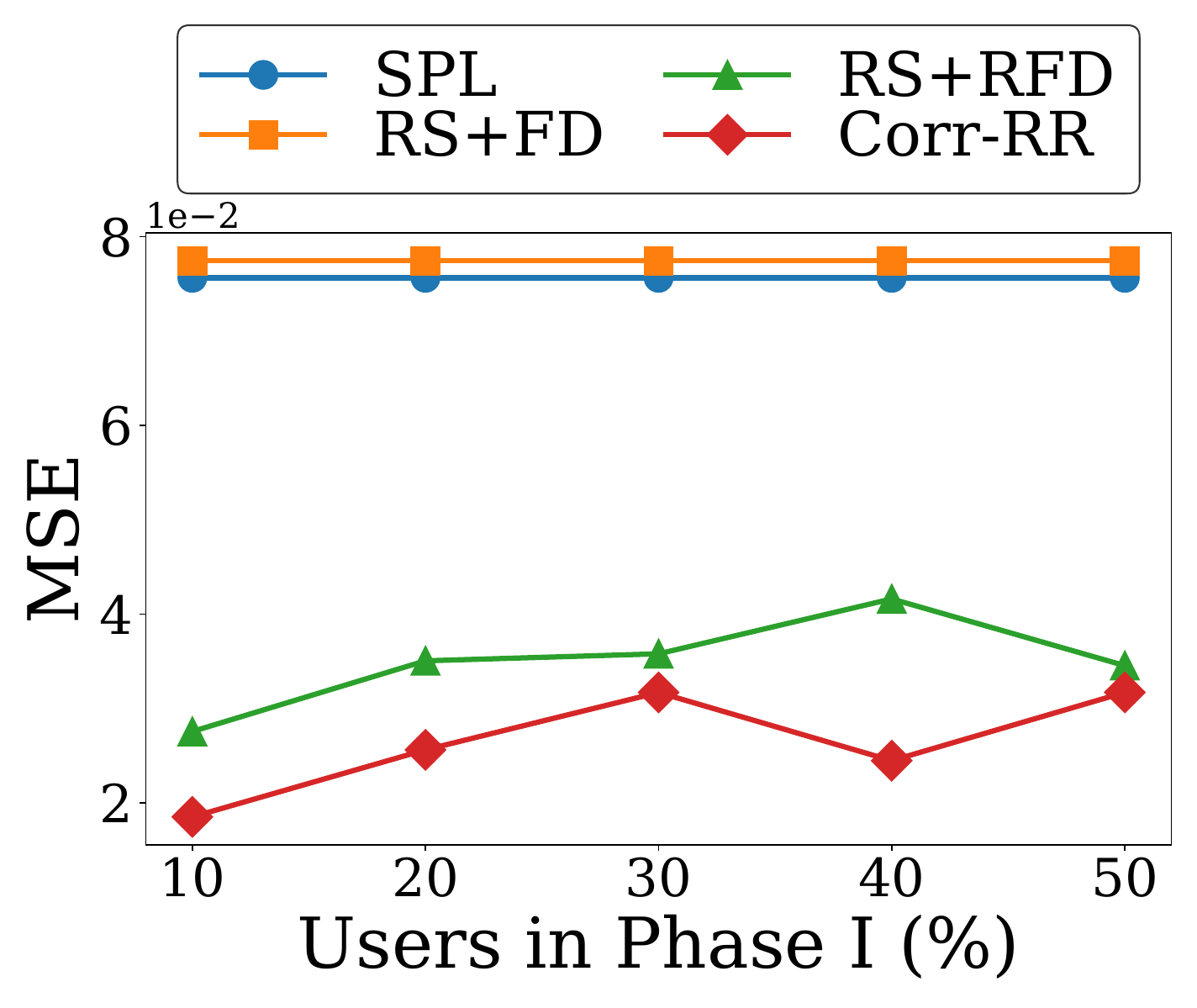} 
    \subcaption{$\epsilon = 0.1$}
    \label{fig:mse_vs_phase_fraction_n_20k_eps01_prog}
  \end{minipage}\hfill
  \begin{minipage}{.32\textwidth}
   \includegraphics[width=\linewidth]{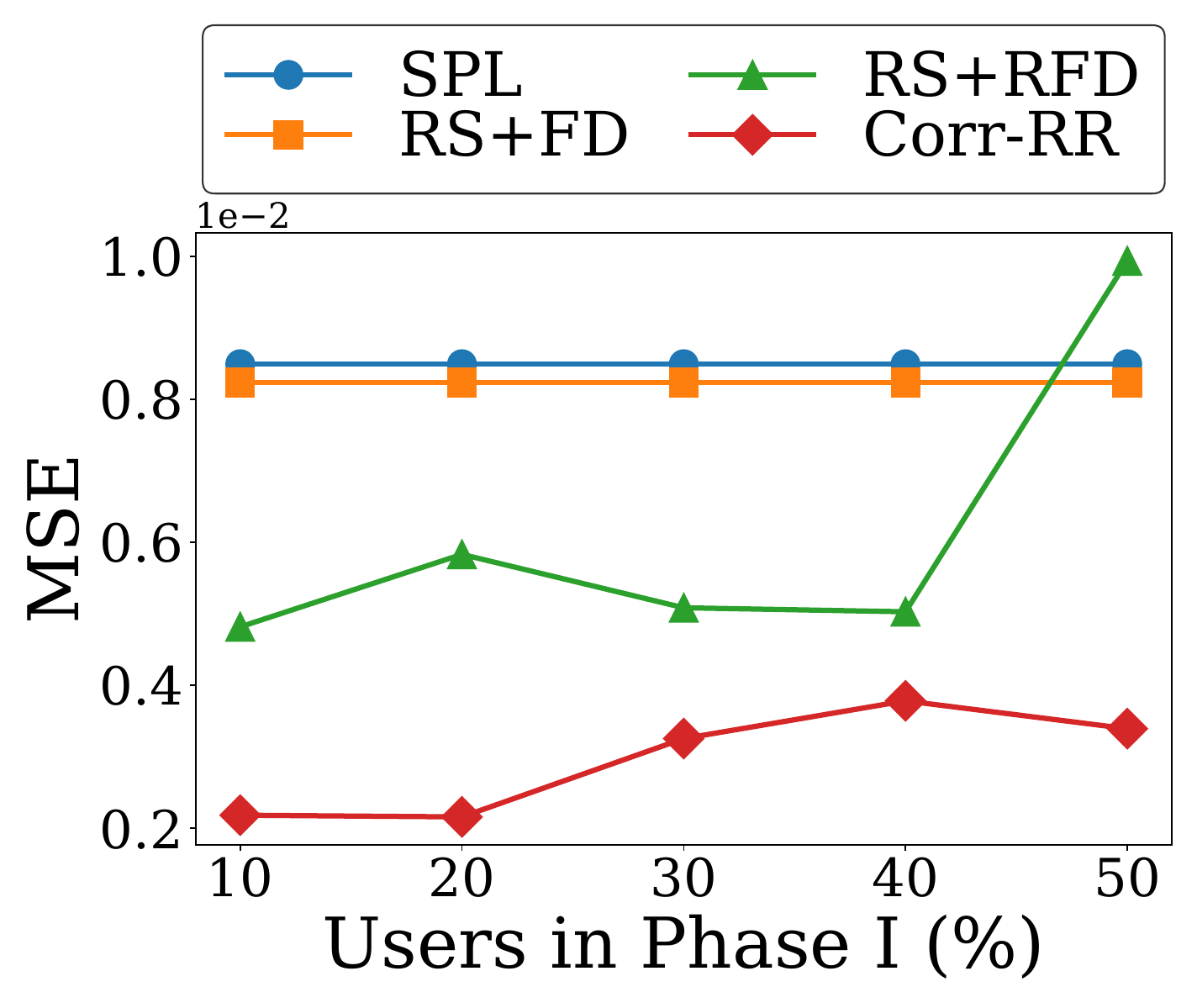} 
    \subcaption{$\epsilon = 0.3$}
    \label{fig:mse_vs_phase_fraction_n_20k_eps03_prog}
  \end{minipage}\hfill
  \begin{minipage}{.32\textwidth}
  \includegraphics[width=\linewidth]{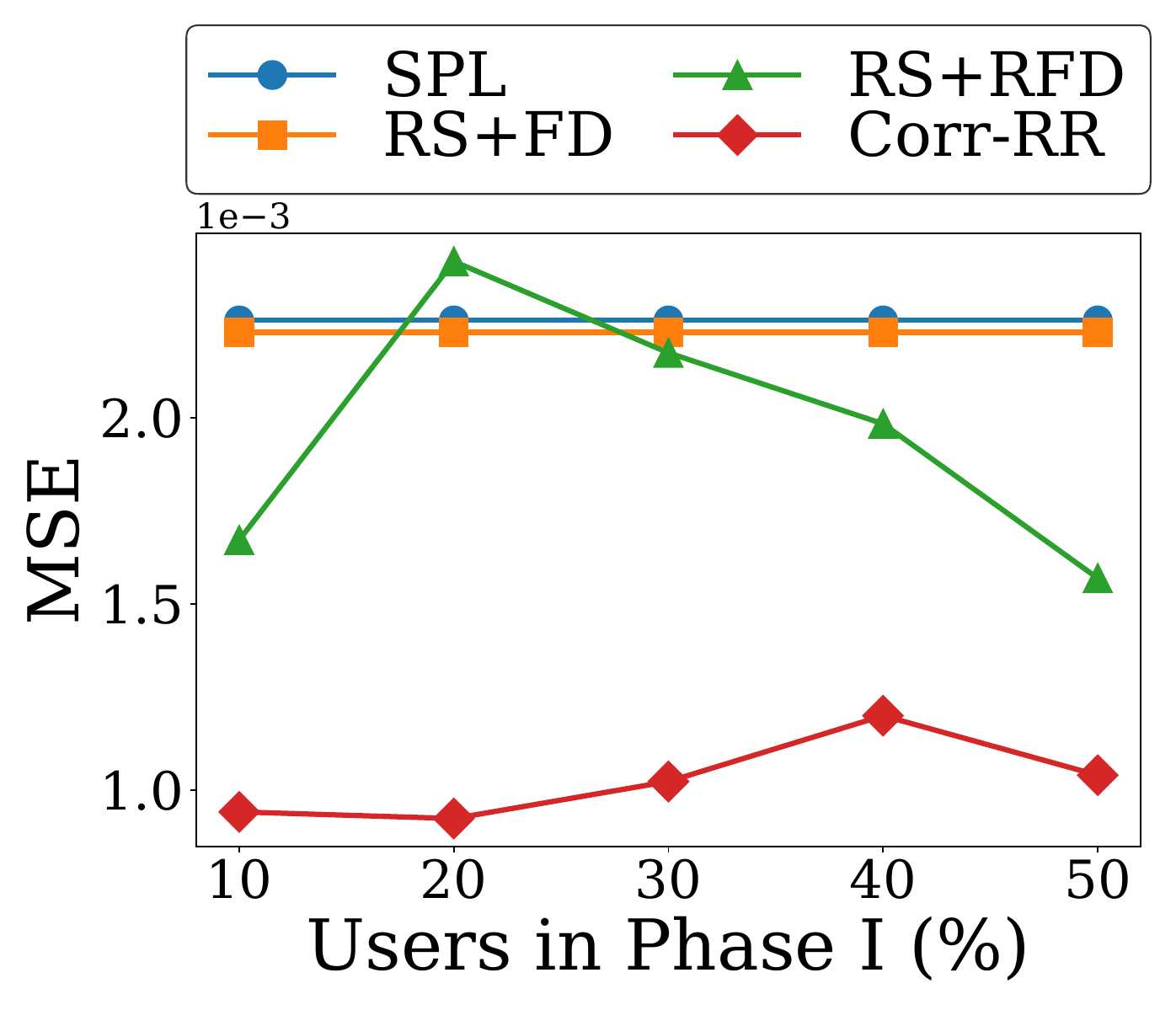}    
    \subcaption{$\epsilon = 0.5$}
    \label{fig:mse_vs_phase_fraction_n_20k_eps05_prog}
  \end{minipage}\hfill

\caption{
MSE vs.\ percentage of Phase~I users ($n_1/n$) on SynB with $d=2$, $|\mathcal{D}|=4$, and $n=20{,}000$. 
Subplots correspond to $\epsilon=0.1$, $0.3$, and $0.5$. 
SPL and RS+FD (single-phase) remain constant across $n_1$, while RS+RFD and Corr-RR (two-phase) vary.
}
\Description{Three line plots showing mean squared error as a function of the fraction of Phase I users n1 over n on the SynB dataset. Each subplot corresponds to a different privacy budget epsilon. Single-phase methods remain constant across n1, while two-phase methods vary with the Phase I allocation.}
  \label{fig:mse_vs_phase_fraction_n_20k_prog}
\end{figure*}

We report the impact of correlation strength $\rho$ on the accuracy of four LDP mechanisms, SPL, RS+FD, RS+RFD, and Corr-RR, on SynA (Figure~\ref{fig:correlation_Effect}) and SynB (Figure~\ref{fig:correlation_Effect_prog}) with domain size $|\mathcal{D}|=4$, $n=20{,}000$, and privacy budget $\epsilon=0.5$. Each subplot corresponds to a different number of attributes $d \in \{2,3,4\}$.

Across all settings, the baseline mechanisms show almost flat curves, with the slight fluctuations arising from the limited number of trials. This confirms that these methods cannot exploit statistical dependencies between attributes: their estimation error is essentially insensitive to correlation. In contrast, Corr-RR demonstrates a clear downward trend in MSE as the correlation increases. This validates the design principle: when attributes are more strongly related, Corr-RR reuses privatized pivot values more effectively, thereby suppressing noise in non-pivot attributes. For example, at $d=4$ and high correlation, Corr-RR achieves more than a 60\% reduction in error relative to the strongest baseline, and even at weak correlation, it provides around a 50\% reduction, demonstrating its robustness in high-dimensional settings. However, Corr-RR is not universally superior. In very low-correlation, low-dimensional scenarios (e.g., $d=2$ and $\rho=0.1$), Corr-RR performs worse than RS+RFD. This arises from how Corr-RR estimates the reuse probability $p_y$. Phase~I perturbs all attributes with budget $\epsilon/d$, which introduces considerable noise at low $\epsilon$. When many values are randomized, the estimated marginals become unreliable, weakening the correlation signal that guides the optimizer. As a result, $p_y$ can be misestimated, leading to underuse or overuse of pivot reuse and inflated error relative to correlation-agnostic baselines. This effect intensifies as $d$ grows, because larger dimensionality reduces the per-attribute Phase~I budget. Once correlations reach moderate strength ($\rho \geq 0.3$), the true dependency structure dominates the Phase~I noise. In this regime, Corr-RR reliably selects effective reuse probabilities and consistently reduces error by substantial margins compared to all baseline mechanisms. Overall, these results underscore two insights: correlation influences utility, and Corr-RR is uniquely capable of leveraging stronger dependencies into significant reductions in error. Its slight disadvantage in the low-correlation regime clarifies that Corr-RR is most impactful in realistic settings where attributes exhibit genuine dependencies and datasets contain multiple attributes.

\subsubsection{Impact of the Size of Phase I Users}\label{split_analysis}

\begin{figure*}[h!]
  \centering

  \begin{minipage}{.32\textwidth}
    
    \includegraphics[width=\linewidth]{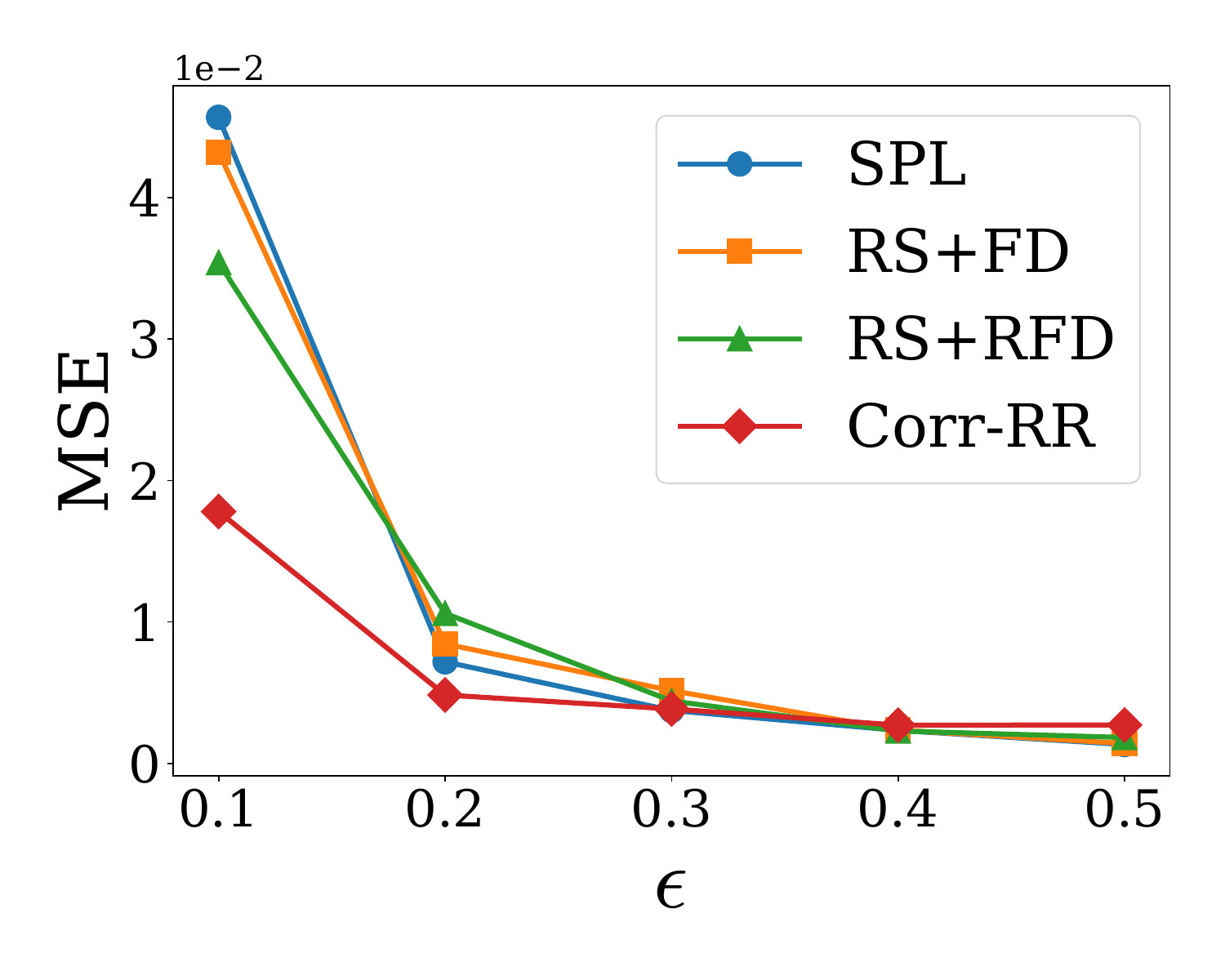}

    \subcaption{Clave}
    \label{fig:mse_eps_clave}
  \end{minipage}\hfill
  \begin{minipage}{.32\textwidth}
   \includegraphics[width=\linewidth]{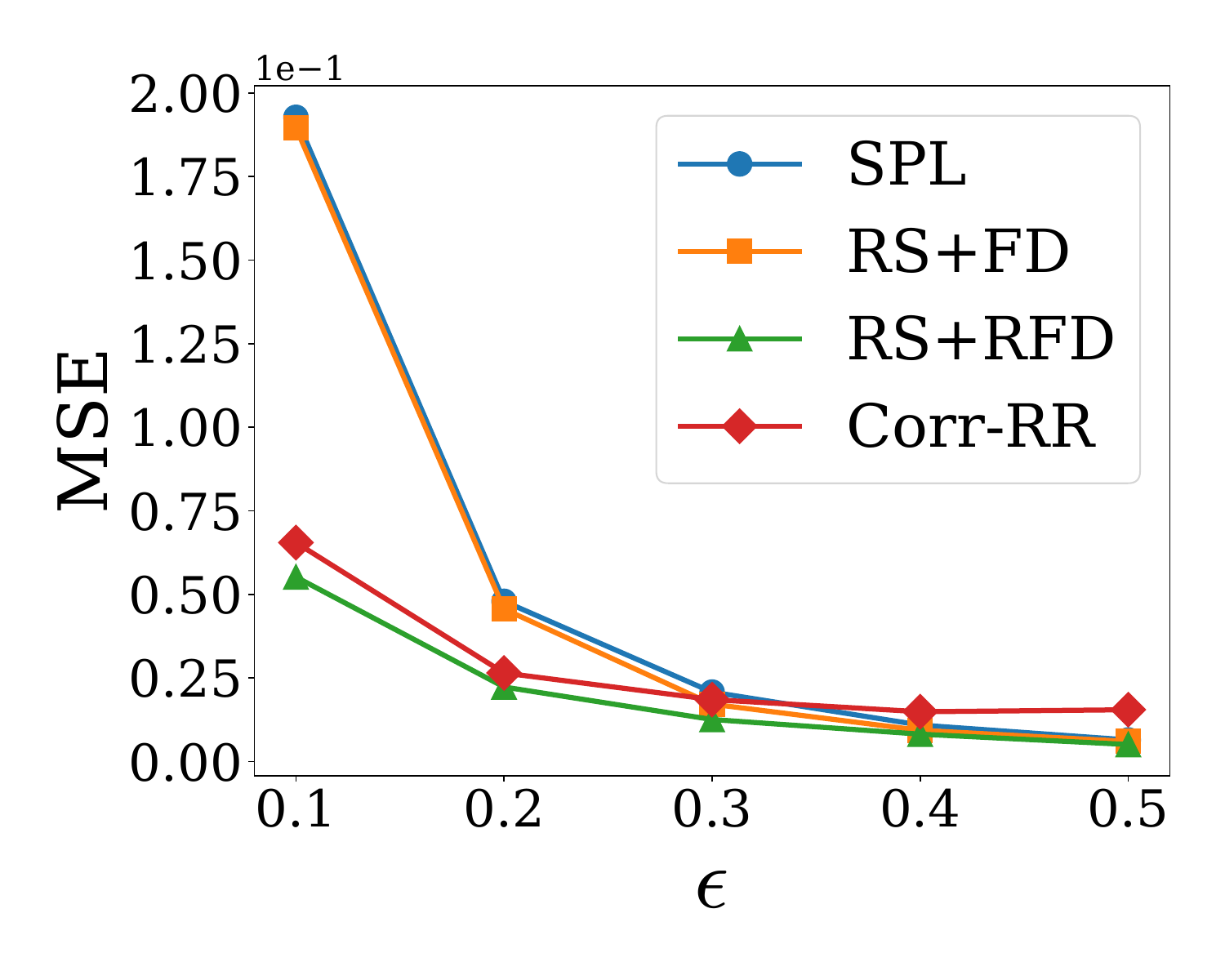}
        \subcaption{Mushroom}
    \label{fig:mse_eps_mushroom}
  \end{minipage}\hfill
  \begin{minipage}{.32\textwidth}
   \includegraphics[width=\linewidth]{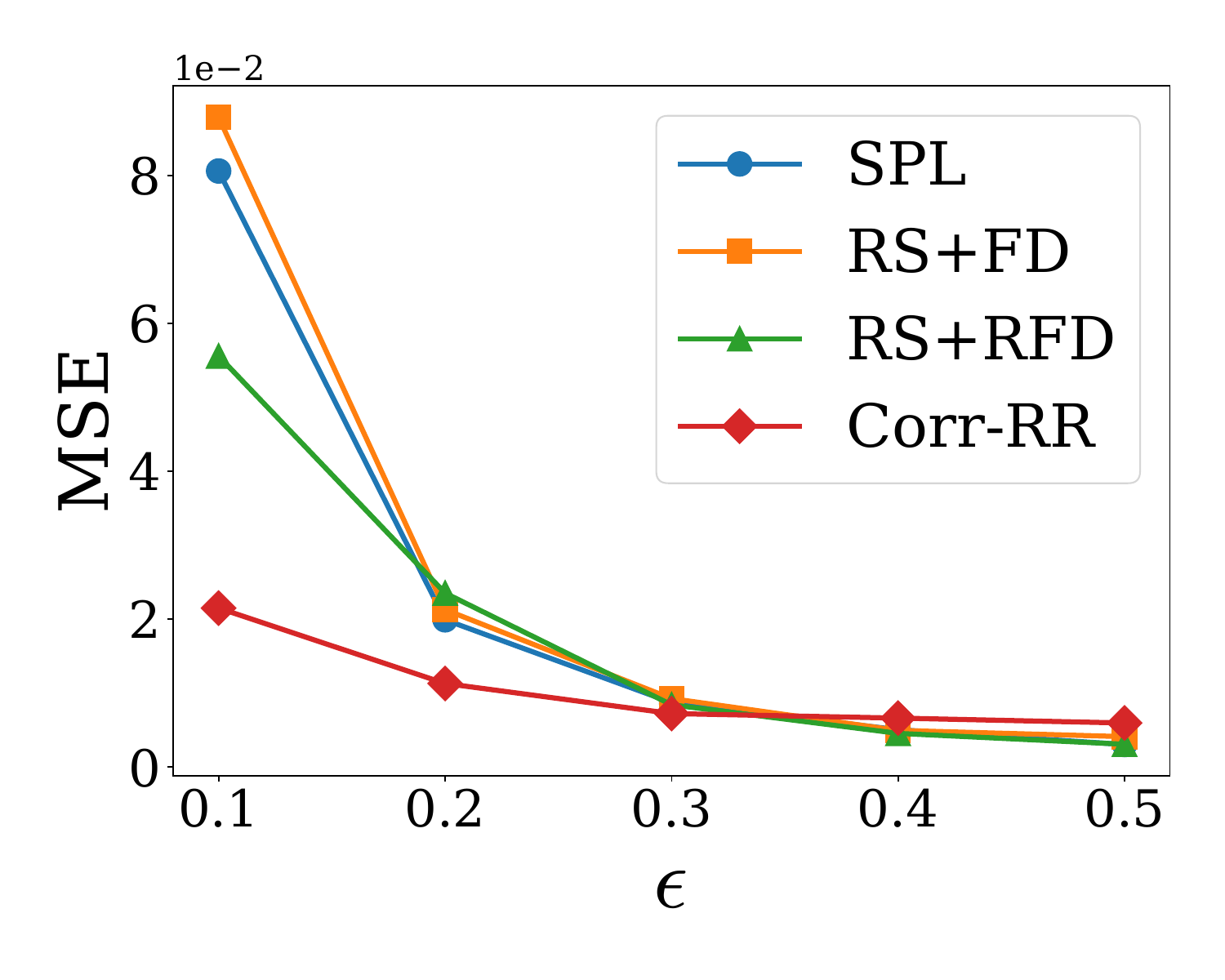}
       
    \subcaption{Adult}
    \label{fig:mse_eps_adult}
  \end{minipage}\hfill

\caption{
MSE vs.\ privacy budget $\epsilon$ for four LDP mechanisms on three real-world datasets: 
(\subref{fig:mse_eps_clave}) \textit{Clave}, 
(\subref{fig:mse_eps_mushroom}) \textit{Mushroom}, 
(\subref{fig:mse_eps_adult}) \textit{Adult}.
}
\Description{Line plots comparing mean squared error across different privacy budgets for multiple mechanisms on a real-world dataset.}

  \label{fig:mse_eps_realworld}
\end{figure*}

In Figures~\ref{fig:mse_vs_phase_fraction_n_20k} and~\ref{fig:mse_vs_phase_fraction_n_20k_prog}, we evaluate how the fraction of Phase~I users ($n_1/n$) affects the estimation accuracy in the two-phase framework on the synthetic datasets \textit{SynA} and \textit{SynB}. Each experiment uses $n = 20{,}000$ users with $d = 2$ attributes and domain size $|\mathcal{D}| = 4$, under privacy budgets $\epsilon \in \{0.1, 0.3, 0.5\}$.

As expected, the single-phase baselines (SPL and RS+FD) remain constant across different $n_1/n$, since they do not rely on a Phase~I/Phase~II split. In contrast, both the two-phase implementation of RS+RFD and our Corr-RR mechanism exhibit varying performance with the fraction of Phase~I users. Corr-RR consistently achieves the lowest MSE in the large-scale setting ($n=20{,}000$) for all privacy budgets and its accuracy is highest when only a small subset of users (approximately 10–20\%) participates in Phase~I. This aligns with the design intuition: Phase~I requires only enough users to obtain coarse but reliable marginal estimates, while allocating the majority of users to Phase~II maximizes the value of full-budget perturbation. At $\epsilon=0.1$, Corr-RR reduces error by more than 40\% relative to RS+RFD and achieves more than a $3\times$ improvement over SPL; this advantage persists for $\epsilon=0.3$ and $\epsilon=0.5$, demonstrating robustness across privacy regimes.

For a more fine-grained analysis, we also include separate tables for RS+RFD and Corr-RR showing detailed results for Phase~I subset sizes ranging from 5\% to 50\% of the users (in 5\% increments), provided in Tables~\ref{tab:phase1_n_200_star}–\ref{tab:phase1_n_20000_prog} in Appendix~\ref{Appendix2}. These results reveal clear and consistent trends across dataset sizes. When the population is large ($n = 20{,}000$; Table~\ref{tab:phase1_n_20000_star} and Table~\ref{tab:phase1_n_20000_prog}), Corr-RR is uniformly the best-performing mechanism for all privacy budgets: its lowest MSE always occurs with only 5–10\% Phase~I users (1,000–2,000 users), and additional Phase~I users do not yield further improvements. With this many users, the privatized marginals produced in Phase~I are already accurate enough for Corr-RR to reliably estimate the dependency parameter $p_y$. At intermediate scale ($n = 2{,}000$; Table~\ref{tab:phase1_n_2000_star} and Tables~\ref{tab:phase1_n_2000_prog}), Corr-RR again performs strongly when the privacy budget is moderate or high ($\epsilon \ge 0.3$), achieving its lowest error also with 5–10\% Phase~I users (100–200 users). Under tight privacy ($\epsilon = 0.1$), however, even the largest Phase~I setting (50\%, i.e., 1,000 users) cannot overcome the GRR noise, and Corr-RR does not outperform RS+RFD. At small scale ($n = 200$; Table~\ref{tab:phase1_n_200_star} and Table~\ref{tab:phase1_n_200_prog}), even when Phase~I includes 50\% of users (i.e., 100 users), the privatized marginals remain too noisy under $\epsilon \in \{0.1, 0.3\}$ for Corr-RR to estimate $p_y$ reliably; only at $\epsilon = 0.5$ does the noise level become low enough in Phase~I for a reliable estimate of $p_y$ for Corr-RR to become competitive. These results collectively demonstrate that the utility of Corr-RR is shaped by the interaction between $n$, $n_1$, and $\epsilon$.

\textit{Guidance on selecting subset size.} The empirical results above provide practical guidance on choosing the Phase~I fraction. In large-scale deployments ($n \ge 20{,}000$), a modest Phase~I subset of 5--10\% consistently provides stable marginal estimates and achieves the best MSE across privacy budgets. At intermediate scales ($n \approx 2{,}000$), this same 5--10\% range remains effective whenever the privacy budget is moderate or high ($\epsilon \ge 0.3$). At very small scales ($n = 200$), the reliability of Phase~I marginals becomes highly sensitive to $\epsilon$, and Corr-RR is effective only when the privacy budget is sufficiently large (e.g., $\epsilon = 0.5$). These guidelines are derived from experiments with $d = 2$ attributes and domain size $|\mathcal{D}| = 4$. For larger attribute dimensionality $d$, the Phase~I privacy budget must be divided across more attributes, increasing noise; similarly, larger domain sizes require more Phase~I users to achieve reliable marginal estimates. Larger domains or higher dimensionality would require proportionally larger Phase~I subsets to obtain sufficiently accurate marginal estimates to determine $p_y$ effectively. Nevertheless, 5--10\% serves as a robust default for large-$n$ or moderate-$\epsilon$ settings when $d=2$ and $|\mathcal{D}|=4$.

\subsection{Results on Real-world Data}

Figures~\ref{fig:mse_eps_clave} to \ref{fig:mse_eps_adult} show the MSE of four LDP mechanisms, SPL, RS+FD, RS+RFD, and Corr-RR, on three real-world datasets, Clave, Mushroom, and Adult, as the privacy budget~$\epsilon$ varies.
Across all datasets, MSE decreases monotonically as the privacy budget $\epsilon$ increases, since a larger $\epsilon$ allows users to report their true values with higher probability, thereby improving data utility and reducing MSE. 

Corr-RR achieves the lowest error on both \textit{Clave} and \textit{Adult}, outperforming baselines, particularly at smaller~$\epsilon$. On the \textit{Mushroom} dataset, however, Corr-RR shows comparatively weaker performance. 
Although it still outperforms SPL and RS+FD under tighter privacy budgets ($\epsilon \leq 0.2$), it falls behind RS+RFD with a slightly larger MSE. This behavior can be attributed to two key factors.
\textit{(1)~Low correlation:} The \textit{Mushroom} dataset shows the weakest inter-attribute correlation ($0.209$) compared to \textit{Clave} ($0.298$) and \textit{Adult} ($0.43$–$0.95$). 
\textit{(2)~Marginal imbalance:} As visualized in Figure~\ref{fig:relative_freq_realdata} in Appendix~\ref{Appendix2}, compared to the \textit{Clave} and \textit{Adult},  the marginal distributions of attributes in \textit{Mushroom} are highly skewed (e.g., X$_1$ is right-skewed, X$_2$ left-skewed) and dominated by one or two frequent category values. Recall that in Corr-RR Phase~II, $p_y$ is determined by minimizing the average MSE rather than MSE of the estimator for each value, making the frequency estimation in Phase~II less accurate for the highly skewed attributes and resulting in a slightly higher MSE. 
Overall, these results suggest that Corr-RR provides the largest gains when Phase~I can capture meaningful dependencies and Phase~II can reuse them without being dominated by a few frequent categories. 
In datasets with only modest correlation and substantial skew, the average-MSE objective can be driven by dominant values, reducing the benefit of dependency-guided synthesis for rarer categories.


\subsection{Summary of Findings}

Our evaluation yields the following key insights:
\begin{itemize}
\item \textbf{Improved utility.} Corr-RR consistently outperforms all baselines under low to moderate privacy budgets, reducing MSE by up to 80\% in some settings. As privacy loosens, the performance gap narrows—since all mechanisms benefit from reduced noise—but Corr-RR retains a steady advantage across the privacy budget $\epsilon$.

\item \textbf{Robust scalability.} Corr-RR alleviates the severe accuracy degradation typically observed as the number of attributes grows. The improvement becomes more striking as dimensionality increases, confirming that Corr-RR is well-suited for modern high-dimensional datasets.

\item \textbf{Correlation-driven gains.} Corr-RR’s advantage grows with stronger inter-attribute dependencies. Even modest correlations lead to measurable improvements under tight privacy, while strong correlations yield the largest reductions in error. This validates Corr-RR’s core principle: leveraging statistical structure to suppress noise and recover more accurate estimates. 
\end{itemize}
These findings establish Corr-RR as a practical and scalable solution for multi-attribute data under LDP.

\section{Related Work}\label{sec:related}

\subsection{Differential Privacy for Correlated Data.} Differential Privacy (DP) protects individual privacy during data analysis by ensuring that query outputs minimally reveal specific individual data, often by integrating noise into the results~\cite{dwork2006calibrating, dwork2006our, TCS-042}. However, the effectiveness of standard DP is challenged by correlated data, which introduces privacy vulnerabilities due to data interdependencies~\cite{kifer2011no}. To address these issues, the Pufferfish privacy model enhances the DP framework to better handle such correlations~\cite{kifer2014pufferfish}. Further innovations, such as the Wasserstein and Markov Quilt Mechanisms within the Pufferfish framework \cite{song2017pufferfish}, and the Blowfish Framework \cite{he2014blowfish}, have advanced DP's application to correlated data contexts. Nonetheless, these advancements are mainly applicable in centralized settings involving a trusted data collector and have not yet been adapted for use in local settings with untrusted collectors.

\subsection{Local Differential Privacy} 
 Recent advancements have seen a notable shift toward Local Differential Privacy (LDP)~\cite{duchi2013local,duchi2014privacy, zheng2025optimal}, a model that grants individuals complete control over their data, eliminating the need to trust the aggregator. Widely adopted in both academic research and industrial applications~\cite{erlingsson2014rappor, team2017learning, wang2019collecting, ding2017collecting, zheng2025locally}, LDP is intrinsically linked to Randomized Response techniques~\cite{warner1965randomized}. Among many other complex tasks (e.g., heavy hitter estimation~\cite{bun2019heavy, qin2016heavy}, estimating marginals~\cite{ren2018textsf, erlingsson2014rappor, peng2019privacy}, time-series data~\cite{wang2020towards}, frequent itemset mining~\cite{wang2018locally, li2012privbasis}, key-value pair analysis~\cite{ye2019privkv,gu2020pckv}), frequency estimation is a fundamental task in LDP and has received considerable attention for a single attribute. A prominent implementation of LDP is Google’s RAPPOR~\cite{erlingsson2014rappor}, which has been successfully integrated into the Chrome browser. RAPPOR is distinguished by its dual-layer defense against windowed attacks and its use of Bloom filters~\cite{bloom1970space}. Additionally, techniques such as Unary Encoding (OUE), Optimal Local Hashing (OLH), and Hadamard Response have been developed to further optimize utility within this framework~\cite{acharya2019hadamard, wang2017locally}. Note that all of the above methods focus on LDP on a single attribute.

\subsection{LDP for Multi-attribute Data}
While most of the works on multi-attribute data focused on numerical data~\cite{nguyen2016collecting, wang2019collecting, wang2021local}, frequency estimation on multi-attribute data is less explored. This is due to the constraints imposed by the composition theorem, as the budget rapidly depletes for multi-attribute datasets. To mitigate the curse of dimensionality, the LoPub algorithm leverages attribute correlation~\cite{ren2018textsf}. Domingo-Ferrer and Soria-Comas proposed a method that groups correlated attributes based on dependencies and categorical combinations, and then applies Randomized Response (RR) collectively to each cluster, improving the accuracy of the estimation ~\cite{domingo2020multi}. Arcolezi et al. have proposed RS+FD and RS+RFD that generate fake data for unsampled attributes, uniformly and nonuniformly, respectively, while the selected attribute receives the full privacy budget~\cite{arcolezi2021random, arcolezi2022risks}. Additionally, a recent study by Du et al. 
introduced a correlation-bounded perturbation mechanism that
quantifies and utilizes inter-attribute correlations and optimizes
partitioning of the privacy budget for each attribute in multi-attribute scenarios~\cite{du2021collecting}. Our approach distinctly diverges from existing methods by leveraging correlations to indirectly perturb attributes, which significantly improves the accuracy of frequency estimation.

\section{Conclusion}\label{sec:conclusion}

We addressed the challenge of multi-attribute frequency estimation under Local Differential Privacy (LDP), where existing solutions either dilute the privacy budget across attributes or rely on synthetic imputations that ignore inter-attribute dependencies. To overcome these limitations, we introduced a general two-phase framework that separates \emph{dependency learning} from \emph{dependency-aware reporting}, enabling correlation-guided reconstructions derived entirely from privatized data. Under the two-phase framework, we developed \textbf{Correlated Randomized Response (Corr-RR)}, which perturbs a single pivot attribute with the full privacy budget and synthesizes the remaining attributes through probabilistic mappings informed by Phase~I statistics. Corr-RR rigorously satisfies $\epsilon$-LDP, and its parameters are chosen to minimize a closed-form approximation of mean squared error, ensuring accuracy while preserving strong privacy guarantees. Under our two-phase framework, we also briefly discuss Conditional Randomized Response (Cond-RR), which extends the idea of correlation-aware synthesis by
using conditional distributions rather than pairwise probabilities. Comprehensive experiments on synthetic and real-world datasets show that Corr-RR consistently improves estimation accuracy over state-of-the-art baselines, with the largest gains observed in high-dimensional and strongly correlated settings. These results highlight the effectiveness of exploiting inter-attribute correlations under LDP and demonstrate the practical viability of our two-phase approach for privacy-preserving data collection.


\begin{acks}
We thank the reviewers and the revision editor for their insightful feedback and guidance, which greatly strengthened this paper. We used GPT-4 as a writing assistance tool for language refinement. This work was partially supported by the U.S. National Science Foundation under grants CNS-2245689 (CRII) and Meta Research Award in Privacy-Enhancing Technologies.
\end{acks}

\bibliographystyle{ACM-Reference-Format}

\bibliography{main}

\section{Appendix }

\subsection{Proof of Theorem~\ref{thm:MSE_categorical_general}}\label{appendix:thm:mse} 
\begin{proof}
Let \(n_2\) be the number of users participated in Phase II and \(d=2\) the number of attributes. Each attribute has categorical value with domain $\{1,2,\cdots,k\}$.
Define GRR with parameters
\[
p=\frac{e^\epsilon}{e^\epsilon+k-1},\qquad
q=\frac{1}{e^\epsilon+k-1},\qquad
\Delta=p-q=\frac{e^\epsilon-1}{e^\epsilon+k-1}.
\]
Also let $s$ be the attribute selected for perturbation using GRR, and $Z$ an indicator random variable of the selected attribute is $j$-th targeted attribute, i.e., $Z=1$ indicates $s=j$, otherwise $Z=0$ indicates $s\neq j$. 

We first calculate probability of an specific attribute \(j\) with value \(v\in\{1,\dots,k\}\) by
 \begin{equation}\label{eq:pra=1}
\begin{split}
        \Pr(Y_j=v)& = \Pr(Z=1)\cdot \Pr(Y_j=v|Z=1)\\ &\quad +\Pr(Z=0)\cdot \Pr(Y_j=v|Z=0) \\
        &=\frac{1}{2} \Pr(Y_j=v|Z=1)+\frac{1}{2} \Pr(Y_j=v|Z=0)
\end{split}
\end{equation}

If $Z=1$, the attribute $j$ is selected for perturbation using GRR, we have 
\[
\Pr(Y_j=v|Z=1)=(pf_j(v)+q(1-f_i(v))=q+\Delta f_j(v)\;,
\]
where $f_j(v)$ is the true frequency of attribute $j$ with value $v$. 

If $Z=0$, the other attribute is selected, i.e., $s\neq i$, the user first perturbs $x_s$ with GRR to get $Y_s=u$; then report $Y_j$ by randomly perturbing $y_s$, described by
\[
\Pr(Y_i=v \mid Y_s=u)=
\begin{cases}
p_y, & u=v,\\[4pt]
\dfrac{1-p_y}{k-1}, & u\neq v,
\end{cases}
\qquad p_y\in[0,1].
\]
It follows that
\begin{equation}\label{pr: y1=v,z1=0}
\begin{split}
&\Pr(Y_j=v|Z=0)\\
&=\Pr(Y_j=v|Z=0,Y_s=v)\Pr(Y_s=v|Z=0)\\
&\quad+\Pr(Y_j=v|Z=0,Y_s\neq v)\Pr(Y_s\neq v|Z=0)\\
\end{split}
\end{equation}

Note that, we have
\[
\begin{split}
&\Pr(Y_j=v|Z=0,Y_s=v)=p_y\\
&\Pr(Y_s=v|Z=0)=pf_s(v)+q(1-f_s(v))=q+\Delta f_s(v)\\
&\Pr(Y_s\neq v|Z=0)=1-\Pr(Y_s=v|Z=0)=p-\Delta f_s(v)\;,\\
\end{split}
\]
and
\[
\begin{split}
    &\Pr(Y_j=v|Z=0,Y_s\neq v)\Pr(Y_s\neq v|Z=0)\\
    &= \sum_{u \neq v} \Pr(Y_j=v|Z=0,Y_s=u)\Pr(Y_s=u|Z=0)\\
    &=\frac{1-p_y}{k-1} \sum_{u \neq v} \Pr(Y_s=u|Z=0)\\
    &=\frac{1-p_y}{k-1} (1-\Pr(Y_s=v|Z=0))
\end{split}
\]
According to Eq.~(\ref{pr: y1=v,z1=0}),
we have 
\[
\begin{split}
    \Pr(Y_j=v|Z=0)&=p_y(q+\Delta f_s(v))+\frac{1-p_y}{k-1}(p-\Delta f_s(v))\\
\end{split}
\]
Substitute the above $\Pr(Y_j=v|Z=0)$ and $\Pr(Y_j=v|Z=1)$ into Eq.~(\ref{eq:pra=1}), we have
\begin{equation}\label{eq:pi-v}
    \Pr(Y_j=v)=\frac{1}{2}[q+\Delta f_j(v)]+\frac{1}{2}[p_y t+\frac{1-p_y}{k-1}(1-t)]
\end{equation}
where $t=q+\Delta f_s(v)$.
Also let 
\[
a:=p_y-\frac{1-p_y}{k-1}=\frac{k p_y-1}{k-1}, \text{ and } \pi_v:=\Pr(Y_j=v).
\]
Using $1-kq=p-q$, we can rewrite \eqref{eq:pi-v} as
\begin{equation}\label{eq:pi-affine}
\pi_v \;=\; q + (p-q)\,\mu_v, \text{ where }
\mu_v
= \tfrac12\![f_j(v)+\frac{1-p_y}{k-1}+a\,f_s(v)].
\end{equation}

\textbf{Estimator, bias, variance, and MSE.}
Over $n_2$ i.i.d.\ users, let
\[
I_a(v):=\sum_{m=1}^{n_2} \mathbf 1\{Y_j^{(m)}=v\}\sim \mathrm{Binomial}\bigl(n_2,\pi_v\bigr).
\]
Consider the standard GRR debiasing estimator
\[
\widehat f_j(v)=\frac{I_a(v)-nq}{n_2(p-q)}=\frac{I_a(v)-n_2q}{n_2\Delta}.
\]

Then
\[
\mathbb{E}\big[\widehat f_i(v)\big]
= \frac{\pi_v-q}{\Delta}
= \mu_v,
\]
\[
\mathrm{Bias}\big[\widehat f_j(v)\big]
= \mu_v - f_j(v)
= \tfrac12\!\left[\frac{1-p_y}{k-1} + a\,f_s(v) - f_j(v)\right].
\]
Moreover,
\[
\mathrm{Var}\big[\widehat f_j(v)\big]
= \frac{\pi_v\bigl(1-\pi_v\bigr)}{n_2\,\Delta^2}
= \frac{\bigl(q+\Delta\,\mu_v\bigr)\bigl(1-q-\Delta\,\mu_v\bigr)}{n_2\,\Delta^2}.
\]
Hence the mean-squared error is
\[
\boxed{\;
\mathrm{MSE}\big[\widehat f_j(v)\big]
= \frac{\bigl(q+\Delta\,\mu_v\bigr)\bigl(1-q-\Delta\,\mu_v\bigr)}{n_2\,\Delta^2}
\;+\; \Big(\mu_v - f_j(v)\Big)^2,
\;}
\]
with $\mu_v$ given in \eqref{eq:pi-affine} and $\Delta=\dfrac{e^\epsilon-1}{e^\epsilon+k-1}$.
\end{proof}

\subsection{The Proof of Proposition \ref{thm:py_categorical_general}}\label{appendix:thm:py}
\begin{proof}

We start by focusing on the per-value MSE for estimating $f^{II}_j(v)$, which is given by
\begin{equation}\label{eq:MSE-mu}
\mathrm{MSE}\big[\widehat f_j(v)\big]
= \frac{(q+\Delta\mu_v)\bigl(1-q-\Delta\mu_v\bigr)}{n\,\Delta^2}
+ \bigl(\mu_v-f_j(v)\bigr)^2.
\end{equation}

We first rewrite $\mathrm{MSE}\big[\widehat f_j(v)\big]$ as a function of $p_y$. Specifically, 

we expand the first term,
\[
\frac{(q+\Delta\mu_v)(1-q-\Delta\mu_v)}{n_2\Delta^2}
= \frac{q(1-q)}{n_2\Delta^2} + \frac{1-2q}{n_2\Delta}\,\mu_v - \frac{1}{n_2}\,\mu_v^2.
\]
and the second term,
\[
\bigl(\mu_v-f_j(v)\bigr)^2=\mu^2_v-2f_j(v)\mu_v+f_j^2(v)
\]
resulting in the quadratic form
\begin{equation}\label{eq:MSEv-quadratic-in-mu}
\mathrm{MSE}_v(\mu_v) = \alpha\,\mu_v^2 + \beta_v\,\mu_v + \gamma_v.
\end{equation}
where $\alpha=1-\frac{1}{n_2}$, $\beta_v=-2f_j(v)+\frac{1-2q}{n_2\Delta}$, and $\gamma_v=f_j(v)^2+\frac{q(1-q)}{n_2\Delta^2}$.

Also note that from Eq.(~\ref{eq:pi-affine}), we have
\[
\mu_v\;=\; \tfrac12\![f_j(v)+\frac{1-p_y}{k-1}+\frac{k p_y-1}{k-1}\,f_s(v)]
\]
which could be rewritten as $\mu_v \;=\; C_0(v)+C_1(v)\,p_y$,
where
\[
C_0(v)=\tfrac12\!\left[f_j(v)+\frac{1-f_s(v)}{k-1}\right], \text{ and }
C_1(v)=\tfrac12\!\left[\frac{k f_s(v)-1}{k-1}\right].
\]

Next, we rewrite MSE as a quadratic in $p_y$ by substituting $\mu_v=C_0(v)+C_1(v)\,p_y$ into \eqref{eq:MSEv-quadratic-in-mu}:
\[
\mathrm{MSE}_v(p_y)
= \alpha\,[C_0(v)+C_1(v) p_y]^2 + \beta_v\,[C_0(v)+C_1(v) p_y] + \gamma_v,
\]
so the coefficients (per $v$) are
\[
\text{coef}(p_y^2)=\alpha\,C_1(v)^2,\quad
\text{coef}(p_y)=2\alpha\,C_0(v)C_1(v)+\beta_v C_1(v),\quad
\]
\[
\text{const}=\alpha C_0(v)^2+\beta_v C_0(v)+\gamma_v.
\]

\paragraph{Sum over $v$ (global objective).}
Define
\[
A \;:=\; \alpha\sum_{v} C_1(v)^2,\qquad
B \;:=\; \sum_{v}\Bigl[2\alpha\,C_0(v)C_1(v)+\beta_v C_1(v)\Bigr].
\]
Then the total MSE across all categories is
\[
\mathrm{MSE}_{\text{tot}}(p_y) \;=\; A\,p_y^2 + B\,p_y + \text{const}.
\]
For $n>1$, $A\ge 0$ with equality iff $C_1(v)=0$ for all $v$ (i.e., $f_s(v)=1/k$).

\paragraph{Closed-form optimizer and feasibility.}
The unconstrained minimizer is
\begin{equation}\label{eq:py-star}
\begin{split}
    &p_y^* \;=\; -\frac{B}{2A}
\;=\;
-\frac{
\displaystyle \sum_{v} C_1(v)\Bigl[\,2\alpha\,C_0(v) + \beta_v\,\Bigr]
}{
2\alpha\,\displaystyle \sum_{v} C_1(v)^2
},\\
& \alpha=1-\tfrac1n,\quad
\beta_v=-2f_i(v)+\frac{1-2q}{n\Delta}.\\
\end{split}
\end{equation}

This completes the proof.

\end{proof}

\subsection{The Proof of Theorem~\ref{thm:Corr-RR_proof_d}}\label{appendix:thm:Corr-RR_proof_d}

\begin{proof}
We establish the per-user guarantee for each phase and then apply parallel composition across disjoint user sets.

\paragraph{Phase~I users.}
Fix a user $i\in A$. The mechanism perturbs each coordinate
$x_{i,j}$ of $\mathbf{x}_i=(x_{i,1},\dots,x_{i,d})$ independently using GRR with budget $\epsilon/d$.
By \emph{Sequential Composition} (Theorem~\ref{compositiontheorem}), the per-user channel
$\mathcal{M}_A$ is $\epsilon$-LDP.

\paragraph{Phase~II users.}
Fix a user $i\in B$. The mechanism samples a pivot index $j\in[d]$ uniformly at random,
independently of $\mathbf{x}_i$. Conditioned on this fixed choice of $j$, the release is obtained by
\[
\mathbf{x}_i \xrightarrow{\ \mathcal{M}_{\epsilon}\ \text{on pivot } j\ } y_{i,j}
\xrightarrow{\ \text{post-process } g(\cdot;\,p_y)\ } \mathbf{y}_i=(y_{i,1},\dots,y_{i,d}),
\]
i.e., $x_{i,j}$ is privatized with budget $\epsilon$ to produce $y_{i,j}\sim\mathcal{M}_{\epsilon}(x_{i,j})$,
and the non-pivot coordinates are generated as
\[
y_{i,k}=g_k\!\bigl(y_{i,j};\,p_y^{(j,k)}\bigr),\qquad k\neq j,
\]
where $g_k$ is a randomized mapping that depends only on the already privatized pivot $y_{i,j}$ and the
\emph{public} reuse probability $p_y^{(j,k)}\in[0,1]$ derived from Phase~I statistics (hence independent of $\mathbf{x}_i$).
By the \emph{Post-Processing Theorem} (Theorem~\ref{Post-processing}), for any fixed pivot $j$ this conditional channel
$\mathcal{M}_B^{(j)}$ is $\epsilon$-LDP. Since the pivot index is chosen independently of the user’s data, the overall Phase~II mechanism is a uniform random
mixture over the $d$ conditional channels. A mixture of $\epsilon$-LDP mechanisms with data-independent mixing
preserves the same privacy guarantee, so $\mathcal{M}_B$ is also $\epsilon$-LDP.

\paragraph{Overall.}
Phases~I and II act on disjoint user subsets $A$ and $B$. By \emph{Parallel Composition}
(Theorem~\ref{Paralleltheorem}), the combined mechanism $\mathcal{M}=(\mathcal{M}_A,\mathcal{M}_B)$ satisfies $\epsilon$-LDP.

\end{proof}

\subsection{Additional Experimental Results}\label{Appendix2}

\subsubsection{Real-World Dataset Characteristics} Figure \ref{fig:relative_freq_realdata} presents the attribute-wise relative-frequency distributions for the Clave, Mushroom, and Adult datasets. The Clave attributes are nearly balanced, with both $X_1$ and $X_2$ showing minimal skewness, indicating that users are distributed fairly uniformly across the two binary categories. In contrast, the Mushroom dataset exhibits pronounced skewness: $X_1$ is highly right-skewed, with most records concentrated in the first few categories, while $X_2$ is slightly left-skewed. This imbalance explains why mechanisms that rely on prior distributions, such as RS+RFD, perform strongly on Mushroom. The Adult dataset lies between these two extremes—its first two attributes ($X_1$, $X_2$) are nearly uniform with negligible skewness, whereas $X_3$ shows moderate right skew.

\begin{figure*}[h!]

  \centering

  \begin{minipage}{.32\textwidth}
    
      \includegraphics[width=\linewidth]{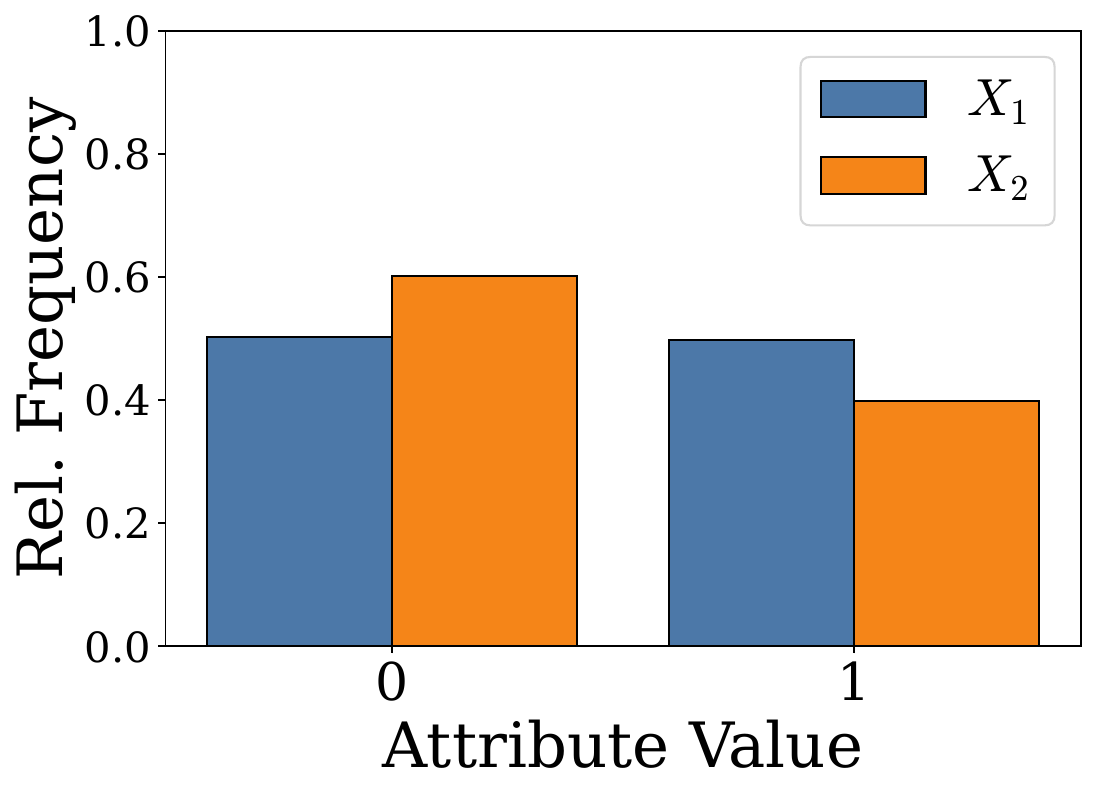}
    \subcaption{Clave}
    \label{fig:clave_freq_dis}
  \end{minipage}\hfill
  \begin{minipage}{.32\textwidth}
          \includegraphics[width=\linewidth]{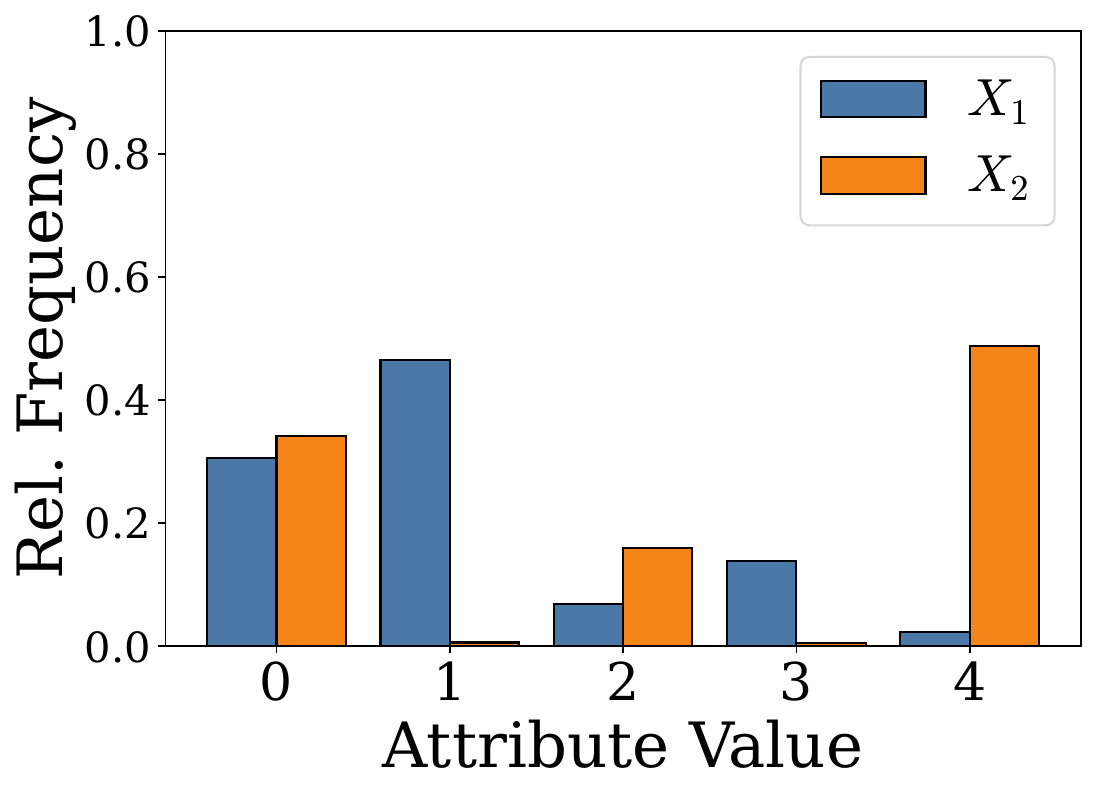}
   
    \subcaption{Mushroom}
    \label{fig:mush_freq_dis}
  \end{minipage}\hfill
  \begin{minipage}{.32\textwidth}
\includegraphics[width=\linewidth]{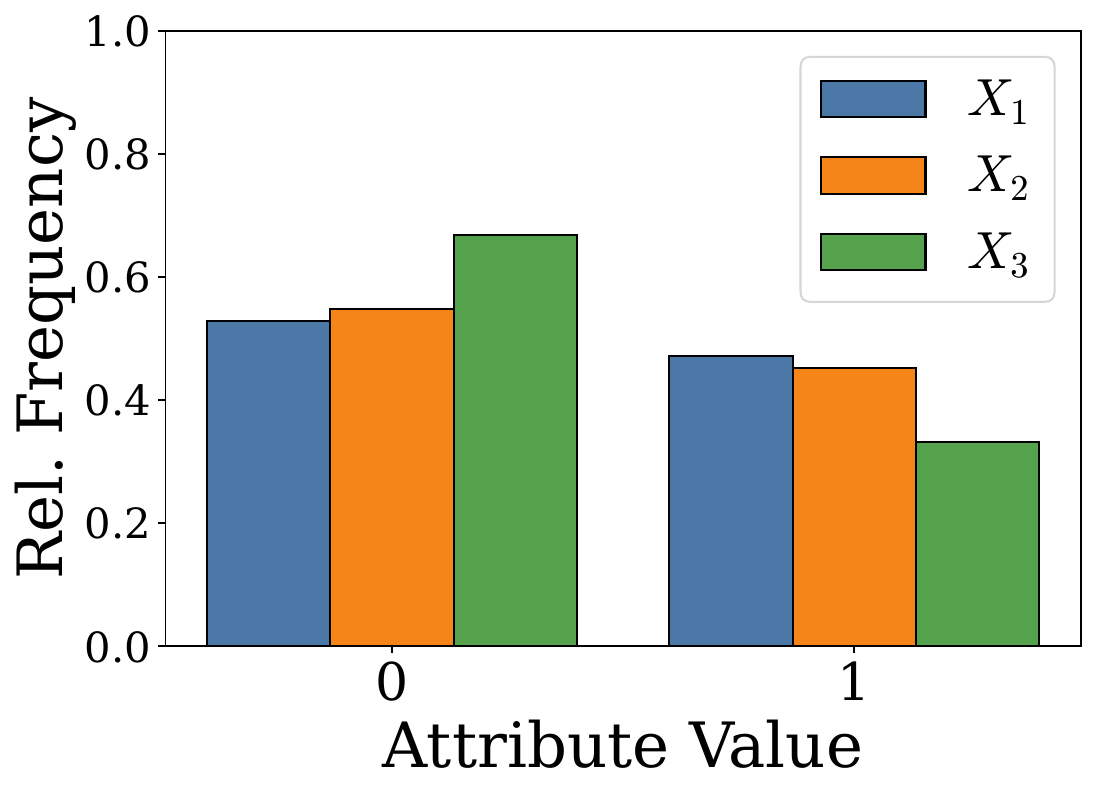}    
    \subcaption{Adult}
    \label{fig:Adult_freq_dis}
  \end{minipage}\hfill

  \caption{Attribute-wise relative-frequency distributions for (a) Clave, (b) Mushroom, and (c) Adult datasets.
Mushroom exhibits highly skewed marginals dominated by one or two values, whereas Clave and Adult are more balanced.}
\Description{Supplementary histogram plots illustrating data-skewness for Clave, Mushroom, and Adult dataset.}

  \label{fig:relative_freq_realdata}
\end{figure*}

\subsubsection{Impact of Number of Attributes under weak correlation~\label{appendix:result:attNo}.} Figure \ref{fig:mse_vs_attr_bin_cor01_syna} illustrate MSE as the number of attributes increases on the SynA dataset under weak correlation ($\rho=0.1$). Each subplot corresponds to a different privacy budget $\epsilon \in \{0.1,0.3,0.5\}$. As expected, error grows with dimensionality for all mechanisms, and Corr-RR maintains the lowest MSE across all settings even though dependencies are minimal. Figure \ref{fig:mse_vs_attr_bin_cor01_synb} presents the same analysis for SynB. Similar trends are observed. 

\begin{figure*}[h!]
  \centering

  \begin{minipage}{.32\textwidth}
  \includegraphics[width=\linewidth]{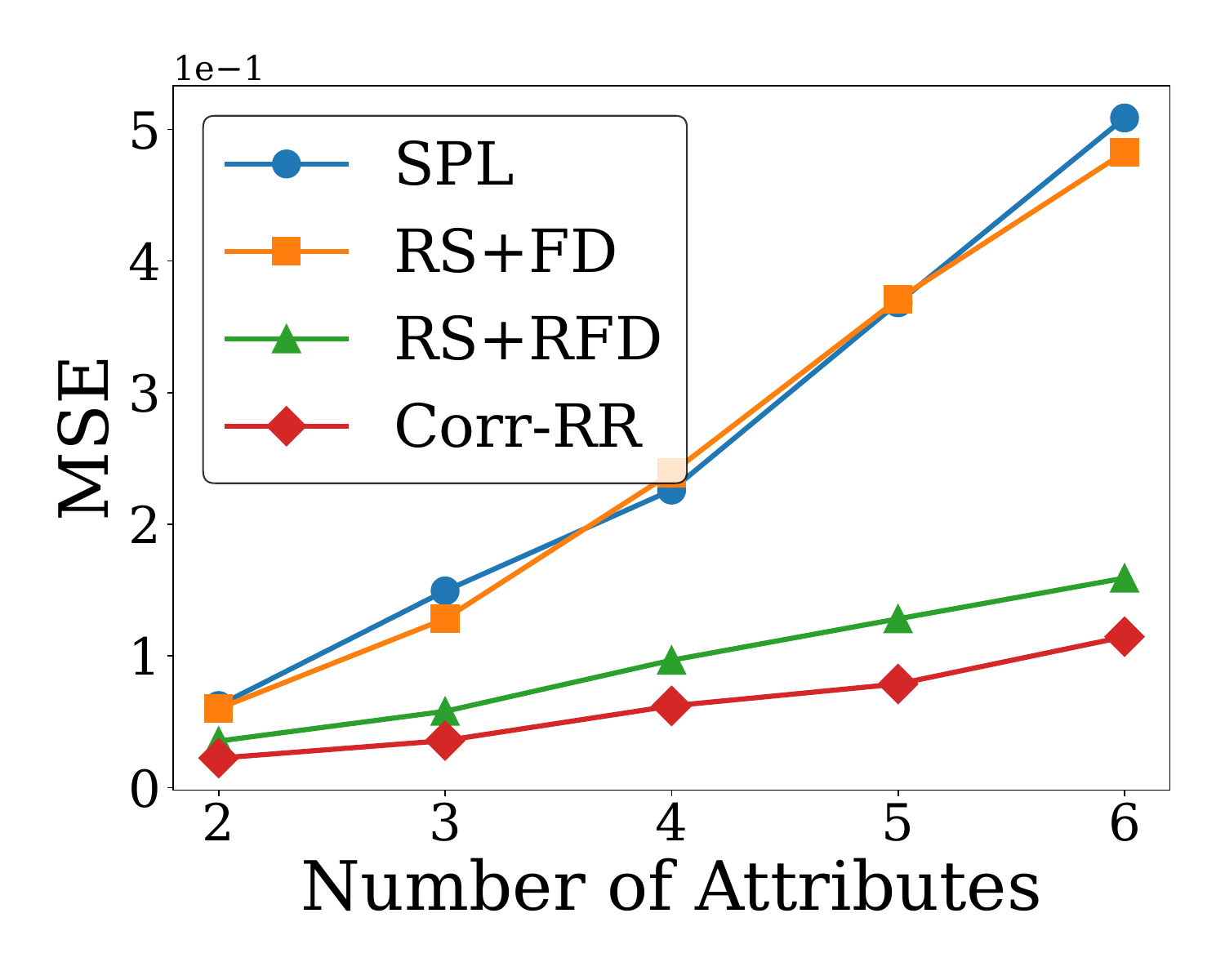}    
    \subcaption{$\epsilon = 0.1$}
    \label{fig:mse_attr_bin_cor01_eps1}
  \end{minipage}\hfill
  \begin{minipage}{.32\textwidth}
 \includegraphics[width=\linewidth]{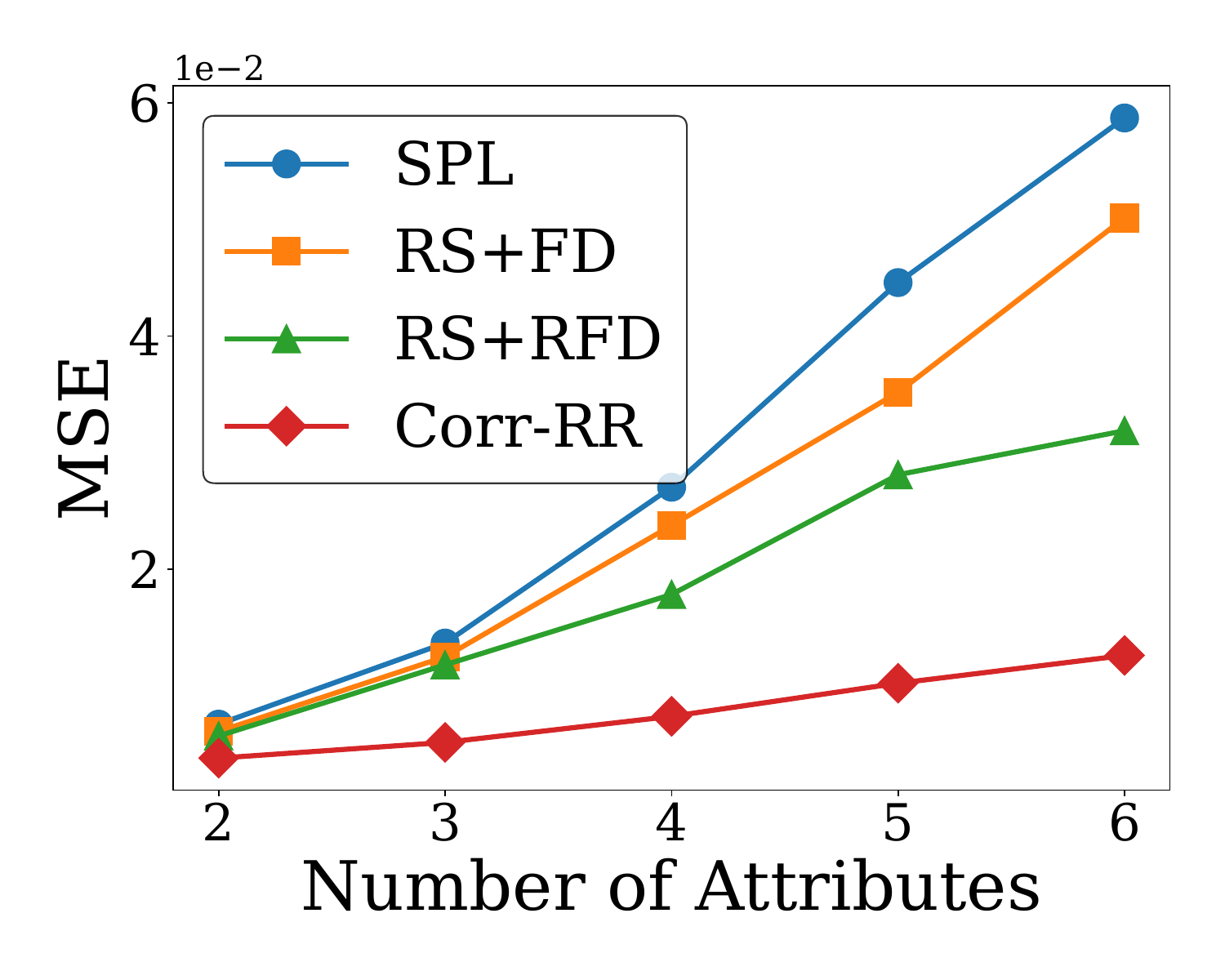}   
    \subcaption{$\epsilon = 0.3$}
    \label{fig:mse_attr_bin_cor01_eps3}
  \end{minipage}\hfill
  \begin{minipage}{.32\textwidth}
  \includegraphics[width=\linewidth]{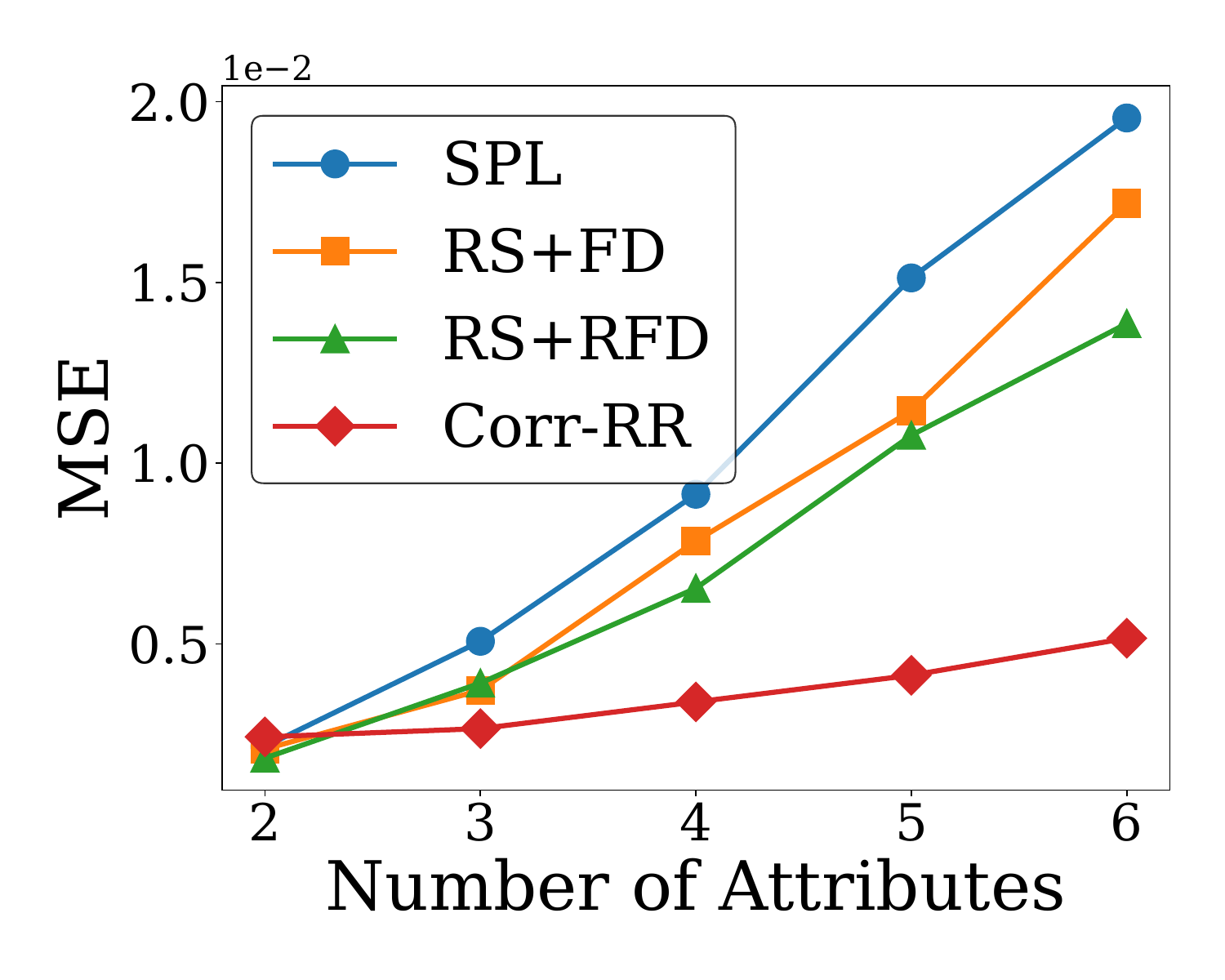}      
    \subcaption{$\epsilon = 0.5$}
    \label{fig:mse_attr_bin_cor01_eps5}
  \end{minipage}\hfill

 \caption{
MSE vs.\ number of attributes on SynA with $\rho=0.1$ and $|\mathcal{D}|=4$. 
Subplots correspond to different privacy budgets.}
\Description{Three line plots showing mean squared error as a function of the number of attributes for different privacy budgets, comparing SPL, RS+FD, RS+RFD, and Corr-RR.}

  \label{fig:mse_vs_attr_bin_cor01_syna}
\end{figure*}

\begin{figure*}[h!]

  \centering

  \begin{minipage}{.32\textwidth}
  \includegraphics[width=\linewidth]{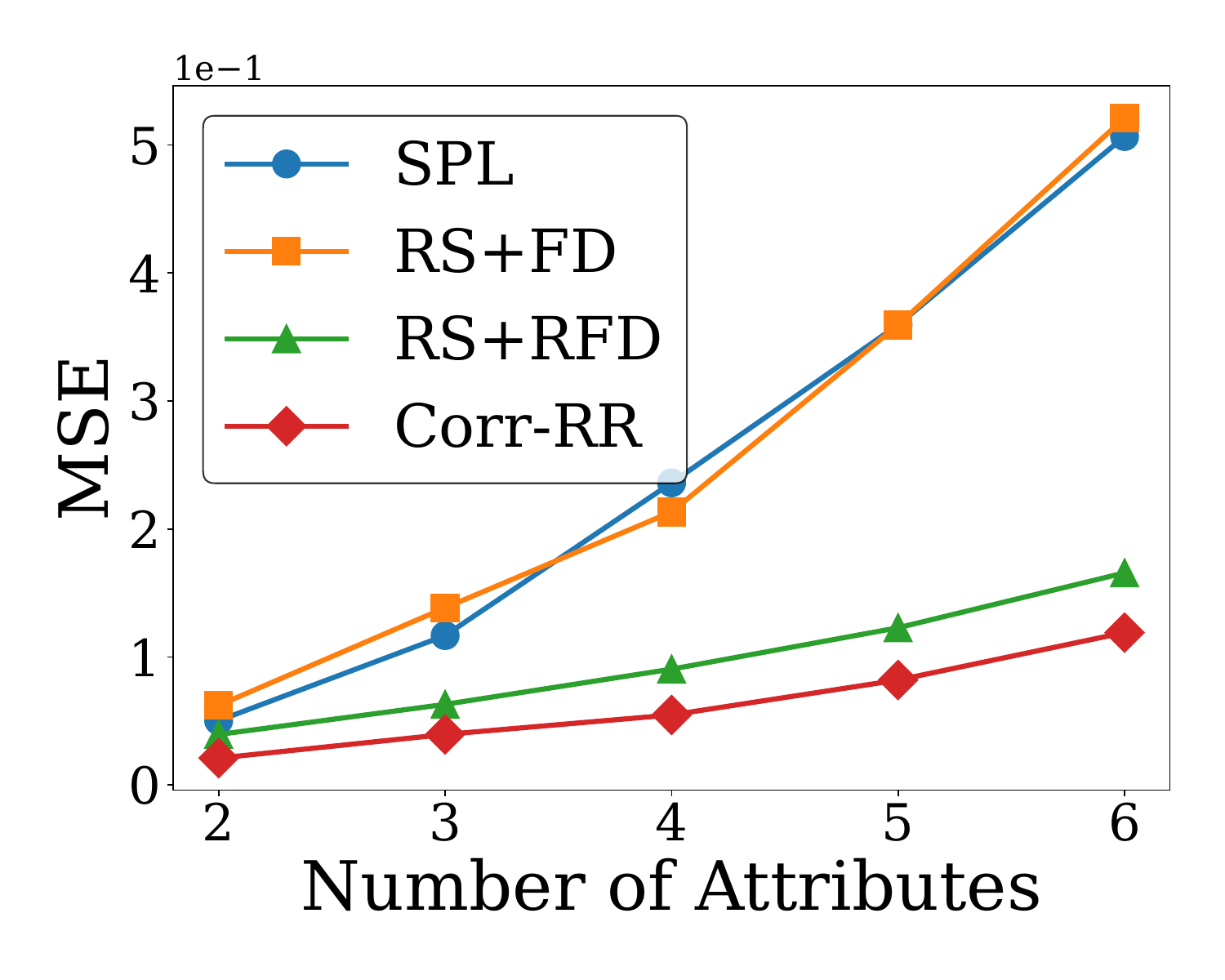}    
    \subcaption{$\epsilon = 0.1$}
    \label{fig:mse_attr_bin_cor01_eps1_synb}
  \end{minipage}\hfill
  \begin{minipage}{.32\textwidth}
  \includegraphics[width=\linewidth]{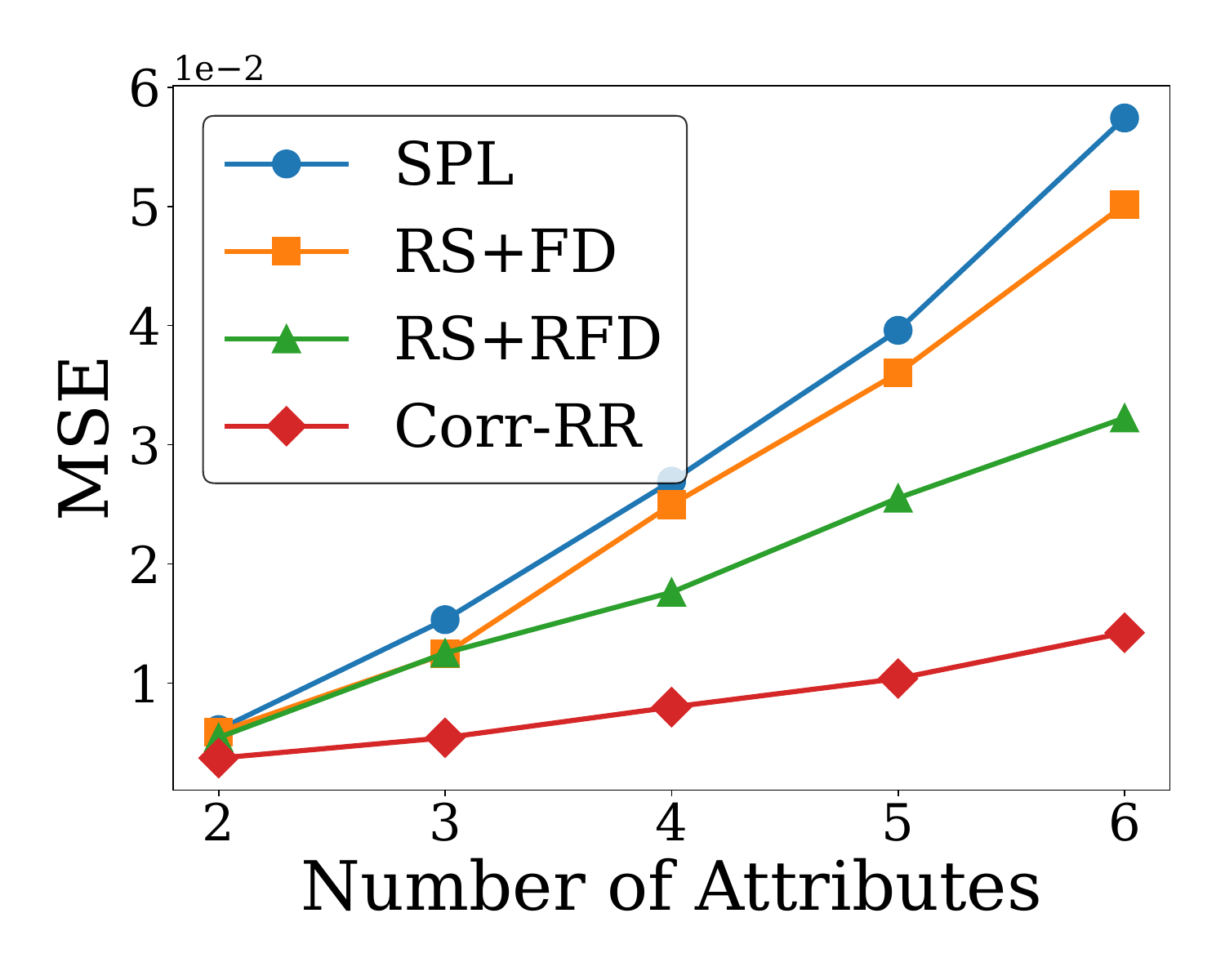}   
    \subcaption{$\epsilon = 0.3$}
    \label{fig:mse_attr_bin_cor01_eps3_synb}
  \end{minipage}\hfill
  \begin{minipage}{.32\textwidth}
\includegraphics[width=\linewidth]{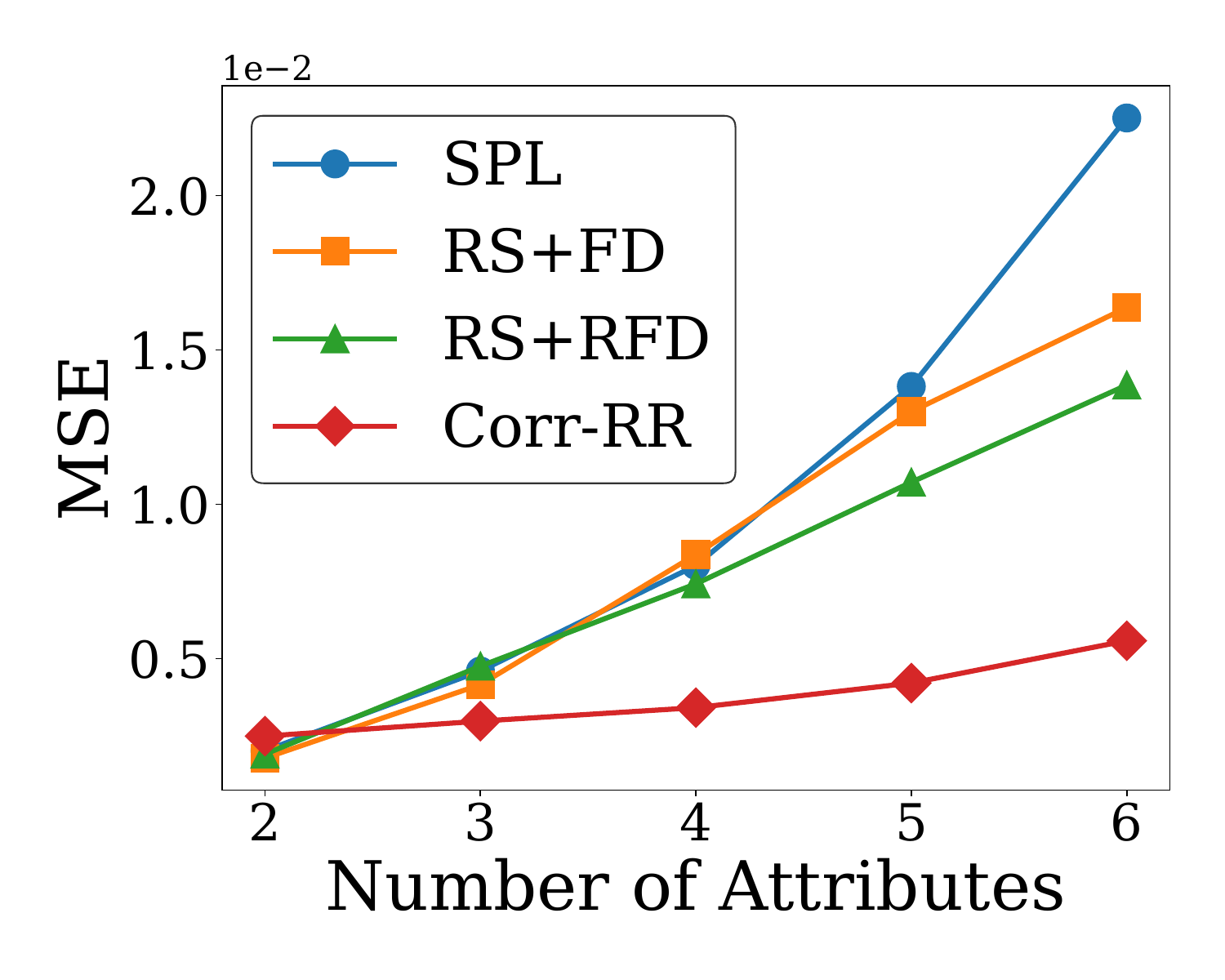}    
    \subcaption{$\epsilon = 0.5$}
    \label{fig:mse_attr_bin_cor01_eps5_synb}
  \end{minipage}\hfill

 \caption{
MSE vs.\ number of attributes on SynB with $\rho=0.1$ and $|\mathcal{D}|=4$. 
Subplots correspond to different privacy budgets.}
\Description{Three line plots showing mean squared error as a function of the number of attributes for different privacy budgets, comparing SPL, RS+FD, RS+RFD, and Corr-RR.}
  \label{fig:mse_vs_attr_bin_cor01_synb}
\end{figure*}

\subsubsection{Impact of the Size of Phase I Users} Tables \ref{tab:phase1_n_200_star}–\ref{tab:phase1_n_20000_prog} report the MSE of RS+RFD and Corr-RR across varying Phase I sizes for both SynA and SynB under low correlation ($\rho=0.1$) and different dataset scales ($n=200$, $2{,}000$, and $20{,}000$). At very small scales ($n=200$), RS+RFD typically achieves the lowest error at tighter privacy budgets due to the instability of Phase I marginal estimates, while Corr-RR becomes competitive only at larger $\epsilon$. As the user population increases to $n=2{,}000$, the trends become smoother, and Corr-RR begins to outperform RS+RFD at moderate and high budgets, though RS+RFD may still hold a slight advantage at $\epsilon=0.1$. At large scale ($n=20{,}000$), Corr-RR consistently attains the lowest MSE across nearly all Phase I sizes and privacy budgets for both SynA and SynB, reflecting the improved reliability of correlation estimation when ample Phase I samples are available. Across all settings, small Phase I fractions often yield the best results, while larger Phase I splits can degrade performance by reducing the number of users contributing full-budget reports in Phase II.

\begin{table*}[t]
\centering
\caption{MSE of RS+RFD and Corr-RR across varying size of Phase~I users on SynA ($n=200$, $\rho=0.1$, $|\mathcal{D}|=4$, $d=2$). The lowest MSE value between both methods across each $\epsilon$ is shown in bold.}
\centering
\small
\begin{tabular}{llcccccccccc}
\toprule
 & & \multicolumn{10}{c}{\textbf{Size of Phase I Users ($n_1$)}} \\
\cmidrule(lr){3-12}
\textbf{Method} & \textbf{$\epsilon$}
& $n_1$=10 & $n_1$=20 & $n_1$=30 & $n_1$=40 & $n_1$=50 
& $n_1$=60 & $n_1$=70 & $n_1$=80 & $n_1$=90 & $n_1$=100 \\
\midrule

\multirow{3}{*}{RS+RFD}

& 0.1
& \textbf{3.853e-01} & 6.937e-01 & 9.096e-01 & 1.323e+00
& 1.643e+00 & 1.648e+00 & 1.989e+00 & 2.522e+00 & 2.523e+00 & 3.012e+00 \\

& 0.3
& \textbf{9.978e-02} & 1.246e-01 & 1.602e-01
& 1.690e-01 & 2.133e-01 & 2.256e-01 & 2.658e-01 & 2.926e-01
& 3.553e-01 & 3.748e-01 \\

& 0.5
& 6.694e-02 & 7.327e-02 & 8.222e-02
& 8.464e-02 & 9.712e-02 & 9.711e-02 & 1.115e-01 & 1.121e-01 
& 1.286e-01 & 1.215e-01 \\
\midrule

\multirow{3}{*}{Corr-RR}

& 0.1
& 1.745e+00 & 1.657e+00 & 1.981e+00
& 2.211e+00 & 2.725e+00 & 2.706e+00 & 2.738e+00 & 3.094e+00 
& 3.492e+00 & 3.356e+00 \\

& 0.3
& 1.701e-01 & 2.250e-01 & 1.987e-01 
& 2.254e-01 & 2.602e-01 & 2.382e-01 & 3.433e-01 & 3.789e-01 
& 3.532e-01 & 4.209e-01 \\

& 0.5
& \textbf{6.023e-02} & 6.807e-02 & 7.651e-02 
& 7.149e-02 & 8.773e-02 & 9.730e-02 & 1.056e-01 & 1.167e-01 
& 1.140e-01 & 1.342e-01 \\
\bottomrule
\end{tabular}
\label{tab:phase1_n_200_star}
\end{table*}

\begin{table*}[t]

\centering

\caption{MSE of RS+RFD and Corr-RR across varying size of Phase~I users on SynA ($n=2{,}000$, $\rho=0.1$, $|D|=4$, $d=2$). The lowest MSE value between both methods across each $\epsilon$ is shown in bold.}
\centering
\small
\begin{tabular}{llcccccccccc}
\toprule
 & & \multicolumn{10}{c}{\textbf{Size of Phase I Users ($n_1$)}} \\
\cmidrule(lr){3-12}
\textbf{Method} & \textbf{$\epsilon$}
& $n_1$=100 & $n_1$=200 & $n_1$=300 & $n_1$=400 & $n_1$=500 
& $n_1$=600 & $n_1$=700 & $n_1$=800 & $n_1$=900 & $n_1$=1000  \\
\midrule

\multirow{3}{*}{RS+RFD}

& 0.1
& \textbf{9.175e-02} & 1.165e-01 & 1.387e-01
& 1.577e-01 & 1.998e-01 & 2.136e-01 & 2.553e-01 & 2.790e-01
& 2.994e-01 & 3.204e-01 \\

& 0.3
& 3.213e-02 & 3.875e-02 & 3.535e-02
& 3.792e-02 & 3.605e-02 & 4.130e-02 & 3.972e-02 & 4.417e-02
& 4.451e-02 & 4.638e-02 \\

& 0.5
& 1.697e-02 & 1.625e-02 & 1.585e-02
& 1.594e-02 & 1.811e-02 & 1.773e-02 & 1.849e-02 & 1.523e-02
& 1.856e-02 & 1.943e-02 \\
\midrule

\multirow{3}{*}{Corr-RR}

& 0.1
& 1.834e-01 & 1.788e-01 & 2.052e-01
& 2.302e-01 & 2.697e-01 & 3.111e-01 & 2.965e-01 & 3.446e-01
& 3.491e-01 & 3.618e-01 \\

& 0.3
& \textbf{1.885e-02} & 2.004e-02 & 2.162e-02
& 2.524e-02 & 2.487e-02 & 3.026e-02 & 3.214e-02 & 3.398e-02
& 3.918e-02 & 3.734e-02 \\

& 0.5
& \textbf{5.800e-03} & 6.353e-03 & 7.282e-03
& 8.081e-03 & 9.914e-03 & 1.013e-02 & 1.040e-02 & 1.143e-02
& 1.188e-02 & 1.356e-02 \\
\bottomrule
\end{tabular}
\label{tab:phase1_n_2000_star}

\end{table*}

\begin{table*}[t]

\centering

\caption{MSE of RS+RFD and Corr-RR across varying size of Phase~I users on SynA ($n=20{,}000$, $\rho=0.1$, $|D|=4$, $d=2$). The lowest MSE value between both methods across each $\epsilon$ is shown in bold.}
\centering
\small
\begin{tabular}{llcccccccccc}
\toprule
 & & \multicolumn{10}{c}{\textbf{Size of Phase I Users ($n_1$)}} \\
\cmidrule(lr){3-12}
\textbf{Method} & \textbf{$\epsilon$}
& $n_1$=1000 & $n_1$=2000 & $n_1$=3000 & $n_1$=4000 & $n_1$=5000 
& $n_1$=6000 & $n_1$=7000 & $n_1$=8000 & $n_1$=9000 & $n_1$=10000  \\
\midrule

\multirow{3}{*}{RS+RFD}

& 0.1
& 3.683e-02 & 3.358e-02 & 3.667e-02
& 3.522e-02 & 3.618e-02 & 3.801e-02 & 4.348e-02 & 4.244e-02
& 4.282e-02 & 4.343e-02 \\

& 0.3
& 5.586e-03 & 5.309e-03 & 5.508e-03
& 5.634e-03 & 6.200e-03 & 5.301e-03 & 6.595e-03 & 5.789e-03
& 6.297e-03 & 6.199e-03 \\

& 0.5
& 1.849e-03 & 1.762e-03 & 1.950e-03
& 1.851e-03 & 2.147e-03 & 2.245e-03 & 1.910e-03 & 1.993e-03
& 1.965e-03 & 2.154e-03 \\
\midrule

\multirow{3}{*}{Corr-RR}

& 0.1
& \textbf{1.464e-02} & 1.930e-02 & 2.082e-02
& 2.390e-02 & 2.388e-02 & 2.906e-02 & 2.911e-02 & 3.098e-02
& 3.564e-02 & 3.586e-02 \\

& 0.3
& \textbf{1.901e-03} & 2.020e-03 & 2.171e-03
& 2.366e-03 & 2.434e-03 & 2.888e-03 & 2.952e-03 & 3.559e-03
& 3.574e-03 & 3.657e-03 \\

& 0.5
& \textbf{5.729e-04} & 5.964e-04 & 7.411e-04
& 8.165e-04 & 8.379e-04 & 1.049e-03 & 1.032e-03 & 1.132e-03
& 1.142e-03 & 1.343e-03 \\
\bottomrule
\end{tabular}
\label{tab:phase1_n_20000_star}

\end{table*}

\begin{table*}[t]

\centering

\caption{MSE of RS+RFD and Corr-RR across varying size of Phase~I users on SynB ($n=200$, $\rho=0.1$, $|D|=4$, $d=2$). The lowest MSE value between both methods across each $\epsilon$ is shown in bold.}
\centering
\small
\begin{tabular}{llcccccccccc}
\toprule
 & & \multicolumn{10}{c}{\textbf{Size of Phase I Users ($n_1$)}} \\
\cmidrule(lr){3-12}
\textbf{Method} & \textbf{$\epsilon$}
& $n_1$=10 & $n_1$=20 & $n_1$=30 & $n_1$=40 & $n_1$=50 
& $n_1$=60 & $n_1$=70 & $n_1$=80 & $n_1$=90 & $n_1$=100 \\
\midrule

\multirow{3}{*}{RS+RFD}

& 0.1
& \textbf{4.448e-01} & 6.688e-01 & 9.081e-01 & 1.197e+00 & 1.640e+00
& 1.988e+00 & 2.144e+00 & 2.268e+00 & 2.691e+00 & 2.823e+00 \\

& 0.3
& \textbf{1.029e-01} & 1.253e-01 & 1.647e-01 & 1.838e-01 & 2.051e-01
& 2.249e-01 & 2.726e-01 & 2.818e-01 & 3.034e-01 & 3.318e-01 \\

& 0.5
& \textbf{5.553e-02} & 7.263e-02 & 8.177e-02 & 8.660e-02 & 8.945e-02
& 8.956e-02 & 1.136e-01 & 1.171e-01 & 1.140e-01 & 1.178e-01 \\
\midrule

\multirow{3}{*}{Corr-RR}

& 0.1
& 1.618e+00 & 1.970e+00 & 2.006e+00 & 2.471e+00 & 2.597e+00
& 2.590e+00 & 3.012e+00 & 3.313e+00 & 3.440e+00 & 3.727e+00 \\

& 0.3
& 1.829e-01 & 1.960e-01 & 2.061e-01 & 2.497e-01 & 2.536e-01
& 2.759e-01 & 3.038e-01 & 3.314e-01 & 3.950e-01 & 3.871e-01 \\

& 0.5
& 5.665e-02 & 6.606e-02 & 7.539e-02 & 7.693e-02 & 9.076e-02
& 9.548e-02 & 9.971e-02 & 1.070e-01 & 1.143e-01 & 1.150e-01 \\
\bottomrule
\end{tabular}
\label{tab:phase1_n_200_prog}

\end{table*}

\begin{table*}[t]

\centering

\caption{MSE of RS+RFD and Corr-RR across varying size of Phase~I users on SynB ($n=2{,}000$, $\rho=0.1$, $|D|=4$, $d=2$). The lowest value across both methods for each $\epsilon$ is shown in \textbf{bold}.}
\centering
\small
\begin{tabular}{llcccccccccc}
\toprule
 & & \multicolumn{10}{c}{\textbf{Size of Phase I Users ($n_1$)}} \\
\cmidrule(lr){3-12}
\textbf{Method} & \textbf{$\epsilon$}
& $n_1$=100 & $n_1$=200 & $n_1$=300 & $n_1$=400 & $n_1$=500 
& $n_1$=600 & $n_1$=700 & $n_1$=800 & $n_1$=900 & $n_1$=1000 \\
\midrule

\multirow{3}{*}{RS+RFD}

& 0.1
& \textbf{9.785e-02} & 1.176e-01 & 1.424e-01 & 1.743e-01 & 2.088e-01
& 2.137e-01 & 2.222e-01 & 2.816e-01 & 2.946e-01 & 3.436e-01 \\

& 0.3
& 3.645e-02 & 3.734e-02 & 3.576e-02 & 3.802e-02 & 3.869e-02
& 3.907e-02 & 3.647e-02 & 4.396e-02 & 4.173e-02 & 4.618e-02 \\

& 0.5
& 1.665e-02 & 1.636e-02 & 1.662e-02 & 1.690e-02 & 1.688e-02
& 1.716e-02 & 1.742e-02 & 1.667e-02 & 1.767e-02 & 1.909e-02 \\
\midrule

\multirow{3}{*}{Corr-RR}

& 0.1
& 1.621e-01 & 1.795e-01 & 2.087e-01 & 2.273e-01 & 2.280e-01
& 2.746e-01 & 2.817e-01 & 3.124e-01 & 3.913e-01 & 3.610e-01 \\

& 0.3
& \textbf{1.542e-02} & 2.062e-02 & 2.154e-02 & 2.491e-02 & 2.608e-02
& 2.857e-02 & 3.116e-02 & 3.129e-02 & 3.710e-02 & 4.040e-02 \\

& 0.5
& \textbf{5.545e-03} & 6.197e-03 & 6.928e-03 & 7.553e-03 & 9.000e-03
& 9.567e-03 & 1.069e-02 & 1.183e-02 & 1.168e-02 & 1.228e-02 \\
\bottomrule
\end{tabular}
\label{tab:phase1_n_2000_prog}

\end{table*}

\begin{table*}[t]

\centering

\caption{MSE of RS+RFD and Corr-RR across varying size of Phase~I users on SynB ($n=20{,}000$, $\rho=0.1$, $|D|=4$, $d=2$). The lowest MSE value between both methods across each $\epsilon$ is shown in bold.}
\centering
\small
\begin{tabular}{llcccccccccc}
\toprule
 & & \multicolumn{10}{c}{\textbf{Size of Phase I Users ($n_1$)}} \\
\cmidrule(lr){3-12}
\textbf{Method} & \textbf{$\epsilon$}
& $n_1$=1000 & $n_1$=2000 & $n_1$=3000 & $n_1$=4000 & $n_1$=5000 
& $n_1$=6000 & $n_1$=7000 & $n_1$=8000 & $n_1$=9000 & $n_1$=10000  \\
\midrule

\multirow{3}{*}{RS+RFD}

& 0.1
& 3.325e-02 & 3.585e-02 & 4.054e-02
& 3.675e-02 & 3.754e-02 & 4.432e-02 & 3.827e-02 & 3.823e-02
& 4.743e-02 & 4.811e-02 \\

& 0.3
& 5.226e-03 & 5.441e-03 & 6.169e-03
& 5.425e-03 & 5.819e-03 & 5.499e-03 & 5.957e-03 & 5.604e-03
& 6.152e-03 & 6.315e-03 \\

& 0.5
& 1.867e-03 & 1.704e-03 & 1.997e-03
& 2.036e-03 & 1.801e-03 & 1.898e-03 & 1.898e-03 & 2.168e-03
& 1.783e-03 & 2.108e-03 \\
\midrule

\multirow{3}{*}{Corr-RR}

& 0.1
& \textbf{1.530e-02} & 2.021e-02 & 2.071e-02
& 2.154e-02 & 2.452e-02 & 2.891e-02
& 3.023e-02 & 3.100e-02 & 3.448e-02 & 3.706e-02 \\

& 0.3
& \textbf{2.060e-03} & 1.885e-03 & 2.204e-03
& 2.381e-03 & 2.489e-03 & 2.659e-03
& 3.139e-03 & 3.292e-03 & 3.451e-03 & 3.690e-03 \\

& 0.5
& \textbf{5.888e-04} & 6.173e-04 & 7.240e-04
& 8.354e-04 & 8.474e-04 & 1.010e-03
& 1.075e-03 & 1.057e-03 & 1.131e-03 & 1.178e-03 \\
\bottomrule
\end{tabular}
\label{tab:phase1_n_20000_prog}

\end{table*}

\end{document}